\newtheorem{theorem}{Theorem}[section]
\newtheorem{corollary}[theorem]{Corollary}
\newtheorem{proposition}[theorem]{Proposition}
\newtheorem{example}{Example}[section]
\newenvironment{proof}{\noindent{\bf Proof:}}{\hfill\fbox{}\vspace*{1mm}}
\titleformat{\section}[hang]{\Large\bfseries\filcenter}{\thesection.}{1em}{}
\providecommand{\DIFdeltex}[1]{{\protect\color{red}\sout{#1}}}                      %DIF PREAMBLE
\newif\ifdiff
  \newcommand{\del}[1]{\DIFdeltex{#1}}
  \newcommand{\del}[1]{}
\begin{document}
\title{\bf Trading Strategy with Stochastic Volatility in a Limit Order Book
Market}
\date{}
\author{
Wai-Ki Ching
\thanks{Corresponding author. 
Advanced Modeling and Applied Computing Laboratory,
Department of Mathematics, 
The University of Hong Kong, Pokfulam Road, Hong Kong. 
E-mail: wching@hku.hk. }
\and Jia-Wen Gu
\thanks{Department of Mathematical Science,
University of Copenhagen, 
Denmark. 
E-mail: jwgu.hku@gmail.com.}
\and Tak-Kuen Siu
\thanks{ Department of Applied Finance and Actuarial Studies,
Faculty of Business and Economics,
Macquarie University, Sydney, NSW 2109, Australia.
Email: ktksiu2005@gmail.com}
\and Qing-Qing Yang
\thanks{ Advanced Modeling and Applied Computing Laboratory,
Department of Mathematics, 
The University of Hong Kong, Pokfulam Road, Hong Kong. 
E-mail: kerryyang920910@gmail.com.}
}
\maketitle
\begin{spacing}{1.5}
\begin{abstract}
In this paper, we employ the {\it Heston stochastic volatility model}
\cite{Heston} to describe the stock's volatility and apply the model to derive and analyze the optimal trading strategies for dealers in a security market.  
We also extend our study to option market making for options written on stocks in the presence of  stochastic volatility. 
Mathematically, the problem is formulated as a stochastic optimal control problem and the controlled state process is the dealer's mark-to-market wealth. 
Dealers in the security market can optimally determine their ask and bid quotes on the underlying stocks or options continuously over time.
Their objective is to maximize an expected profit
from transactions with a penalty proportional to the variance of cumulative inventory cost.

\noindent
{\bf Keywords:} 
Bid-ask Price,
Dynamic Programming (DP),
Hamilton-Jacobi-Bellman (HJB) Equation, 
Limit Order Book (LOB), 
Market Impact,
Option,
Stochastic Volatility (SV) Model.
\end{abstract}

\section{Introduction}

The optimal trading strategy of dealers in a Limit Order Book (LOB) market
has been widely studied in early 1990s, see \cite{Gould} 
for a detailed survey.
Ho and Stoll (1981) \cite{Ho1} provided one of the early studies on the behavior of a monopolistic dealer
in a  single stock situation.
Avellaneda and Stoikov (2008) \cite{Avellaneda08}
proposed a quantitative model for LOB by making use of its statistical
properties together with the utility framework of Ho and Stoll.
Gu{\'e}ant et al. (2012) \cite{GLF} provided simple and easy-to-compute
expressions for optimal quotes when the trader is willing to liquidate a portfolio.
Regarding the option market making, recent research includes, for example, \cite{GP} and \cite{OM}.
Due to the tractable, theoretical and empirical appeal, it seems that  most of the studies, 
see for example, \cite{OM,Bollen97}, may perhaps be based on the assumption that the volatility of the
underlying security is constant over time or is independent of the changes 
in the price level of the underlying security.
However, the empirical characteristics, e.g., 
leverage effect, time scale variance, volatility smile, mean-reverting and volatility clustering, cast doubts on the constancy of volatility in the context of market micro structure. There are some researches, see for instance \cite{SM} for a detailed survey, studying the modeling of volatility, for example, \cite{Bouchaud,Gould,Hendershott,Wyart,Zumbach}, among which three categories of non-constant volatility models are mainly discussed, which include 
\begin{enumerate}
\item Time-dependent deterministic volatility $\sigma(t)$,
\item Local volatility: volatility dependent on the stock price $\sigma(S_t)$,
\item Stochastic volatility: volatility driven by an additional random process $\sigma(w)$.
\end{enumerate}
In this paper we adopt Heston's mean-reverting  stochastic volatility model   corresponding to an arithmetic Brownian motion to set up our model 
$$
\left\{
\begin{array}{lll}
dS_t=\sqrt{\nu_t}dW_t\\
d\nu_t=\theta(\alpha-\nu_t)dt+\xi\sqrt{\nu_t}dB_t\\
\end{array}
\right.
$$
where $\it{W_t}$ and $\it{B_t}$ are correlated standard Brownian motions.  
Under this setting, we mainly study three different aspects of 
optimal trading in a Limit Order Book (LOB) market.
The quoting strategy for dealers in a LOB 
with stochastic volatility is first considered. 
We apply a combined approach of an asymptotic expansion and a linear approximation to reduce the resulting Hamilton-Jacobi-Bellman (HJB) equation to a series of Partial Differential Equations (P.D.E.s), which can be solved by using the Feynman-Kac formula.
Differences between the exact and the approximate value function as well as quotes are examined and discussed. 
Second, we extend the model to more general situations by taking the market impact into consideration.  
Three different types of market impact models are analyzed to shed light on the relationship between a trading strategy and the market impact. 
Third, we study an option market making strategy with stochastic volatility. Heston's model stands out from other stochastic volatility models here because there exist an analytical solution for European options that takes the correlation  between stock price and volatility into consideration \cite{Heston}. 
In this setting the market is incomplete due to the uncertainty from the source of volatility.
After taking  into account the market price of risk arising from  stochastic volatility, the optimal control problem is turned into a problem of 
solving an HJB equation, and the same method can then be employed to obtain an approximate solution.
In the case of option market making, different from the work in 
Stoikov and Saglam  \cite{OM}, an arbitrage-free price in the stochastic volatility model is used to set the option mid-price. 
Then the optimal bid and ask prices of the option are determined based on the option mid quote. 
We note that the market in the stochastic volatility model is incomplete
and there is more than one arbitrage-free price of the option, and hence, the option mid quotes. 
In other words, the option mid quote depends on the market price of risk, so do the optimal ask and bid quotes determined by the option mid quotes.

The paper is organized as follows. 
In Section 2, we introduce a fundamental model with stochastic volatility, under which we study optimal trading strategies in a setting of stock market making. 
The model is then generalized in Section 3 by incorporating the market impact factor, which can also be regarded as adverse selection. 
Three different types of models are analyzed here to explore the relationship between optimal trading strategies and market impact.  
In Section 4, we focus on the optimal trading strategies for options in a financial market with stochastic volatility. 
Both the case of market making in a stock and an option written on it simultaneously and the case of market making in the option with Delta-hedging are studied in this section.
Finally concluding remarks are given in Section 5.

\section{Stock Market Making in a Limit Order Book}

Technological innovation has completely changed  the role of a dealer,  especially with the growth of electronic exchanges such as Nasdaq's Inet.  
 Orders are placed in an automatic and electronic order-driven platform and wait in the {\it Limit Order Book (LOB)} to be executed. 

\subsection{Model Setup}

In this section, we consider a model to study the impact of stochastic volatility on dealer's optimal trading strategy.  
We assume that the stock mid-price evolves over time according to an arithmetic Brownian motion with stochastic volatility. 
More specifically, the Heston mean-reverting stochastic volatility model is adopted here:
\begin{equation}\label{Eq1}
\left\{
\begin{array}{lll}
dS_t = \sqrt{\nu_t} d W_t\footnotemark[1] \\
d\nu_t=\theta (\alpha-\nu_t)dt+\xi \sqrt{\nu_t}dB_t.
\end{array}
\right.
\end{equation}
\footnotetext[1]{When long-term strategies are considered, it is important to consider geometric rather than arithmetic Brownian motion. 
However, we focus on the trading strategies in a LOB, where trades are  high frequency and short-term. Therefore, the total fractional price changes are small, and the difference between arithmetic and geometric Brownian motions is negligible.}Here,  $d\nu_t$ modeling process originates from 
the CIR interest rate process \cite{CIR}.  
In the stochastic differential equation,  $\theta$, $\alpha$, and $\xi$ are positive constants.  
And $\{B_t\}$ and $\{W_t\}$ are two standard Brownian motions with constant correlation coefficient $\rho$ so that 
$$
W_t=\rho B_t+\sqrt{1-\rho^2} \tilde{B}_t
$$
where $\it{\{B_t\}}$ and $\{\tilde{B}_t\}$ are two independent Brownian motions. 
In \cite{Gould} and the references therein, it was pointed out that
a strong negative correlation between $S_t$ and
the realized mid-price volatility, i.e. leverage effect, has been observed in a wide range of markets, e.g., Paris Bourse \cite{Bouchaud}, FTSE 100 \cite{Zumbach}, 
and NYSE \cite{Wyart}, and our stochastic volatility model can well capture this characteristic. 

In Eq. (\ref{Eq1}), the drift term is zero, which means we have no information about the direction of future price movements. 
In fact, the drift is generally not significant over a short trading horizon.
Let $(S_t,\nu_t,W_t,B_t)=(s,\nu,0,0)$ be the initial state.
Some remarks of the stochastic volatility model are given below:
\begin{description}
\item[(i)] Although there does not exist a closed-form solution for 
Eq. (\ref{Eq1}), 
the model can ensure that the volatility is always nonnegative.  
Intuitively, when $\nu_t$ reaches zero, the coefficient of $dB_t$ vanishes
and the positive drift term  will drive
the volatility back to the positive territory.

\item[(ii)] Standard calculations give:
\begin{equation}\label{eq2}
E_t[\nu_u]=e^{-\theta(u-t)}\nu+\alpha\left(1-e^{-\theta(u-t)}\right).
\end{equation}
In particular, we have
$$
\lim_{u\to \infty}E_t[\nu_u]=\alpha
$$
i.e., $\alpha$ is the mean long-term volatility
and $\theta$ is the rate at which the volatility reverts toward its long-term mean. We also have
\begin{equation}\label{eq3}
\mbox{Var}[\nu_u|\mathcal{F}_t]=\frac{\xi^2}{\theta}\nu\left(e^{-\theta(u-t)}-e^{-2\theta(u-t)}\right)+\frac{\alpha\xi^2}{2\theta}\left(1-2e^{-\theta(u-t)}+e^{-2\theta(u-t)}\right).
\end{equation}
In particular, we have
$$
\lim_{u\to \infty}\mbox{Var}[\nu_u|\mathcal{F}_t]=\frac{\alpha\xi^2}{2\theta}.
$$
\end{description}
For more details about the standard calculations in these remarks, 
we refer readers to Appendix A1. 

\subsection{ State Feedback Control Problem}

In this section, we will use the above  setting to analyze the optimal trading strategies for dealers in the stock market.

\subsubsection{States and Controls}
Consider an active dealer in a LOB market, quoting a bid price $p_t^b$
and an ask price $p_t^a$ at no cost at time $t$, besides the prescribed minimum.
The dealer is committed to, respectively,
buy and sell one share of stock at these prices.
The wealth in cash, $X_t$, jumps whenever there is a buy
or sell order,
\begin{equation}\label{eq4}
d X_t =p_t^a d N_t^a - p_t^b d N_t^b
\end{equation}
where $N_t^b$ and  $N_t^a$ represent the amount of stocks bought and  sold by the dealer by the time $t$.
They are assumed to be independent {\it Poisson Processes}
with rates $\lambda_t^b$ and $\lambda_t^a$ respectively.
The number of stocks held is then given by $q_t = q_0+N_t^b-N_t^a$.
In addition, the arrival rates of buy and sell orders
that will reach the dealer depend on
the distances  from the current market price
$\delta_t^a=p_t^a-s$ and $\delta_t^b=s-p_t^b$, which  can be interpreted as the premiums for  the dealer to  sell and buy an unit of share in the LOB market.
Avellanda and Stoikov (2008) \cite{Avellaneda08}
aggregated all the statistical information of
LOBs and derived the trading intensity, 
which takes the following parametric form:
$$
\lambda ^b(\delta)=\lambda ^a(\delta)=A\exp(-k\delta).
$$

The mark-to-market wealth, $X_t+q_tS_t$, then follows  
$$
\begin{array}{lll}
d(X_t+q_tS_t)&=&\underbrace{\delta_t^a dN_t^a+ \delta_t^b dN_t^b}  \quad+\quad\underbrace{q_tdS_t}.\\
&& \quad(revenues)\;\quad(inventory\; value)\\
\end{array}
$$
Note that, $E[q_tdS_t]=0$  and
$$
E[d(X_t+q_tS_t)]=E[\delta_t^adN_t^a+\delta_t^bdN_t^b]
$$ 
i.e., the expected revenues from transactions equals to the expected excess returns with respected to the mark-to-market wealth 
(for more details,  please refer to Appendix A2).
Denote 
$$
dZ_t=\delta_t^adN_t^a+\delta_t^bdN_t^b \quad {\rm and} \quad 
dI_t=q_tdS_t
$$ 
 with $\{Z_t\}$, $\{I_t\}$ representing, respectively, the revenues from transactions and the inventory value. 

\subsubsection{The Objective}

Suppose the dealer in the LOB market is to liquidate $q$ shares of orders before time $T$ (a short time). 
If the $q$ orders are not completely executed at time $T$, then he has to sell the non-executed orders at the market price with certain clearing fee, $\beta/share$. 

We assume that  the  dealer is to maximize 
the expected mark-to-market wealth at time $T$,  with a penalty term arising from the  inventory value uncertainty.   At any time $t$,  we aim to find an optimal strategy by solving the following optimization problem:
$$
\displaystyle \max_{(\delta_u ^a,\delta_u ^b)_{u\in[t,T]}}\left\{E_t[Z_T-\beta q_T]-\frac{\gamma}{2} \mbox{Var}[I_T|\mathcal{F}_t]\right\},
$$
which is actually a stochastic state feedback control problem, and the martingale property of $\{I_t\}$ provides us a way to  further simplify this optimization problem. Thus, it is equivalent for us to find the optimal strategy for
$$
Z_t-\beta q_t+\displaystyle \max_{(\delta_u^a,\delta_u^b)_{u\in[t,T]}}E_t\left[\int_t^T(\delta_u^a+\beta)dN_u^a+(\delta_u^b-\beta)dN_u^b-\frac{\gamma}{2}\int_t^Tq_u^2\nu_udu\right].
$$
One key quantity in the model is 
\begin{equation}\label{E1}
V(q_t,\nu_t,t)=\displaystyle \max_{(\delta_u^a,\delta_u^b)_{u\in[t,T]}}E_t\left[\int_t^T(\delta_u^a+\beta)dN_u^a+(\delta_u^b-\beta)dN_u^b-\frac{\gamma}{2}\int_t^Tq_u^2\nu_udu\right].
\end{equation}
We denote it as our value function. Even though the actual value function is $Z_t-\beta q_t+V(q_t,\nu_t,t)$, given all the information up to  the time $t$, $Z_t-\beta q_t$ doesn't provide any useful information about the future states, so we just omit this term here. 

The other key quantity in the model is 
\begin{equation}\label{E2}
(\delta_u^{*,a},\delta_u^{*,b})_{u\in[t,T]}=arg \displaystyle  \displaystyle \max_{(\delta_u^a,\delta_u^b)_{u\in[t,T]}}E_t\left[\int_t^T(\delta_u^a+\beta)dN_u^a+(\delta_u^b-\beta)dN_u^b-\frac{\gamma}{2}\int_t^Tq_u^2\nu_udu\right]
\end{equation}
which is an optimal feedback  control process turning out  to be time and state dependent. 
\begin{proposition}
Suppose the value function (\ref{E1}) is sufficiently smooth, (i.e., $V \in {\cal C}^{1, 2}$). Then, the value function  satisfies the following HJB equation:
\begin{equation}\label{star}
\begin{array}{lll}
\displaystyle V_t+\theta(\alpha-\nu)V_{\nu}+ \frac{1}{2}\xi^2\nu V_{\nu\nu}-\frac{\gamma}{2}q^2\nu+\max_{\delta_t^a}\lambda^a(\delta_t^a)[\delta_t^a+\beta+V(q-1,\nu,t)-V(q,\nu,t)]\\
\displaystyle +\max_{\delta_t^b}\lambda^b(\delta_t^b)[\delta_t^b-\beta+V(q+1,\nu,t)-V(q,\nu,t)]=0\\
\end{array}
\end{equation}
with the boundary condition $V(q,\nu,T)=0$.
\end{proposition}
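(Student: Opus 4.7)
The plan is to derive (\ref{star}) from the Bellman dynamic programming principle (DPP) together with It\^{o}'s formula for jump-diffusions. Since $V \in \mathcal{C}^{1,2}$ is assumed, It\^{o}'s formula applies directly and the derivation is essentially a bookkeeping exercise on the joint state $(q_t, \nu_t)$, whose randomness comes from a CIR-type diffusion for $\nu_t$ and two independent controlled Poisson clocks for $q_t$. First, I would invoke the DPP: for every admissible $(\delta_u^a, \delta_u^b)_{u \in [t, t+h]}$ and sufficiently small $h > 0$,
\begin{equation*}
V(q,\nu,t) \;\geq\; E_t\!\left[\int_t^{t+h}\!\!\Big((\delta_u^a+\beta)dN_u^a + (\delta_u^b-\beta)dN_u^b - \tfrac{\gamma}{2}q_u^2\nu_u\,du\Big) + V(q_{t+h}, \nu_{t+h}, t+h)\right],
\end{equation*}
with equality under the optimal control.

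Second, I would apply It\^{o}'s formula to $V(q_s, \nu_s, s)$ for $s \in [t, t+h]$. The continuous part contributed by $d\nu_s = \theta(\alpha-\nu_s)ds + \xi\sqrt{\nu_s}\,dB_s$ produces the drift $V_t + \theta(\alpha-\nu)V_\nu + \tfrac{1}{2}\xi^2\nu V_{\nu\nu}$ together with a local-martingale term $\xi\sqrt{\nu_s}V_\nu\,dB_s$. The jump part, driven by the Poisson processes $N^a$ (ask hit, $q \to q-1$) and $N^b$ (bid lifted, $q \to q+1$), contributes the finite differences $V(q-1,\nu,s) - V(q,\nu,s)$ against $dN_s^a$ and $V(q+1,\nu,s) - V(q,\nu,s)$ against $dN_s^b$.

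Third, I would take $E_t[\cdot]$, annihilate the Brownian martingale, and replace the Poisson increments by their predictable compensators $\lambda^a(\delta_s^a)\,ds$ and $\lambda^b(\delta_s^b)\,ds$. Substituting into the DPP, dividing by $h$, and sending $h \downarrow 0$ yields, for every admissible $(\delta_t^a, \delta_t^b)$,
\begin{align*}
0 \;\geq{}\;& V_t + \theta(\alpha-\nu)V_\nu + \tfrac{1}{2}\xi^2\nu V_{\nu\nu} - \tfrac{\gamma}{2}q^2\nu \\
&{}+ \lambda^a(\delta_t^a)\bigl[\delta_t^a + \beta + V(q-1,\nu,t) - V(q,\nu,t)\bigr] \\
&{}+ \lambda^b(\delta_t^b)\bigl[\delta_t^b - \beta + V(q+1,\nu,t) - V(q,\nu,t)\bigr],
\end{align*}
with equality under the optimizer. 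Since the $\delta^a$- and $\delta^b$-pieces decouple additively, the pointwise supremum splits into two independent one-dimensional maximizations, which is exactly (\ref{star}); the terminal condition $V(q,\nu,T) = 0$ is immediate from the empty integral at $s = T$.

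The main obstacle is technical rather than algebraic: one must justify the interchange of supremum and conditional expectation in the DPP and verify that the $dB$-integral is a true (not merely local) martingale. Under the exponential intensity $\lambda(\delta) = Ae^{-k\delta}$ and the $\mathcal{C}^{1,2}$ regularity already assumed, this is routine, since the inventory $q_t$ is controllable on the finite horizon $[0,T]$ and $\nu_t$ has finite moments by the CIR structure, but it would fill most of a fully rigorous write-up. The genuinely model-specific content, namely handling the CIR diffusion for $\nu$ alongside the two controlled Poisson jumps for $q$, is transparent once the two sources of randomness are separated.
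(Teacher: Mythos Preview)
Your proposal is correct and follows essentially the same route as the paper's proof in Appendix B1: invoke dynamic programming over a short interval, handle the $\nu$-diffusion via It\^{o}'s formula and the $q$-jumps via the Poisson increment probabilities (equivalently, their compensators), take expectations, divide by the time step, and pass to the limit. Your write-up is somewhat more careful about the DPP inequality/equality structure and the martingale justification, whereas the paper proceeds heuristically with infinitesimal $dt$, but the substance is identical.
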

\begin{proof}
See Appendix B1.
\end{proof}

\begin{corollary}
The optimal controls at any time $t$ are given by 
\begin{equation}\label{control}
\begin{array}{lll}
(\delta_t^{*,a},\delta_t^{*,b})(q_t,\nu_t,t)&=&\displaystyle\Big(-\frac{\lambda_t^a}{\partial \lambda_t^a/\partial \delta_t^a}-\beta+V(q_t,\nu_t,t)-V(q_t-1,\nu_t,t),\\
&&\quad\displaystyle -\frac{\lambda_t^b}{\partial \lambda_t^b/\partial \delta_t^b}+\beta+V(q_t,\nu_t,t)-V(q_t+1,\nu_t,t)\Big)\\
\end{array}
\end{equation}
where  the value function, $V(q,\nu,t)$,  satisfies the following PDE
\begin{equation}\label{PDE}
\displaystyle V_t+\theta(\alpha-\nu)V_{\nu}+ \frac{1}{2}\xi^2\nu V_{\nu\nu}-\frac{\gamma}{2}q^2\nu-\frac{(\lambda_t^a)^2}{\partial \lambda_t^a/\partial \delta_t^a}-\frac{(\lambda_t^b)^2}{\partial \lambda_t^b/\partial \delta_t^b}=0
\end{equation}
with boundary condition $V(q,\nu,T)=0$.
\end{corollary}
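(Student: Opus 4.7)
The plan is to obtain the corollary by performing the two interior maximizations in the HJB equation~(\ref{star}) explicitly, since both are unconstrained optimizations of a smooth function of a single real variable. The remaining terms $V_t+\theta(\alpha-\nu)V_\nu+\tfrac12\xi^2\nu V_{\nu\nu}-\tfrac{\gamma}{2}q^2\nu$ do not involve $\delta_t^a$ or $\delta_t^b$, so they can be left untouched throughout.

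For the ask term I would apply the first-order condition: differentiating $\lambda^a(\delta_t^a)\bigl[\delta_t^a+\beta+V(q-1,\nu,t)-V(q,\nu,t)\bigr]$ with respect to $\delta_t^a$ and setting the result to zero yields
\[
\frac{\partial \lambda_t^a}{\partial \delta_t^a}\bigl[\delta_t^a+\beta+V(q-1,\nu,t)-V(q,\nu,t)\bigr]+\lambda_t^a=0,
\]
which rearranges to the first component of~(\ref{control}). Repeating the computation for the bid term with the obvious sign changes gives the second component. The key algebraic observation for the substitution step is that, at the optimizer, the bracket in the ask term equals $-\lambda_t^a/(\partial\lambda_t^a/\partial\delta_t^a)$, so the whole ask contribution collapses to $-(\lambda_t^a)^2/(\partial\lambda_t^a/\partial\delta_t^a)$; the analogous collapse of the bid term then produces exactly the two inhomogeneous terms in~(\ref{PDE}). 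The terminal condition $V(q,\nu,T)=0$ is inherited directly from the HJB boundary condition.

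The main obstacle, and the only nontrivial point, is verifying that the stationary points found above are in fact maximizers (so that the $\max$ operators may be legitimately dropped). I would address this by checking the second-order condition with the parametric intensities $\lambda^a(\delta)=\lambda^b(\delta)=A\exp(-k\delta)$ introduced earlier: a direct calculation gives $\partial^2\lambda/\partial\delta^2\cdot[\cdot]+2\,\partial\lambda/\partial\delta=-k\lambda_t^a<0$ at the stationary point, so the objective is strictly concave there and the first-order condition indeed identifies the unique maximum. Note also that with this specification $-\lambda_t^a/(\partial\lambda_t^a/\partial\delta_t^a)=1/k$, which is a positive constant; this will be useful in the next subsection for reducing~(\ref{PDE}) further, but is not needed for the corollary itself.
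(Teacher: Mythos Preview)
Your argument is correct and follows exactly the paper's approach: the paper's entire proof is ``Take the first-order optimality condition in Eq.~(\ref{star}),'' which is precisely the differentiation-and-substitution you carry out. Your additional second-order check for the exponential intensities is a sensible supplement, but goes slightly beyond what the paper records here.
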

\begin{proof}
Take the first-order optimality condition in Eq. (\ref{star}).
\end{proof}

\subsection{Optimal Quotes}

In this section, we focus on the computation of the optimal controls, which can be 
derived through an intuitive, two-step procedure. 
First, solve  Eq. (\ref{PDE}), then solve Eq. (\ref{control}). 
The main computational difficulty lies in solving Eq. (\ref{PDE}), 
since it not only contains continuous variables $t$ and $\nu$, but also  a discrete variable $q$.  However, due to our choice of ``{\it mean-variance}'' objective function, we are able to simplify the problem through an asymptotic expansion of $V(q,\nu,t)$ in the inventory variable $q$, which is  an approximative quadratic polynomial.  Before solving the problem, we first analyze an extreme case.
\begin{example}
For an inactive dealer who does not have any limit orders
in the market and simply holds an inventory of $q$ stocks
until the terminal time $T$, we have $dZ_t\equiv0$ and $ q_t\equiv q$. 
Then, by Eq. (\ref{E1}) 
\begin{equation}
\begin{array}{lll}
V(q_t,\nu_t,t)
&=& -\displaystyle \frac{\gamma}{2}E_t\left[\int_t^T q^2\nu_udu\right]\\
&=& -\displaystyle \frac{\gamma}{2}\int_t^T q_t^2E_t[\nu_u]du\\
&=&-\displaystyle\frac{\gamma q_t^2}{2\theta}(\nu_t-\alpha)\left[1-e^{-\theta(T-t)}\right]-\frac{\gamma q_t^2}{2}\alpha(T-t)\\
\end{array}
\end{equation}
which is independent of $\xi$.
We remark that when $\xi=\theta=0$, we have $dS_t=\sqrt{\nu}dW_t$ and
\begin{equation}
\begin{array}{lll}
V(q,\nu,t)&=&\displaystyle  -\lim_{\theta \to 0}\displaystyle\left(\frac{\gamma q_t^2}{2\theta}(\nu_t-\alpha)\left[1-e^{-\theta(T-t)}\right]+\frac{\gamma q_t^2}{2}\alpha(T-t)\right)\\
&=& -\displaystyle\frac{\gamma q_t^2}{2}\nu_t(T-t).
\end{array}
\end{equation}
\end{example}
It is the simplest trading strategy in LOBs. 
We shall adopt this strategy as a benchmark for making comparisons
with other strategies throughout this section.
\begin{theorem}
Assume the arrival rates of buy and sell orders that will reach the dealer take the exponential form:
$$
\lambda^a(\delta)=\lambda^b(\delta)=A\exp(-k\delta)
$$
For an active dealer, the derived  optimal ask and bid quotes $(\delta_t^{a,*},\delta_t^{b,*})$ can be approximated by $(\hat{\delta}_t^{a,*},\hat{\delta}_t^{b,*})$ under the approximate treatment  in 
 \cite{Avellaneda08}, which are given by
$$
\left\{
\begin{array}{lll}
\hat{\delta}_t^{a,*}&=&\frac{1}{k}-\beta -\left(\frac{\gamma}{2\theta}(\nu_t-\alpha)[1-e^{-\theta(T-t)}]+\frac{\gamma}{2}\alpha(T-t)\right)(2q_t-1),\\
\hat{\delta}_t^{b,*}&=&\frac{1}{k}+\beta +\left(\frac{\gamma}{2\theta}(\nu_t-\alpha)[1-e^{-\theta(T-t)}]+\frac{\gamma}{2}\alpha(T-t)\right)(2q_t+1).\\
\end{array}
\right.
$$
Moreover, 
$|\delta_t^{a,*}-\hat{\delta}_t^{a,*}| \ll 1$ and
$|\delta_t^{b,*}-\hat{\delta}_t^{b,*}| \ll 1$.
The approximated value function is given by 
$$
\tilde{V}(q_t,\nu_t,t)=-\frac{\gamma q_t^2}{2}\nu_t(T-t)
$$
which is equal to the value function of inactive dealer and the exact value function of the active dealer satisfies
$$
\tilde{V}(q_t,\nu_t,t)\le V(q_t,\nu_t,t)\le \tilde{V}(q_t,\nu_t,t)+c(T-t)
$$
where $c$ is a  positive constant.
\end{theorem}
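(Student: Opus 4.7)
The plan is to exploit the exponential form of the intensity functions together with a quadratic-in-$q$ ansatz to reduce the nonlinear HJB equation of Proposition~2.1 to a tractable sequence of linear parabolic PDEs, to solve them explicitly via Feynman--Kac along the CIR process, and finally to bound the error between $(\hat{\delta}_t^{*,a},\hat{\delta}_t^{*,b})$ and the true optimum by a parabolic comparison argument.

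First, since $\lambda^a(\delta)=\lambda^b(\delta)=A e^{-k\delta}$ yields $-\lambda/(\partial\lambda/\partial\delta)=1/k$, Corollary~2.2 reduces the optimal quotes to
\[ \delta_t^{*,a} = \tfrac{1}{k} - \beta + V(q_t,\nu_t,t)-V(q_t-1,\nu_t,t), \qquad \delta_t^{*,b} = \tfrac{1}{k} + \beta + V(q_t,\nu_t,t)-V(q_t+1,\nu_t,t), \]
and substituting back gives a single nonlinear parabolic equation for $V$. Because the only $q$-dependence in the running cost is the quadratic term $-\gamma q^2\nu/2$, I would try the ansatz $V(q,\nu,t)=v_0(\nu,t)+q\,v_1(\nu,t)+q^2\,v_2(\nu,t)$ and, following \cite{Avellaneda08}, replace the finite differences $V(q\pm 1,\nu,t)-V(q,\nu,t)$ by $\pm V_q$. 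Matching powers of $q$ decouples the system; the leading order yields
\[ \partial_t v_2 + \theta(\alpha-\nu)\partial_\nu v_2 + \tfrac{1}{2}\xi^2\nu\,\partial_{\nu\nu} v_2 - \tfrac{\gamma}{2}\nu = 0, \qquad v_2(\nu,T)=0, \]
whose Feynman--Kac representation along $\{\nu_u\}$ reads $v_2(\nu,t)=-\tfrac{\gamma}{2}\int_t^T E_t[\nu_u]\,du$. Plugging in Eq.~(\ref{eq2}) produces the closed form $-\tfrac{\gamma}{2\theta}(\nu-\alpha)\bigl[1-e^{-\theta(T-t)}\bigr]-\tfrac{\gamma\alpha}{2}(T-t)$; the analogous linear PDEs for $v_0,v_1$ contribute only lower-order terms. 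The factor $(2q_t\mp 1)$ in the statement is then recovered by evaluating the \emph{exact} finite differences of this quadratic polynomial in the formulas for $\delta_t^{*,a},\delta_t^{*,b}$ above.

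For the error bounds, the inequality $\tilde V\le V$ follows from admissibility: the strategy of quoting $\delta\to\infty$ deactivates the Poisson arrivals and delivers exactly the inactive-dealer payoff of Example~2.1, which equals $\tilde V$ (up to the stated reduction), so $V\ge \tilde V$ by definition of the supremum in (\ref{E1}). For the upper bound, I would insert $\tilde V+c(T-t)$ into the HJB equation (\ref{star}) and estimate the residual: the error introduced by linearising the finite differences and truncating the $q$-ansatz is uniformly bounded in $(\nu,t)$, owing to the exponential decay of $\lambda^{a,b}$ and the resulting boundedness of the optimisers. Choosing $c$ sufficiently large makes $\tilde V+c(T-t)$ a classical supersolution of (\ref{PDE}), so the parabolic comparison principle gives $V\le \tilde V+c(T-t)$. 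The Lipschitz dependence of the optimal quotes on $V$ through Corollary~2.2 then transfers this $O(T-t)$ value-function error into the claimed $|\delta_t^{*,a}-\hat{\delta}_t^{*,a}|\ll 1$ and $|\delta_t^{*,b}-\hat{\delta}_t^{*,b}|\ll 1$.

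The main obstacle is making the residual estimate genuinely rigorous, since the CIR coefficient $\xi\sqrt{\nu}$ is non-Lipschitz at the boundary $\nu=0$ and the HJB involves a nonlinear maximisation over $\delta$. The technical ingredients I would rely on are (i) the concavity of $V$ in $q$ inherited from the quadratic penalty, which makes the linearisation $V(q\pm 1)-V(q)\approx \pm V_q$ one-sided and controllable; (ii) the moment estimates for $\nu_u$ implied by Eq.~(\ref{eq3}), which guarantee that the Feynman--Kac expectations are finite and smooth in $(\nu,t)$; and (iii) a standard verification theorem for viscosity solutions of HJB equations to upgrade the formal computation to an actual identification of $V$.
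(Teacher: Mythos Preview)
Your derivation of the approximate quotes is the same as the paper's: quadratic-in-$q$ ansatz, linearise the exponential arrival term, decouple by powers of $q$, and solve the resulting linear parabolic equations via Feynman--Kac along the CIR process. One small deviation: the paper never replaces $V(q\pm1)-V(q)$ by $\pm V_q$; it keeps the exact finite differences of the quadratic throughout, which directly produces the $(2q\mp1)$ factors and makes $\delta_t^{a,*}+\delta_t^{b,*}\approx\tfrac{2}{k}-2f^{(2)}$ manifestly $q$-independent. Your two-stage description (derivative first, then exact differences) reaches the same place but is unnecessarily roundabout.

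Where you genuinely differ is in the error analysis. The paper does not use a supersolution/comparison argument. It simply writes $V=\tilde V+w^q$, substitutes into the reduced HJB, and observes that $w^q$ satisfies a \emph{linear} parabolic equation
\[
w_t^q+\theta(\alpha-\nu)w_\nu^q+\tfrac12\xi^2\nu\,w_{\nu\nu}^q+g^q(\nu,t)=0,\qquad w^q(\nu,T)=0,
\]
with source $g^q=\tfrac{A}{k}\bigl(e^{-k\delta_t^{a,*}}+e^{-k\delta_t^{b,*}}\bigr)\ge 0$. Feynman--Kac gives $w^q(\nu,t)=E_t\bigl[\int_t^T g^q(\nu_r,r)\,dr\bigr]$, so $w^q\ge 0$ immediately, and the assumed lower bound on the quotes makes $g^q\le c$ uniformly, yielding $w^q\le c(T-t)$. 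This is shorter than your route and sidesteps the non-Lipschitz boundary issue you flag, since it works directly with expectations rather than PDE comparison. Your admissibility argument for the lower bound (quoting $\delta\to\infty$ recovers the inactive dealer) is, on the other hand, cleaner than what the paper writes.

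For the quote error the paper exploits the same representation: $\delta_t^{a,*}-\hat\delta_t^{a,*}=w^q-w^{q-1}=E_t\bigl[\int_t^T(g^q-g^{q-1})\,dr\bigr]$, and since $\delta_t^{a,*}+\delta_t^{b,*}$ is $q$-independent, the first-order Taylor terms of $g^q$ cancel in the difference, leaving only second-order contributions. Your Lipschitz-transfer argument gives merely $O(T-t)$, which suffices for the informal ``$\ll 1$'' but misses this extra cancellation.
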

\begin{proof}
See Appendix B2.
\end{proof}

\begin{example}
Let us take a risk-neutral dealer as an example. 
In this situation, $\gamma=0$ and the optimization problem can be written as follows:
$$
\begin{array}{lll}
V(q_t,\nu_t,t)&=&\displaystyle\max_{(\delta_u^a,\delta_u^b)_{u\in[t,T]}}E_t
\left[ \int_t^T(\delta_u^a+\beta)dN_u^a+(\delta_u^b-\beta)dN_u^b\right]\\
&=&\displaystyle\max_{(\delta_u^a,\delta_u^b)_{u\in[t,T]}}\int_t^T\left[(\delta_u^a+\beta)\lambda^a(\delta_u^a)+(\delta_u^b-\beta)\lambda^b(\delta_u^b)\right]dt.
\end{array}
$$
This expression attains its maximum when the optimal distances satisfy the following first order conditions:
\begin{equation}\label{eq22}
\lambda^a(\delta_u^a)+(\delta_u^a+\beta)\frac{\partial \lambda^a(\delta_u^a)}{\partial \delta_u^a}=0
\quad {\rm and} \quad 
\lambda^b(\delta_u^b)+(\delta_u^b-\beta)\frac{\partial \lambda^b(\delta_u^b)}{\partial \delta_u^b}=0.
\end{equation}
Thus we have $\delta_u^{a,*}\equiv \frac{1}{k}-\beta$ and $\delta_u^{b,*}\equiv \frac{1}{k}+\beta$. 
We note that,  for the existence of the maximizer, we need
\begin{equation}\label {condition}
\left\{
\begin{array}{c}
\displaystyle (\delta_u^a+\beta)\frac{\partial ^2 \lambda^a(\delta_u^a)}{\partial (\delta_u^a)^2}+2\frac{\partial \lambda^a(\delta_u^a)}{\partial \delta_u^a}\le0\\
\\
\displaystyle (\delta_u^b-\beta)\frac{\partial ^2 \lambda^b(\delta_u^b)}{\partial (\delta_u^b)^2}+2\frac{\partial \lambda^b(\delta_u^b)}{\partial \delta_u^b}\le0.
\end{array}
\right.
\end{equation}
It is straightforward to verify that $\delta_u^{a,*}\equiv \frac{1}{k}-\beta$ and $\delta_u^{b,*} \equiv \frac{1}{k}+\beta$ 
also satisfy the conditions, Eq. (\ref{condition}), for the maximizer and the value function equals
$$
V(q_t,\nu_t,t)=\frac{Ae^{-1}}{k}(e^{k\beta}+e^{-k\beta})(T-t).
$$
At the same time, setting $c=\frac{Ae^{-1}}{k}(e^{k\beta}+e^{-k\beta})$, 
which is a finite positive constant, and by the approximation method, 
we obtain  
\begin{equation}\label{eq24}
\tilde{V}(q_t,\nu_t,t)=0 \le V(q_t,\nu_t,t)\le \tilde{V}(q_t,\nu_t,t)+c(T-t) 
\end{equation}
and 
$$
\hat{\delta}_t^{a,*}= \frac{1}{k}-\beta =\delta_t^{a,*} 
\quad {\rm and} \quad \hat{\delta}_t^{b,*}=\frac{1}{k} + \beta=\delta_t^{b,*}.
$$ 
\end{example}

We then set a bid-ask spread for the dealer, which is given by 
\begin{equation}
\begin{array}{lll}
\displaystyle \hat{\delta}_t^{a,*}+\hat{\delta}_t^{b,*} &= &\displaystyle \frac{2}{k}+\displaystyle \frac{\gamma}{\theta}(\nu_t-\alpha)\left[1-e^{-\theta(T-t)}\right]+\gamma \alpha(T-t)
\end{array}
\end{equation}
and the price adjustment variable, $m_t$,  is defined by 
\begin{equation}
\begin{array}{lll}
m_t&=&\hat{\delta}_t^{a,*}-\hat{\delta}_t^{b,*} =\displaystyle -2\beta-2\left(\frac{\gamma}{\theta}(\nu_t-\alpha)\left[1-e^{-\theta(T-t)}\right]+\gamma\alpha(T-t)\right)q_t.
\end{array}
\end{equation}
We now give some remarks on the approximations.  
\begin{description}
\item [(i)]  Dependence on $\nu$:
$$
\left\{
\begin{array}{c}
\frac{\partial \hat{\delta}_t^{a,*}}{\partial \nu}<0 \quad,\quad\frac{\partial \hat{\delta}_t^{b,*}}{\partial \nu}>0, \quad \mbox{if $q_t>0$}\\
\frac{\partial \hat{\delta}_t^{a,*}}{\partial \nu}>0 \quad,\quad\frac{\partial \hat{\delta}_t^{b,*}}{\partial \nu}>0, \quad \mbox{if $q_t=0$}\\
\frac{\partial \hat{\delta}_t^{a,*}}{\partial \nu}>0 \quad,\quad\frac{\partial \hat{\delta}_t^{b,*}}{\partial \nu}<0, \quad \mbox{if $q_t<0$}\\
\end{array}
\right.
$$
and
$$
\frac{\partial (\hat{\delta}_t^{a,*}+\hat{\delta}_t^{b,*})}{\partial \nu}>0.
$$
The rationale behind this is that an increase in the variance $\nu_t$
will lead to an increase in the inventory risk. 
Hence, to reduce this risk, dealers having a long position will try to lower their bid and ask prices so as to encourage selling and discourage purchasing.
Similarly, dealers with a short position will try to raise prices to
encourage purchasing and to discourage selling.
As a conclusion, due to the increase in price risk, the bid-ask spread,
which reflects the risk a market maker is facing, widens.

\item[(ii)] We note that
$$
\begin{array}{lll}
\hat{\delta}_t^{a,*}+\hat{\delta}_t^{b,*}&= & \displaystyle \frac{2}{k}+\displaystyle \frac{\gamma}{\theta}(\nu_t-\alpha)\left[1-e^{-\theta(T-t)}\right]+\gamma \alpha(T-t).
\end{array}
$$
If $\xi=0$ and $ \theta \to 0$, then
\begin{equation}
\hat{\delta}_t^{a,*}+\hat{\delta}_t^{b,*}=\frac{2}{k}+\gamma \nu_t(T-t).
\end{equation}
 This corresponds to the results in \cite{Avellaneda08} where the variance $\nu$ is a constant.

\item [(iii)] 
\begin{description}
\item[(a)] Regarding an inactive dealer in the security market with the ``frozen inventory'' trading strategy, no limit order, simply holding an inventory of $q$ shares until the terminal time $T$,  the followings hold:
\begin{description}
\item[(a1)] The expected terminal wealth
$$
E_t[X_T+q_T(S_T-\beta)]=x_t+q_t(s_t-\beta).
$$
\item[(a2)] The value function
$$
V(q_t,\nu_t,t)
=-\displaystyle\frac{\gamma q_t^2}{2\theta}(\nu_t-\alpha)\left[1-e^{-\theta(T-t)}\right]
-\frac{\gamma q_t^2}{2}\alpha(T-t).
$$
\end{description}
\item[(b)]For an active dealer using the optimal inventory strategy, 
the followings hold:
\begin{description}
\item[(b1)] The expected terminal wealth
$$
E_t[X_T+q_T(S_T-\beta)]=x_t+q_t(s_t-\beta)+E_t\left[\int_t^T (\hat{\delta}_u^{a,*}+\beta)dN_u^a +(\hat{\delta}_u^{b,*}-\beta) dN_u^b  \right].
$$
\item[(b2)] The value function
$$
V(q_t,\nu_t,t) \ge-\displaystyle\frac{\gamma q_t^2}{2\theta}(\nu_t-\alpha)\left[1-e^{-\theta(T-t)}\right]-\frac{\gamma q_t^2}{2}\alpha(T-t).
$$
\end{description}
\end{description}
Here the last term in the expected terminal wealth
can be seen as the cumulated returns from transactions
$$
\begin{array}{lll}
E_t\left[\int_t^T (\hat{\delta}_u^{a,*}+b)dN_u^a+(\hat{\delta}_u^{b,*}-b)dN_u^b \right]&=&
\int_t^T E_t\left[(\hat{\delta}_u^{a,*}+\beta)dN_u^a
+(\hat{\delta}_u^{b,*}-\beta)dN_u^b\right]\\
&=&\int_t^T \left((\hat{\delta}_u^{a,*}+\beta)\lambda^a(\hat{\delta}_u^{a,*})+(\hat{\delta}_u^{b,*}-\beta)\lambda^b(\hat{\delta}_u^{b,*})\right)du\\
&=&A\int_t^T \left((\hat{\delta}_u^{a,*}+\beta)e^{-k\hat{\delta}_u^{a,*}}
+(\hat{\delta}_u^{b,*}-\beta)e^{-k\hat{\delta}_u^{b,*}}\right)du.
\end{array}
$$
Since both variables $\hat{\delta}_u^{a,*}$ and $\hat{\delta}_u^{b,*}$, in the integrand,
are related to the variable $q_t$,
it is not easy to obtain a closed-form solution.
In the next section, we use Monte Carlo method to
verify that the last term is always positive.
Compared with ``frozen inventory strategy'',
our strategy can improve the final profit without falling below 
the dealer's original indifference curve
(in the situation of ``frozen inventory'' problem, $(\delta_t^a, \delta_t ^b)_{t \in[0,T]} $
can be seen as $(+\infty,+\infty)$),
which means that an active dealer always takes advantage over 
an inactive dealer.

\item[(iv)] The price adjustment,
\begin{equation}
\begin{array}{lll}
m_t&=&\displaystyle -2\beta-2\left(\frac{\gamma}{\theta}(\nu_t-\alpha)\left[1-e^{-\theta(T-t)}\right]+\gamma\alpha(T-t)\right)q_t
\end{array}
\end{equation}
approaches to $-2\beta-2\gamma \nu_t(T-t)q_t$, as $\theta \to 0$.
It depends on the inventory level and is an inventory response equation that specifies the price adjustments variable be negative (positive)
when the inventory is greater (less) than a certain amount of inventory.
Due to the liquidation (clearing) cost, 
a dealer with a large amount of inventory have an
urgent need to clear his holding during the trading period.
When $m_t<0$, both the bid price and ask price are ``low'' and the dealer has
an incentive to sell rather than to purchase, and as a result, it reduces
the dealer's inventory level.
When $m_t>0$ the dealer has an incentive to purchase rather than to sell, 
and as a result, it will raise his inventory level.
The degree of price response to an inventory change depends on the same factors determining the size of the bid-ask spread-time remained ($T-t$),
dealer's risk aversion (determined by $\gamma$) and volatility
(determined by $t,\nu$,$\theta$, and $\alpha$).
\end{description}

\subsection{Numerical Experiments}

In our numerical simulations, we adopt the following parameters 
which as the same as those in \cite{Avellaneda08}:
$s=100, T=1, \nu=4, dt=0.005, q=0, \gamma=0.1, \theta=0.02, \alpha=4, \rho=0.7, \xi=0.5, k=1.5, A=140$.
The simulations are obtained through the following procedure:
\begin{table}[!hbtp]
%\caption{ }
\centering
\begin{tabular}{ll}
\hline
Step 1:& Compute the agent's quotes $\delta^a$, $\delta^b$ and other state variables. \\
Step 2:
& With probability $\lambda^a(\delta^a)dt$, $dN^a=1, dX= s+\delta^a$;\\
& With probability $\lambda^b(\delta^b)dt$, $dN^b=1,dX= -s+\delta^b$;\\
& The mid-price is updated  by a random increment $\pm\sqrt{v}\sqrt{dt}$;\\
& The volatility is updated accordingly by a random increment: \\
&
$\theta(\alpha-v)dt+ \xi\sqrt{v}(\rho\sqrt{t}\pm\sqrt{1-\rho^2}\sqrt{t})$
\ or \ $\theta(\alpha-v)dt+ \xi\sqrt{v}(-\rho\sqrt{t}\pm\sqrt{1-\rho^2}\sqrt{t})$.\\
Step 3: & $t:=t+dt$, and return to Step $1$.\\
\hline
\end{tabular}
\end{table}
\begin{center}
[Figure $1$ here]
\end{center}
We first  use  Mentor Carlo Method to simulate the dynamics of the stock mid-price, which is show in  Figure $1$   in red curve, and then same  method is used to test the performance of the optimal trading strategy.  The curve in blue shows the dynamics of the ask price and the curve in green  shows the dynamics of the bid price.  As we can see from Figure 1, ask prices are always above the  stock mid-prices, bid prices are always below the stock-mid prices and they are believed to be mean-reverting.
Figure $2$ shows some detailed  information about the cumulated revenues from transactions and the ask-bid spread with respect to the optimal trading strategy.   Figure $2(a)$ shows that the optimal trading strategy can make a positive revenue for its users.  Ask-bid spreads are always used to describes  the risks one  faces in the security market, as we can see from Figure $2(b)$, the risks one faces roughly decrease with respect to the time. 
\begin{center}
[Figure $2$ here]
\end{center}

\subsubsection{Trading Curve and Risk Aversion }

The average number of shares at each point of time, 
say the trading curve,  
can be computed by using Monte-Carlo simulations.
Figure $3(a)$ and $3(b)$ depict the trading curves with initial 
inventory $q_0=6$ and $q_0=-6$, respectively, when 
the optimal trading strategy is adopted.
\begin{center}
[Figure $3$ here]
\end{center}

We notice that the average number of shares at the terminal time $T$ is not equal to $0$, which can be explained by a weak incentive of the trader to liquidate the trading position strictly before time $T$ due to the low clearing fee caused by the liquidation at the terminal time. 
There are some cases for which liquidation is not completed before time $T$.
\begin{center}
[Figure $4$ here]
\end{center}

Figure $4$  shows the effect of risk-aversion on the trading strategies.   
In Case (1), we have $\gamma=0.01$, 
which presents a trader who is  risk-neutral.  
The trader postpones liquidation and eventually in the position of short selling.  
In Case (3),  $\gamma= 1.00$, which describes  a risk-averse trader who wishes to sell quickly to reduce exposure to volatility risk.  
In Case (2), we have $\gamma=0.10$, which lies between the above two extremes. 
We can see from Figure $4$  that traders with $\gamma=0.10$  prefers to 
liquidate quickly to avoid risk arising from stock's volatility.

\subsubsection{Efficient Frontier}

The efficient frontier consists of all optimal trading strategies. 
Here the  ``optimal'' refers to the situation where no strategy has a smaller variance for 
the same or higher level of expected transaction profits, i.e., 
the optimal trading strategy solves the following constrained  
optimization  problem:
$$
\displaystyle \max_{(\delta_u^a,\delta_u^b)_{u\in [t,T]}: Var[I_T|\mathcal{F}_t]\le V_*}E_t[Z_T-\beta q_T]
$$
for some $V_*$. 
We can solve the constrained optimization problem by introducing a Lagrange multiplier $\lambda$, i.e., solving the unconstrained problem,
$$
\max_{(\delta_u^a,\delta_u^b)_{u\in[t,T]}}\{E_t[Z_T-\beta q_T]-\lambda \mbox{Var}[I_T|\mathcal{F}_t]\}.
$$
For each level of $\lambda$, 
corresponding to a certain risk aversion, 
there is an optimal quoting strategy, given by $(\delta_t^a,\delta_t^b)_{t\in[0,T]}$. 
By running 1000 simulations with initial inventory $q_0=6$, we obtain an efficient frontier (see Figure $5$). 
This frontier is increasing along an approximative smooth concave curve 
when the level of dealer's risk aversion decreases.  
It shows the tradeoff between the expected revenues from transactions and the cumulated variance of the inventory value.  
The point on the most right of Figure $5$ is obtained for a risk neutral dealer ($\gamma=0$), and we define this point as $(V_0,R_0)$.   
For any other point on the left, we have
$$
R-R_0\approx \frac{1}{2}(V-V_0)^2\frac{d^2 R}{d V^2}\Big|_{V=V_0}.
$$
A crucial insight is that for a risk neutral dealer, a first-order decrease in the expected revenues can incur a second-order increase in cumulated variance. 
\begin{center}
[Figure $5$ here]
\end{center}

\subsubsection{The Optimal Inventory Strategy and the Symmetric Strategy}

By running $1000$ simulations with initial inventory equal to zero,
we obtain a comparison  between our ``inventory'' strategy and the ``symmetric'' strategy,
which employs the average spread of our inventory strategy, but centered it at the mid-price, regardless of the inventory. 
Results are presented in Table 1.
\begin{center}
[Table $1$ here]
\end{center}

We can learn from the table that the symmetric strategy has a higher return and a larger standard derivation than those of the inventory strategy.
It is not difficult to understand that the symmetric strategy 
results in a slightly higher return than the inventory strategy since it is centered around the mid-price, and therefore receives a higher volume of orders than the inventory strategy. However, the inventory strategy obtains a 
Profit \& Loss ($P\&L$) profile with a much smaller variance,
which can be seen from the simulation results in Table 1. 
\begin{center}
[Figure $6$ here]
\end{center}

Figure $6$ depicts the distributions of the $P\&L$ from the two strategies. From Figure $6$, it seems that the distribution of the $P\&L$ of the symmetric strategy has a heavier tail than that of the inventory strategy.

\section{An Extension of the Model to the Case with Market Impact}

As mentioned in the work of Almgren \cite{RA1,RA2}, the price received on each trade is affected by the rates of buying and selling.  
An extension of the model by introducing market impacts is discussed in this
section.

We assume that the stock mid-price evolves according to the following dynamics:
\begin{equation}\label{ww}
dS_t=\sqrt{\nu_t}dW_t-\eta(t)(dN_t^b-dN_t^a)
\end{equation}
where $\eta(t)$ is a  function of $t$, representing the market impact, and  may be related to the states $S_t$ and $v_t$. 

The function $\eta(t)$ in Eq. (\ref{ww}) could be chosen to reflect any preferred model of market micro-structure, 
subject only to certain natural convexity conditions.  

\subsection{Constant Market Impact}

To study the market impact,
one simple way is to consider the following dynamics for the price \cite{RC}:
\begin{equation}\label{eq30}
dS_t=\sqrt{\nu_t}dW_t + \eta (dN_t^a- dN_t^b),
\end{equation}
where $\eta>0$ is a constant, describing the market steady situation.
In this model, the reference price decreases when a limit order on the bid side is filled, increases when a limit order on the ask side is filled and the amount of price increases or decreases is equal to the constant $\eta$.  
This is in line with the classical modeling of market impact for market orders.  
Therefore, the dealer's states follow the following process:
\begin{equation}\label{eq31}
\begin{array}{lll}
d(X_t+q_tS_t) &=&\underbrace{ \delta_t^adN_t^a+\delta_t^bdN_t^b}\quad +\quad \underbrace{q_tdS_t}.\\
&&\quad(revenues)\quad\quad(inventory\;value)\\
\end{array}
\end{equation}
Decompose this state process into two components:
\begin{description}
\item[(i)] The revenues from transactions
$$
dZ_t=\delta_t^adN_t^a+\delta_t^bdN_t^b.
$$
\item[(ii)] The inventory value
(in this section, we use the quadratic variation to describe the risk)
$$
dI_t=q_tdS_t.
$$
\end{description}
Then, we consider the following optimization problem:
$$
\displaystyle
\max_{(\delta_u ^a,\delta_u ^b)_{u\in[t,T]}} \left\{E_t[Z_T-\beta q_T]-\frac{\gamma}{2}E_t\left[\int_t^T\left(dI_u\right)^2\right]\right\}.
$$
Note that $\{N_t^a\}$ and $\{N_t^b\}$ are two independent {\it Poisson processes}, 
The table below gives the multiplication results from standard stochastic calculus.
\begin{table}[!hbtp]
\centering
\begin{tabular}{|c|c|c|c|c|}
\hline$\cdot$&$dt$&$dW_t$&$dN_t^a$&$dN_t^b$\\
\hline$dt$&0&$0$&$0$ &0\\
\hline$dW_t$&0&$dt$& 0&$0$\\
\hline$dN_t^a$&0&$0$& $dN_t^a$&0\\
\hline$dN_t^b$&0&0& 0&$dN_t^b$\\
\hline
\end{tabular}
\end{table}
\\
Consequently, from the above table,
$$
\begin{array}{lll}
(dI_t)^2
&=&q_t^2\nu_tdt+q_t^2\eta ^2\left(dN_t^a+dN_t^b\right).
\end{array}
$$
Our first model can be recovered by assuming $\eta =0$, in which 
$$
E_t\left[\int_t^T(dI_u)^2\right]=Var[I_T|\mathcal{F}_t].
$$
Thus, the optimization problem can be written as,
$$
\begin{array}{lll}
 \displaystyle Z_t-\beta q_t+ \max_{(\delta_u ^a,\delta_u ^b)_{u\in[t,T]}}
E_t \Big[\int_t^T[(\delta_u^a+\beta)dN_u^a+(\delta_u^b-\beta)dN_u^b]-\frac{\gamma}{2}\int_t^Tq_u^2\nu_udu\\
\quad\quad\quad\quad\quad\quad\quad\quad\quad\quad\quad\quad
\displaystyle -\frac{\gamma}{2}\int_t^Tq_u^2\eta ^2\left(dN_u^a+dN_u^b\right)\Big].
\end{array}
$$
One key quantity for the model is 
\begin{equation}\label{MI1}
\begin{array}{lll}
V(q_t,\nu_t,t)=\displaystyle \max_{(\delta_u ^a,\delta_u ^b)_{u\in[t,T]}}
E_t \Big[\int_t^T[(\delta_u^a+\beta)dN_u^a+(\delta_u^b-\beta)dN_u^b]-\frac{\gamma}{2}\int_t^Tq_u^2\nu_udu\\
\quad\quad\quad\quad\quad\quad\quad\quad\quad\quad\quad\quad
\displaystyle -\frac{\gamma}{2}\int_t^Tq_u^2\eta ^2\left(dN_u^a+dN_u^b\right)\Big].
\end{array}
\end{equation}
We denote it as our value function.

The other key quantity for the model is 
\begin{equation}\label{MM2}
\begin{array}{lll}
(\delta_u^{*,a},\delta_u^{*,b})_{u\in[t,T]}=arg \displaystyle \max_{(\delta_u ^a,\delta_u ^b)_{u\in[t,T]}}
E_t \Big[\int_t^T[(\delta_u^a+\beta)dN_u^a+(\delta_u^b-\beta)dN_u^b]-\frac{\gamma}{2}\int_t^Tq_u^2\nu_udu\\
\quad\quad\quad\quad\quad\quad\quad\quad\quad\quad\quad\quad
\displaystyle -\frac{\gamma}{2}\int_t^Tq_u^2\eta ^2\left(dN_u^a+dN_u^b\right)\Big]\\
\end{array}
\end{equation}
which is an optimal control process turning out to be time and state dependent. 

\begin{proposition}
The value function (\ref{MI1}) satisfies the following HJB equation:
\begin{equation}\label{star1}
\begin{array}{lll}
\displaystyle V_t+\theta(\alpha-\nu)V_{\nu}+ \frac{1}{2}\xi^2\nu V_{\nu\nu}-\frac{\gamma}{2}q^2\nu+\max_{\delta_t^a}\lambda^a(\delta_t^a)[\delta_t^a+\beta-\frac{\gamma\eta^2}{2}(q-1)^2+V(q-1,\nu,t)-V(q,\nu,t)]\\
\displaystyle +\max_{\delta_t^b}\lambda^b(\delta_t^b)[\delta_t^b-\beta-\frac{\gamma\eta^2}{2}(q+1)^2+V(q+1,\nu,t)-V(q,\nu,t)]=0\\
\end{array}
\end{equation}
with the boundary condition $V(q,\nu,T)=0$.
\end{proposition}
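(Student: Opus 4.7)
The plan is to derive the HJB equation \eqref{star1} by the dynamic programming principle (DPP), mirroring the derivation of Proposition 2.1 but now carrying along the additional jumps that $dS_t$ acquires from the impact term $\eta(dN_t^a-dN_t^b)$. Observe that $S$ itself does not enter the value function: the sole effect of the impact on the objective is through the quadratic-variation penalty $(dI_t)^2 = q_t^2\nu_t\,dt + q_t^2\eta^2(dN_t^a + dN_t^b)$, while the continuous piece $q_t\sqrt{\nu_t}\,dW_t$ inside $dI_t$ remains a zero-mean martingale in the DPP step and integrates out.

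First I would invoke the DPP on an infinitesimal interval $[t,t+h]$,
$$V(q,\nu,t) \;=\; \sup_{\delta^a,\delta^b} E_t\!\left[V(q_{t+h},\nu_{t+h},t+h) + R_{t,t+h}\right],$$
where $R_{t,t+h}$ denotes the running reward inside the integrand of \eqref{MI1}. Next I would expand $V(q_{t+h},\nu_{t+h},t+h) - V(q,\nu,t)$ by Itô's formula for jump diffusions. The continuous piece (from the Heston SDE for $\nu$) contributes $V_t + \theta(\alpha-\nu)V_\nu + \tfrac{1}{2}\xi^2\nu V_{\nu\nu}$ once the $dB$-integral is removed in expectation; a sell arrival ($dN^a = 1$) produces $q\to q-1$ and contributes $V(q-1,\nu,t) - V(q,\nu,t)$; a buy arrival produces $q\to q+1$ and contributes $V(q+1,\nu,t) - V(q,\nu,t)$.

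For the penalty piece I would decompose $-\tfrac{\gamma}{2}\int_t^T(dI_u)^2$ into the deterministic drift $-\tfrac{\gamma}{2}q^2\nu$ and the two jump contributions $-\tfrac{\gamma\eta^2}{2}q_u^2\,dN_u^a$ and $-\tfrac{\gamma\eta^2}{2}q_u^2\,dN_u^b$. Adopting the convention, consistent with the form of \eqref{star1}, that the integrand $q_u^2$ is evaluated at the post-jump value of the inventory, a sell arrival then contributes $-\tfrac{\gamma\eta^2}{2}(q-1)^2$ and a buy arrival $-\tfrac{\gamma\eta^2}{2}(q+1)^2$. Taking $E_t$, using $E_t[dN^a]=\lambda^a(\delta^a)\,dt$ and $E_t[dN^b]=\lambda^b(\delta^b)\,dt$, dividing by $h$, and sending $h\downarrow 0$ collapses all jump-related terms into the brackets multiplying $\lambda^a$ and $\lambda^b$, yielding \eqref{star1}. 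The boundary condition $V(q,\nu,T)=0$ follows immediately from \eqref{MI1} since both integrals collapse when $t=T$.

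The main obstacle is the bookkeeping at the jump times: being consistent about which value of $q$ (pre- versus post-jump) is used inside the penalty integrand is exactly what produces the specific $(q-1)^2$ and $(q+1)^2$ factors appearing in \eqref{star1}, so this convention must be stated and applied uniformly. A secondary technical point is the interchange of $\sup$ with $\lim_{h\to 0}$, which is routine: it can be justified directly by a measurable-selection argument on the discrete inventory variable, or obtained a posteriori through a verification theorem applied to the smooth candidate solution of \eqref{star1} with the stated terminal condition.
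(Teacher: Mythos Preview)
Your proposal is correct and follows essentially the same route as the paper: the paper's proof of this proposition is simply ``Similar to Proposition 1,'' and Proposition 1 is proved (Appendix B1) by exactly the DPP-on-$[t,t+dt]$ argument you outline, conditioning on the four jump events, applying It\^o's formula in $\nu$, and letting $dt\to 0$. Your explicit bookkeeping of the extra $-\tfrac{\gamma}{2}q_u^2\eta^2(dN_u^a+dN_u^b)$ term and the post-jump convention for $q$ is in fact more detailed than what the paper writes out.
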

\begin{proof}
Similar to Proposition 1.
\end{proof}

\begin{corollary}
The optimal controls (\ref{MM2}) at any time $t$  are given by 
\begin{equation}\label{controlM}
\begin{array}{lll}
(\delta_t^{*,a},\delta_t^{*,b})(q_t,\nu_t,t)&=&\displaystyle\Big(-\frac{\lambda_t^a}{\partial \lambda_t^a/\partial \delta_t^a}-\beta+\frac{\gamma\eta^2}{2}(q_t-1)^2+V(q_t,\nu_t,t)-V(q_t-1,\nu_t,t),\\
&&\quad\displaystyle -\frac{\lambda_t^b}{\partial \lambda_t^b/\partial \delta_t^b}+\beta+\frac{\gamma\eta^2}{2}(q_t+1)^2+V(q_t,\nu_t,t)-V(q_t+1,\nu_t,t)\Big)\\
\end{array}
\end{equation}
where the value function,  $V(q,\nu,t)$, satisfies the following PDE
\begin{equation}\label{PDEM}
\displaystyle V_t+\theta(\alpha-\nu )V_{\nu}+ \frac{1}{2}\xi^2\nu V_{\nu\nu}-\frac{\gamma}{2}q_t^2\nu -\frac{(\lambda_t^a)^2}{\partial \lambda_t^a/\partial \delta_t^a}-\frac{(\lambda_t^b)^2}{\partial \lambda_t^b/\partial \delta_t^b}=0
\end{equation}
with boundary condition $V(q,\nu,T)=0$.
\end{corollary}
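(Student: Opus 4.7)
The plan is to mirror the derivation of Corollary 2.2 exactly: start from the HJB equation \eqref{star1} of Proposition 3.1 and take first-order conditions in the two inner maximizations separately, since the ask-side and bid-side maximands are decoupled in $\delta_t^a$ and $\delta_t^b$.

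First I would isolate the ask-side maximand, namely $f^a(\delta)=\lambda^a(\delta)\bigl[\delta+\beta-\tfrac{\gamma\eta^2}{2}(q-1)^2+V(q-1,\nu,t)-V(q,\nu,t)\bigr]$, and observe that the bracket is an affine function of $\delta$ whose $\delta$-independent part is a constant $C^a(q,\nu,t):=\beta-\tfrac{\gamma\eta^2}{2}(q-1)^2+V(q-1,\nu,t)-V(q,\nu,t)$. Setting $\partial f^a/\partial\delta=0$ gives the condition $\lambda^a(\delta)+(\delta+C^a)\,\partial\lambda^a/\partial\delta=0$, which I would solve for $\delta$ to obtain
\[
\delta_t^{*,a}=-\frac{\lambda_t^a}{\partial \lambda_t^a/\partial \delta_t^a}-\beta+\frac{\gamma\eta^2}{2}(q_t-1)^2+V(q_t,\nu_t,t)-V(q_t-1,\nu_t,t),
\]
which is precisely the first coordinate of \eqref{controlM}. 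The bid-side calculation is completely symmetric, with $C^b(q,\nu,t):=-\beta-\tfrac{\gamma\eta^2}{2}(q+1)^2+V(q+1,\nu,t)-V(q,\nu,t)$, and yields the second coordinate of \eqref{controlM}. As in Example 2.2 (see Eq. \eqref{condition}), the candidate critical points genuinely maximize $f^a$ and $f^b$ provided the standard second-order conditions on $\lambda^{a},\lambda^{b}$ hold; I would state this as a standing hypothesis inherited from Section 2.

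Next I would substitute the optimizers back into \eqref{star1}. The key simplification is that at the optimum $\delta_t^{*,a}+C^a=-\lambda_t^a/(\partial\lambda_t^a/\partial\delta_t^a)$, so
\[
\lambda^a(\delta_t^{*,a})\bigl[\delta_t^{*,a}+C^a\bigr]=-\frac{(\lambda_t^a)^2}{\partial \lambda_t^a/\partial \delta_t^a},
\]
and an identical identity holds on the bid side. Plugging these two expressions into \eqref{star1}, the discrete shift terms $V(q\pm 1,\nu,t)-V(q,\nu,t)$ and the market-impact corrections $-\tfrac{\gamma\eta^2}{2}(q\pm 1)^2$ are absorbed into the optimizers and disappear from the equation, leaving
\[
V_t+\theta(\alpha-\nu)V_\nu+\tfrac{1}{2}\xi^2\nu V_{\nu\nu}-\tfrac{\gamma}{2}q^2\nu-\frac{(\lambda_t^a)^2}{\partial \lambda_t^a/\partial \delta_t^a}-\frac{(\lambda_t^b)^2}{\partial \lambda_t^b/\partial \delta_t^b}=0,
\]
which is \eqref{PDEM}. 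The terminal condition $V(q,\nu,T)=0$ is inherited directly from Proposition 3.1.

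The derivation itself is routine first-order calculus once Proposition 3.1 is in hand; the only genuine subtlety is verifying that the inner suprema are attained, i.e. that the second-order conditions analogous to \eqref{condition} continue to hold in the presence of the extra quadratic market-impact terms. Since those terms are $\delta$-independent, they do not affect the second derivatives of $f^a$ and $f^b$ with respect to $\delta$, so the same sufficient conditions on $\lambda^{a},\lambda^{b}$ that worked in Section 2 carry over without modification. This is the one step I would flag explicitly as the main (mild) obstacle, to avoid the impression that the market-impact correction alters the concavity structure of the inner optimization.
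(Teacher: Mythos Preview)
Your proposal is correct and follows exactly the paper's own approach: the paper's proof consists of the single sentence ``Take the first-order optimality condition in Eq.~(\ref{star1}),'' and you have simply spelled out that computation in detail. Your additional remark about the second-order conditions being unaffected by the $\delta$-independent market-impact terms is a helpful clarification that the paper leaves implicit.
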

\begin{proof}
Take the first-order optimality condition in Eq. (\ref{star1}).
\end{proof}

\subsection{Optimal Quotes}

Similar to the first model, through an intuitive, two-step procedure and some approximative methods,  we can get the optimal quotes under market impact model. 
\begin{theorem}
Assume the arrival rates of buy and sell orders that will reach the dealer take the exponential form: $\lambda^a(\delta)=\lambda^b(\delta)=A\exp(-k\delta)$.
For an active dealer, the derived optimal ask and bid quotes $(\delta_t^{a,*},\delta_t^{b,*})$ can be approximated by $(\hat{\delta}_t^{a,*},\hat{\delta}_t^{b,*})$ under the approximate treatment in \cite{Avellaneda08}, which are given by
$$
\left\{
\begin{array}{lll}
\hat{\delta}_t^{a,*}&=&\frac{1}{k}-\beta+\frac{\gamma\eta^2}{2}(q_t-1)^2 -\left(\frac{\gamma}{2\theta}(\nu_t-\alpha)[1-e^{-\theta(T-t)}]+\frac{\gamma}{2}\alpha(T-t)\right)(2q_t-1),\\
\hat{\delta}_t^{b,*}&=&\frac{1}{k}+\beta +\frac{\gamma\eta^2}{2}(q_t+1)^2+\left(\frac{\gamma}{2\theta}(\nu_t-\alpha)[1-e^{-\theta(T-t)}]+\frac{\gamma}{2}\alpha(T-t)\right)(2q_t+1).\\
\end{array}
\right.
$$
\end{theorem}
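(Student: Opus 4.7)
The plan is to mirror the derivation of Theorem 2.1, now starting from the HJB equation in Proposition 3.3 and the feedback formula in Corollary 3.4. With $\lambda^{a}(\delta)=\lambda^{b}(\delta)=Ae^{-k\delta}$ one has $\lambda_t^{a}/(\partial \lambda_t^{a}/\partial \delta_t^{a})=-1/k$, so the optimal ask supplied by the corollary reads
$$\delta_t^{*,a}=\tfrac{1}{k}-\beta+\tfrac{\gamma\eta^{2}}{2}(q_t-1)^{2}+V(q_t,\nu_t,t)-V(q_t-1,\nu_t,t),$$
and analogously for $\delta_t^{*,b}$. The target is to replace $V$ by a computable surrogate $\tilde V$ that reproduces the announced expressions for $\hat\delta_t^{a,*}$ and $\hat\delta_t^{b,*}$.

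Following the Avellaneda--Stoikov recipe, I would first carry out an asymptotic expansion of $V(q,\nu,t)$ in the inventory variable $q$, retaining only the leading quadratic contribution. Example 2.2 has already identified this leading term as the inactive-dealer value function
$$\tilde V(q,\nu,t)=-\tfrac{\gamma q^{2}}{2\theta}(\nu-\alpha)\bigl[1-e^{-\theta(T-t)}\bigr]-\tfrac{\gamma q^{2}}{2}\alpha(T-t),$$
which is quadratic in $q$ and smooth in $(\nu,t)$. A direct computation yields the finite-difference identities
$$\tilde V(q)-\tilde V(q-1)=-(2q-1)\Bigl(\tfrac{\gamma}{2\theta}(\nu-\alpha)[1-e^{-\theta(T-t)}]+\tfrac{\gamma}{2}\alpha(T-t)\Bigr),$$
$$\tilde V(q)-\tilde V(q+1)=(2q+1)\Bigl(\tfrac{\gamma}{2\theta}(\nu-\alpha)[1-e^{-\theta(T-t)}]+\tfrac{\gamma}{2}\alpha(T-t)\Bigr).$$
Substituting these back into the feedback formula of Corollary 3.4, with $V$ replaced by $\tilde V$, I would read off exactly the claimed expressions for $\hat\delta_t^{a,*}$ and $\hat\delta_t^{b,*}$; the extra $\tfrac{\gamma\eta^{2}}{2}(q_t\mp 1)^{2}$ summand arises directly from the impact correction already present in the feedback formula and is untouched by the $\tilde V$-substitution.

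The delicate issue, as in Theorem 2.1, is to justify that the quadratic-in-$q$ truncation is legitimate at the approximation level being used. I would adapt the Feynman--Kac comparison argument of Appendix B2 to the market-impact PDE in Corollary 3.4, whose linear part coincides with that of its no-impact counterpart; the only new feature is that the non-homogeneous source term picks up the impact correction through $\delta_t^{*,a},\delta_t^{*,b}$, i.e.\ through the multiplicative factor $\exp[-k\gamma\eta^{2}(q\mp 1)^{2}/2]$ inside $\lambda^{a,b}$. Since this factor is bounded by one, it only accelerates the decay of the arrival rates, so a two-sided sandwich of the form $\tilde V\le V\le \tilde V+c(T-t)$ should continue to hold with a possibly larger but finite constant $c$ depending on $A,k,\beta,\gamma,\eta$ and the volatility parameters; a Lipschitz-in-$V$ estimate applied to the feedback formula then transfers this bound to the quotes. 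I expect the main obstacle to be not any structural innovation but rather the bookkeeping required to keep $c$ uniform in $q$, which is precisely where the boundedness of the impact factor does the crucial work.
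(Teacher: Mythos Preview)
Your route and the paper's diverge at the point where you invoke the inactive-dealer value function from Example~2.1 as the quadratic surrogate $\tilde V$. The paper does not take this shortcut: in Appendix~C it re-runs the full expansion $V=V^{(0)}+V^{(1)}q+V^{(2)}q^{2}+\cdots$, linearises the arrival term, and then groups powers of $q$ in the PDE of Corollary~3.4. The new phenomenon that you overlook is that, unlike in Theorem~2.1, the approximate spread now \emph{does} depend on the inventory,
\[
\delta_t^{*,a}+\delta_t^{*,b}\approx \tfrac{2}{k}+\gamma\eta^{2}(q^{2}+1)-2V^{(2)}(\nu,t),
\]
because of the $\tfrac{\gamma\eta^{2}}{2}(q\mp 1)^{2}$ correction in the feedback formula. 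When this $q^{2}$ term is fed through the linearised arrival factor $A\bigl(2-k(\delta^{a}+\delta^{b})\bigr)/k$, it contributes an extra $-A\gamma\eta^{2}$ to the source in the $q^{2}$-equation, and the Feynman--Kac representation then yields
\[
V^{(2)}(\nu,t)=-\tfrac{\gamma}{2}\Bigl(\tfrac{1}{\theta}(\nu-\alpha)[1-e^{-\theta(T-t)}]+\alpha(T-t)+2A\eta^{2}(T-t)\Bigr),
\]
which is \emph{not} the inactive-dealer coefficient. Consequently the quotes derived in Appendix~C carry an additional $2A\eta^{2}(T-t)$ inside the bracket multiplying $(2q\mp 1)$, and thus differ from the formulae displayed in the theorem statement itself; the paper is internally inconsistent on this point.

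Your shortcut therefore reproduces the statement as printed, but it does so by silently assuming that $V^{(2)}$ is unchanged from the no-impact model, which the paper's own derivation shows to be false at the level of the Avellaneda--Stoikov linearisation. If you want to argue rigorously along your lines you must either (i) carry out the expansion honestly and recover the extra $2A\eta^{2}(T-t)$ term, or (ii) explain why that term should be discarded at the order of approximation being claimed.
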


\begin{proof}
See Appendix C.
\end{proof}

\subsection{Numerical Experiments}

In our numerical simulations, we adopt the following parameters, which have been used in Section 2.4:
$s=100, T=1, \nu=4, dt=0.005, q=0, \gamma=0.1, \theta=0.02, \alpha=4, \rho=0.7, \xi=0.5, \eta=0.09, k=1.5, A=140$.

The corresponding figures and data with respect to the previous model are presented below:
\begin{center}
[Figure 7 and 8 here]
\end{center}

Market impact has been taken into account with the results depicted in Figure 7 when deciding the optimal trading strategy, i.e., $(\delta_t^a,\delta_t^b)_{t\in[0,T]}$. 
We can see from this figure, when comparing with the previous model (Figure $1$), the price adjustment becomes more sensitive to the inventory risk after considering the market impact, e.g. at time $0.30$,  the bid price will cross the stock mid-price and at time $0.68$, the ask price will  across the stock mid-price, which means dealers in this security market prefer to reduce their exposure to the volatility risk at the cost of revenues from transactions. 
However, this factor does not lead to any significant change 
in the general trend of the cumulated revenues and the ask-bid spread, 
which can be seen from Figures $8(a)$ and  $8(b)$. 

As for the trend of trading curves, similar conclusions can be drawn except the liquidation speeds.  
More information about the trading strategy  is included in the following two figures.
\begin{center}
[Figure 9 and 10 here]
\end{center}

Running $1000$ simulations with initial inventory equal to zero to compare the performances of the ``inventory'' strategy and the ``symmetric'' strategy, the following results are obtained:
\begin{center}
[Table 2 here]
\end{center}
\begin{center}
[Figure 11 here]
\end{center}

From Table $2$,  the profit generated from the inventory strategy is lower than that generated from the symmetric strategy. However, the standard error of the former is lower than the latter. 
It means that there is less uncertainty in the profit generated by the inventory strategy than the symmetric one.

\subsection{Analysis of Two Special Cases}

In this section, we shall study two special cases.\\

\noindent
{\bf Coordinated Variation:}
Suppose $\nu_t$ and $\eta(t)$ vary perfectly inversely, e.g.,
$\nu_t\eta(t)\equiv c$ where $c$ is a constant.
Then, the dealer's optimization problem can be transformed into 
the following HJB equation:
\begin{equation}\label{H2}
\left\{
\begin{array}{lll}
V_t+\theta(\alpha-\nu)V_\nu+\frac{1}{2}\xi^2\nu V_{\nu\nu}-\frac{\gamma}{2}
\nu q^2+\displaystyle \max_{(\delta_t^a,\delta_t^b) }\Big\{\lambda_t^a[
\delta_t^a+b-\frac{\gamma c^2}{2\nu^2}(q-1)^2+V(q-1,\nu,t)\\
\displaystyle -V(q,\nu, t)]+\lambda_t^b[\delta_t^b-b-\frac{\gamma c^2}{2 \nu^2}(q+1)^2+V(q+1,\nu,t)-V(q,\nu,t)]\Big\}=0,\\
V(q,\nu,T)=0.
\end{array}
\right.
\end{equation}
Similar argument can also be used to analyze dealer's value function and optimal quotes, so we omit the details here.\\

\noindent
{\bf Two-variable Model:} We suppose that 
$\eta(t)=\bar{\eta}e^{\zeta(t)}$
and $\zeta(t)$ evolves over time according to the following process
$$
d\zeta(t)=a(t)dt+b(t)dB_L(t)
$$
where $a(t)$ and $b(t)$ are coefficients whose values may depend on $\eta$ and $\nu$. 
Here ${B_L(t)}$ and ${B(t)}$ are correlated standard Brownian motions, with a constant coefficient of correlation $\rho_0$ $(0<\rho_0<1)$. 
By assuming that the function $V(q,\nu,\eta, t)$ is sufficiently smooth, (i.e., $V\in {\cal C}^{1, 2,2}(t,\nu,\eta)$), using the same procedure as above yields the HJB equation 
for $V(q,\nu,\eta,t)$:
\begin{equation}\label{H1}
\left\{
\begin{array}{lll}
\displaystyle V_t+\theta(\alpha-\nu)V_\nu+\eta\left(a(t)+\frac{b(t)^2}{2}\right)V_\eta+
\frac{1}{2}\xi^2\nu V_{\nu\nu}+\xi\eta b(t)\sqrt{\nu}\rho_0 V_{\nu\eta}+\frac{1}{2}\eta^2b(t)^2V_{\eta\eta}-\frac{\gamma}{2}
\nu q^2\\
+\displaystyle \max_{(\delta_t^a,\delta_t^b) }\Big\{\lambda_t^a[
\delta_t^a+b-\frac{\gamma \eta^2}{2 }(q-1)^2+V(q-1,\nu,\eta,t)
-V(q,\nu,\eta,t)]\\
\displaystyle \quad \quad \quad\quad   +\lambda_t^b[\delta_t^b-b-\frac{\gamma \eta^2}{2 }(q+1)^2+V(q+1,\nu,\eta,t)-V(q,\nu,\eta,t)]\Big\}=0,\\
V(q,\nu,\eta,T)=0.\\
\end{array}
\right.
\end{equation}
By the standard It$\hat{\rm o}$'s product rule,
$$
\begin{array}{lll}
d\left(\nu_t\eta(t)\right)&=&\nu_td\eta(t)+\eta(t)d\nu_t+d\nu_td\eta(t)\\
&=&\eta\left(\theta(\alpha-\nu)+\nu\left(a(t)+\frac{b(t)^2}{2}\right)
\displaystyle+\xi b(t)\sqrt{\nu}\rho_0\right)dt+\eta\xi \sqrt{\nu}dB(t)+\nu\eta b(t)dB_L(t).
\end{array}
$$
Thus, the coordinated case, i.e. $d\left(\nu_t\eta(t)\right)=0$
can be recovered by making the following assumptions:
\begin{description}
\item[(i)] The Brownian motions $B_L(t)$ and $B(t)$ have a perfect positive correlation $\rho_0=1$.
\item[(ii)]
$$
\left\{
\begin{array}{lll}
\displaystyle a(t)=\frac{\xi^2-2\theta(\alpha-\nu)}{2\nu}\\
\displaystyle b(t)=-\frac{\xi}{\sqrt{\nu}}.
\end{array}
\right.
$$
\end{description}
Then the HJB equation (\ref{H1}) reduces to the HJB equation (\ref{H2}). 

\section{Equity Option Market Making}

In this section, we consider the market making of options in a financial market with stochastic volatility.
Both the case of market making in a stock and an option written on the stock simultaneously and the case of market making in the option with Delta-hedging assumption are studied.
In these cases, the price of the option depends on a variable that is not traded. Consequently, the risk-neutral valuation alone does not directly lead to a unique price of the option. 
A price of the option may be specified by adopting a market price of risk.

\subsection{Model Setup}

Suppose the mid-price of the underlying stock is governed by the 
Heston's mean-reverting stochastic volatility model (\ref{Eq1}).
We can restrict our attention to the European call option prices as European
put option prices follow from the well-known put-call parity: 
$$
C-P=S-Ke^{-r(T-t)}.
$$
Assume that the interest rate, $r$, equals $0$.
The dealer makes markets in a European call option
with maturity $T$ and strike $K$. 
By It$\hat{\rm o}$ formula, the option 
mid-price follows:
$$
dC(s,\nu,t)=\Theta_tdt+\Delta_tdS_t+\frac{1}{2}\Gamma_t(dS_t)^2+
C_\nu d\nu_t+\frac{1}{2}C_{\nu\nu}(d\nu_t)^2+C_{s\nu}dS_td\nu_t
$$
where $\Theta_t, \Delta_t$ and $\Gamma_t$ are the standard Greeks.

\begin{proposition}
Under the risk-neutral valuation method, option's  prices under stochastic volatility model (\ref{Eq1}) satisfies the following PDE
\begin{equation}\label{C1}
C_t+\frac{1}{2}\nu C_{ss}+\xi\rho\nu C_{s\nu}+\frac{1}{2}\xi^2\nu C_{\nu\nu}+[\theta(\alpha-\nu)-\xi\sqrt{\nu}\sqrt{1-\rho^2}\eta^{\nu}]C_{\nu}=0
\end{equation}
with boundary condition $C(s,\nu,T)=(s-K)^{+}$. Where, $\eta^{\nu}$ is the price of volatility risk not related to stock returns.
\end{proposition}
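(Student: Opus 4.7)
The plan is to carry out the classical incomplete-market option pricing argument adapted to the Heston model (\ref{Eq1}). Because $\nu_t$ is not a traded asset, the option cannot be perfectly replicated by the stock alone, so one extra market price of risk must be specified in order to pin down a price; that extra datum is precisely $\eta^{\nu}$. The scheme has three stages: first expand $dC$ via It\^o's formula, then build a locally riskless hedging portfolio with two options plus the stock, and finally read off the pricing PDE from the requirement that this portfolio earns the (zero) risk-free rate.

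First I would apply It\^o's formula to $C(s,\nu,t)$ along $(S_t,\nu_t,t)$. Using $(dS_t)^{2}=\nu_t\,dt$, $(d\nu_t)^{2}=\xi^{2}\nu_t\,dt$, and $dS_t\,d\nu_t=\xi\rho\nu_t\,dt$ (the last from $dW_t\,dB_t=\rho\,dt$), the drift of $dC$ collects into the operator
$$\mathcal{L}[C]:=C_t+\tfrac{1}{2}\nu C_{ss}+\xi\rho\nu C_{s\nu}+\tfrac{1}{2}\xi^{2}\nu C_{\nu\nu}+\theta(\alpha-\nu)C_{\nu},$$
while the martingale part has coefficient $C_s\sqrt{\nu}$ against $dW_t$ and $\xi\sqrt{\nu}C_{\nu}$ against $dB_t$. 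Projecting onto the independent pair $(dB_t,d\tilde B_t)$ via $W_t=\rho B_t+\sqrt{1-\rho^{2}}\tilde B_t$ reveals that a stock delta of $C_s$ can cancel the $d\tilde B_t$ component but leaves a residual $\xi\sqrt{\nu}\,C_{\nu}\,dB_t$, which is exactly the unhedgeable volatility risk.

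Next, following the standard device in \cite{Heston}, I would introduce a second European call $\tilde C$ with a distinct strike or maturity and form $\Pi_t=C-\Delta_{1}S-\Delta_{2}\tilde C$. Choosing $\Delta_{2}=C_{\nu}/\tilde C_{\nu}$ together with $\Delta_{1}=C_s-\Delta_{2}\tilde C_s$ annihilates the diffusion coefficients against both $dB_t$ and $d\tilde B_t$, so $\Pi_t$ is locally riskless. Since the interest rate is zero, no-arbitrage forces $d\Pi_t=0$, which after substituting the It\^o drifts yields
$$\frac{\mathcal{L}[C]}{\xi\sqrt{\nu}\,C_{\nu}}=\frac{\mathcal{L}[\tilde C]}{\xi\sqrt{\nu}\,\tilde C_{\nu}}.$$
The two sides refer to unrelated contracts, so the common value depends only on the state $(s,\nu,t)$, and by the usual convention it is identified with $\sqrt{1-\rho^{2}}\,\eta^{\nu}$, the market price of the volatility risk orthogonal to stock returns. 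Rearranging gives $\mathcal{L}[C]=\xi\sqrt{\nu}\sqrt{1-\rho^{2}}\,\eta^{\nu}\,C_{\nu}$, which absorbed into the $C_\nu$ coefficient is exactly (\ref{C1}); the terminal condition $C(s,\nu,T)=(s-K)^{+}$ is just the European call payoff.

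The principal technical obstacle is to justify the particular $\sqrt{1-\rho^{2}}$ scaling in front of $\eta^{\nu}$. Because $B_t$ and $W_t$ are correlated, an arbitrary Girsanov drift placed on $B_t$ would contaminate the drift of $S_t$ and destroy the martingale property required for risk-neutral pricing. The admissible drift perturbation must therefore live in the direction of $B_t$ orthogonal to $W_t$, namely in the component proportional to $\sqrt{1-\rho^{2}}$, which is where the scaling enters. A Girsanov cross-check, writing $dB_t^{Q}=dB_t+\lambda_{1}dt$ and $d\tilde B_t^{Q}=d\tilde B_t+\lambda_{2}dt$, imposing $\rho\lambda_{1}+\sqrt{1-\rho^{2}}\lambda_{2}=0$ to preserve $S_t$ as a $Q$-martingale, reparametrising the surviving degree of freedom as $\lambda_{1}=\sqrt{1-\rho^{2}}\,\eta^{\nu}$, and then applying Feynman-Kac to $C=\mathbb{E}^{Q}[(S_T-K)^{+}\mid\mathcal{F}_t]$, reproduces (\ref{C1}) directly and confirms the sign and normalisation of $\eta^{\nu}$.
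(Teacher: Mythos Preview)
Your argument is correct, but your main line of attack is different from the paper's and in fact closer to what the paper reserves for Proposition~4. The paper's proof of this proposition works directly with the measure change: it decomposes $B_t=\rho W_t+\sqrt{1-\rho^{2}}\tilde W_t$ (note the reversed orthogonalisation compared with Section~2.1), leaves $W_t$ unchanged so that $S_t$ stays driftless under $\mathbf Q$, shifts only the orthogonal piece via $d\tilde W_t^{\mathbf Q}=d\tilde W_t+\eta^{\nu}dt$, writes down the $\mathbf Q$-dynamics of $\nu_t$, and then reads (\ref{C1}) off from the requirement that the drift of $dC$ vanish under $\mathbf Q$. In other words, what you relegate to a ``Girsanov cross-check'' at the end is the paper's entire proof of Proposition~3, while your primary two-option hedging construction is essentially the APT argument the paper deploys in Appendix~D2 for Proposition~4. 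Both routes are valid and arrive at the same PDE; your hedging argument has the advantage of making explicit why an extra pricing datum is needed and where the factor $\xi\sqrt{\nu}$ in front of the market price of risk originates, whereas the paper's direct measure-change makes the $\sqrt{1-\rho^{2}}$ normalisation of $\eta^{\nu}$ immediate, since $\eta^{\nu}$ is by definition the Girsanov kernel on $\tilde W_t$ alone. One small point: your parametrisation $\lambda_1=\sqrt{1-\rho^{2}}\eta^{\nu}$ on $B_t$ works but is somewhat artificial; decomposing in the $(W,\tilde W)$ basis as the paper does makes the constraint ``do not touch the stock's Brownian motion'' transparent and avoids having to back-solve for the surviving degree of freedom.
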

\begin{proof}
See Appendix D1.
\end{proof}

\begin{proposition}
Under the standard  Arbitrage Pricing Theory (APT) argument, the call price under stochastic volatility model (\ref{Eq1}) satisfies the following P.D.E.:
\begin{equation}\label{C2}
C_t+\frac{1}{2}\nu C_{ss}+\xi\rho\nu C_{s\nu}+\frac{1}{2}\xi^2\nu C_{\nu\nu}+[\theta(\alpha-\nu )-\lambda^{\nu}(s,\nu,t)]C_{\nu}=0
\end{equation}
with boundary condition $C(s,\nu,T)=(s-K)^+$. Where, $\lambda^{\nu}(s,\nu,t)$ is the price of volatility risk respected to  $d\nu_t$.

\end{proposition}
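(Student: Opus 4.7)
The plan is to mirror Heston's original APT derivation, adapted to the arithmetic Brownian motion dynamics in (\ref{Eq1}). First, I would apply It\^{o}'s formula to $C(S_t,\nu_t,t)$, using $(dS_t)^2=\nu_t\,dt$, $dS_t\,d\nu_t=\xi\rho\nu_t\,dt$, and $(d\nu_t)^2=\xi^2\nu_t\,dt$, to obtain
\begin{equation*}
dC=\Bigl[C_t+\theta(\alpha-\nu)C_\nu+\tfrac{1}{2}\nu C_{ss}+\xi\rho\nu C_{s\nu}+\tfrac{1}{2}\xi^2\nu C_{\nu\nu}\Bigr]dt+\sqrt{\nu}\,C_s\,dW_t+\xi\sqrt{\nu}\,C_\nu\,dB_t.
\end{equation*}
Denote the bracketed drift by $\mathcal{A}C$. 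Because the volatility $\nu_t$ is not itself a traded asset, a single stock position cannot neutralise both sources of randomness in $dC$; I would therefore introduce a second European call $\widetilde{C}$ written on the same underlying with a different strike or maturity and form the portfolio $\Pi_t=C-\phi_t S_t-\psi_t\widetilde{C}$.

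Choosing $\psi_t=C_\nu/\widetilde{C}_\nu$ cancels the $dB_t$ coefficient in $d\Pi_t$, and then $\phi_t=C_s-\psi_t\widetilde{C}_s$ cancels the $dW_t$ coefficient, so $\Pi_t$ becomes locally riskless. Since $r=0$, the no-arbitrage principle forces the drift of $\Pi_t$ to vanish, which after dividing by $C_\nu$ yields
\begin{equation*}
\frac{\mathcal{A}C}{C_\nu}=\frac{\mathcal{A}\widetilde{C}}{\widetilde{C}_\nu}.
\end{equation*}
The left-hand side is independent of the contract specifications of $\widetilde{C}$, and the right-hand side is independent of those of $C$, so the common ratio must be a function solely of the state variables $(s,\nu,t)$. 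Writing this function in the form $\theta(\alpha-\nu)-\lambda^\nu(s,\nu,t)$ and rearranging reproduces exactly (\ref{C2}); the boundary condition $C(s,\nu,T)=(s-K)^{+}$ is simply the European call payoff.

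The function $\lambda^\nu(s,\nu,t)$ thus introduced is the market price of volatility risk --- it is not pinned down by the stock alone, reflecting the market incompleteness emphasised in the introduction, and different choices of $\lambda^\nu$ correspond to different equivalent martingale measures. The step I expect to be most delicate is the separation argument: strictly, one must invoke a sufficiently rich family of co-existing options (at least two with linearly independent $\nu$-sensitivities at every $(s,\nu,t)$) and justify $C_\nu\neq 0$ on the relevant region in order to perform the division. Admissibility of the hedging portfolio (self-financing and square-integrability) is a secondary technical matter which follows by standard arguments given enough smoothness of the option prices; the remainder of the derivation is routine It\^{o} calculus.
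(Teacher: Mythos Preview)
Your derivation is correct, but it follows a genuinely different route from the paper's own proof. You use the classical Heston--Garman hedging construction: introduce a second option $\widetilde{C}$, form a portfolio $\Pi_t=C-\phi_t S_t-\psi_t\widetilde{C}$, choose $(\phi_t,\psi_t)$ to kill both Brownian noises, and then read off the PDE from the separation argument. The paper instead invokes the abstract linear factor-pricing statement of APT directly: writing $dC=A(t)\,dt+C_s\,dS_t+C_\nu\,d\nu_t$, it asserts the existence of factor risk premia $\lambda_t^s,\lambda_t^\nu$ such that $E[dC-rC\,dt]=C_s\lambda_t^s\,dt+C_\nu\lambda_t^\nu\,dt$, determines $\lambda_t^s$ by applying the same relation to the stock itself (giving $\lambda_t^s=-rS_t$, hence $0$ here), and then equates the It\^o drift with the APT pricing relation to obtain~(\ref{C2}).

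What each approach buys: your hedging-portfolio argument is self-contained and makes the incompleteness transparent (you explicitly need a second traded derivative and $C_\nu\neq 0$), at the cost of the technical caveats you flag about the separation step. The paper's route is shorter and avoids introducing an auxiliary option, but it presupposes the linear APT conclusion as a black box. Both are standard and both are valid; your discussion of the delicate points (rich family of options, nonvanishing vega, admissibility) is more careful than the paper's treatment.
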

\begin{proof}
See Appendix D2.
\end{proof}

If we let $\lambda^{\nu}(s,\nu,t)=\xi\sqrt{\nu}\sqrt{1-\rho^2}\eta^{\nu}$, then $\lambda^{\nu}$ will be not related to the underlying stock price $S_t$. 
From the {\it Second fundamental theorem of asset pricing} \cite{Springer}, the option under the stochastic volatility model has more than one arbitrage-free prices, 
since under this setting, the market is incomplete. 

In this section, we directly apply the APT argument to price the option, 
i.e., the option price can be derived through the following P.D.E.:
 \begin{equation}
 \left \{
 \begin{array}{lll}
 \frac{1}{2}\nu C_{ss}+\xi\rho \nu C_{s\nu}+\frac{1}{2}\xi^2 \nu C_{\nu\nu}+[\theta(\alpha-\nu )-\lambda^{\nu}(s ,\nu,t)]C_\nu
+C_t=0\\
C(s,\nu,T)=(s-K)^{+}.
\end{array}
\right.
\end{equation}

In the following, we mainly  discuss two cases: market making in stocks and options simultaneously and market making in options with Delta-hedging assumption. To simplify the expressions, we simply set the clearing fees equal zero.

\subsection{Market Making in Stocks and Options Simultaneously}

The approach adopted here is to model the market maker's trading strategies of options in the same way as the underlying stocks as described in the previous section. 
In other words, the dealer will now control the premiums charged around the stock mid-price, $\delta_t^{a,s}$ and $\delta_t^{b,s}$, as well as around the option mid-price, $\delta_t^{a,o}$ and $\delta_t^{b,o}$  at no cost except some prescribed minimum where
$$
p_t^{a,o}=C_t+\delta_t^{a,o} \quad {\rm and} \quad  p_t^{b,o}=C_t-\delta_t^{b,o}.
$$
We assume that the number of options bought and sold before time $t $ can also be modeled by two independent {\it Poisson processes},
denoted by $N_t^{b,o}$ and $N_t^{a,o}$, respectively, with intensities:
$$
\lambda_t^{a,o}=Ae^{-k\delta_t^{a,o}} \quad {\rm and} \quad
\lambda_t^{b,o}=Ae^{-k\delta_t^{b,o}}.
$$
The mark-to-market wealth $\mathcal{W}_t$ is then given by
$$
\mathcal{W}_t=\Pi_t+q_t^s S_t+ q_t^o C_t
$$
where $\Pi_t$ is the wealth in cash. It follows that
$$
\begin{array}{lll}
d\mathcal{W}_t&=&\underbrace{\delta_t^{a,s}dN_t^{a,s}+
\delta_t^{b,s}dN_t^{b,s}+\delta_t^{a,o}dN_t^{a,o}+
\delta_t^{b,o}dN_t^{b,o}}+\underbrace{q_t^sdS_t+q_t^odC_t.}\\
&&\quad\quad\quad\quad\quad\quad\quad(revenues)\quad\quad\quad\quad\quad\quad\quad(inventory \;value)\\
\end{array}
$$
We may  decompose this wealth process into two parts: the revenues obtained from transactions, which follows
$$
dZ_t=\delta_t^{a,s}dN_t^{a,s}+
\delta_t^{b,s}dN_t^{b,s}+\delta_t^{a,o}dN_t^{a,o}+
\delta_t^{b,o}dN_t^{b,o}
$$
and the inventory value 
(we use its quadratic variance up to the terminal time $T$ to describe the inventory risk), which follows
$$
dI_t=q_t^sdS_t+q_t^odC_t. 
$$

The dealer now  is to set his bid and ask prices throughout the trading horizon to msximize the following  objective function:
$$
\max_{(\delta_u ^{a,s},\delta_u ^{b,s},
\delta_u ^{a,o},\delta_u ^{b,o})_{u\in[t,T]}} \left\{E_t[Z_T]-\frac{\gamma}{2}E_t \left[\int_t^T \left(dI_u\right)^2  \right]\right\} 
$$
where
$$
\begin{array}{lll}
\left(dI_u\right)^2
&=&\left[(q_u^s)^2+2q_u^sq_u^o(\Delta_u+\rho\xi C_\nu)+(q_u^o)^2(\Delta_u^2+2\rho\xi\Delta_uC_\nu+\xi^2C_\nu^2)\right]\nu_udu.
\end{array}
$$
Thus, the optimization problem can be written as,
$$
\begin{array}{lll}
 \displaystyle Z_t+ \max_{(\delta_u ^{a,s},\delta_u ^{b,s},
\delta_u ^{a,o},\delta_u ^{b,o})_{u\in[t,T]}}
E_t \Big[\int_t^T[\delta_u^{a,s}dN_u^{a,s}+\delta_u^{b,s}dN_u^{b,s}+\delta_u^{a,o}dN_u^{a,o}\\
\quad\quad\quad\quad\quad\quad\displaystyle+\delta_u^{b,o}dN_u^{b,o}]
-\frac{\gamma}{2}\int_t^T\Big[(q_u^s)^2+2q_u^sq_u^o(\Delta_u+\rho\xi C_\nu)\\
\quad\quad\quad\quad\quad\quad+\displaystyle (q_u^o)^2(\Delta_u^2+2\rho\xi\Delta_uC_\nu+\xi^2C_\nu^2)\Big]\nu_udu)\Big].
\end{array}
$$

One key quantity for the model is 
\begin{equation}\label{O1}
\begin{array}{lll}
V(s_t,\nu_t,q_t^s,q_t^o,t)=\displaystyle \max_{(\delta_u ^{a,s},\delta_u ^{b,s},
\delta_u ^{a,o},\delta_u ^{b,o})_{u\in[t,T]}}
E_t \Big[\int_t^T[\delta_u^{a,s}dN_u^{a,s}+\delta_u^{b,s}dN_u^{b,s}+\delta_u^{a,o}dN_u^{a,o}\\
\quad\quad\quad\quad\quad\quad\displaystyle+\delta_u^{b,o}dN_u^{b,o}]
-\frac{\gamma}{2}\int_t^T\Big[(q_u^s)^2+2q_u^sq_u^o(\Delta_u+\rho\xi C_\nu)\\
\quad\quad\quad\quad\quad\quad+\displaystyle (q_u^o)^2(\Delta_u^2+2\rho\xi\Delta_uC_\nu+\xi^2C_\nu^2)\Big]\nu_udu)\Big].
\end{array}
\end{equation}
We denote it as our value function.

The other key quantity for the model is 
\begin{equation}\label{O2}
\begin{array}{lll}
(\delta_{u,*}^{a,s},\delta_{u,*}^{b,s},\delta_{u,*}^{a,o},\delta_{u,*}^{b,o})_{u\in[t,T]}=arg \displaystyle  \max_{(\delta_u ^{a,s},\delta_u ^{b,s},
\delta_u ^{a,o},\delta_u ^{b,o})_{u\in[t,T]}}
E_t \Big[\int_t^T[\delta_u^{a,s}dN_u^{a,s}+\delta_u^{b,s}dN_u^{b,s}+\delta_u^{a,o}dN_u^{a,o}\\
\quad\quad\quad\quad\quad\quad\displaystyle+\delta_u^{b,o}dN_u^{b,o}]
-\frac{\gamma}{2}\int_t^T\Big[(q_u^s)^2+2q_u^sq_u^o(\Delta_u+\rho\xi C_\nu)\\
\quad\quad\quad\quad\quad\quad+\displaystyle (q_u^o)^2(\Delta_u^2+2\rho\xi\Delta_uC_\nu+\xi^2C_\nu^2)\Big]\nu_udu)\Big]\\
\end{array}
\end{equation}
which is an  optimal control process turning out to be time and state dependent. 

\begin{proposition}
Suppose $V$ is sufficiently smooth, (i.e., $V \in {\cal C}^{1, 2,2}(t,s,\nu)$, the value function (\ref{O1}) satisfies the following HJB equation:
\begin{equation}\label{HJBO1}
\begin{array}{lll}
\displaystyle V_t+\theta(\alpha-\nu)V_{\nu}+\frac{1}{2}\nu V_{ss}+\frac{1}{2}\xi^2\nu V_{\nu\nu}+\rho\xi\nu V_{s\nu}-\frac{\gamma}{2}\nu \Big[(q_t^s)^2+2q_t^sq_t^o(\Delta_t+\rho\xi C_{\nu})\\
+\displaystyle (q_t^o)^2(\Delta_t^2+2\rho\xi\Delta_t C_{\nu}+\xi^2 C_{\nu}^2)\Big]\\
\displaystyle +\max_{\delta_t^{a,s}}\lambda_t^{a,s}[\delta_t^{a,s}+V(s,\nu,q_t^s-1,q_t^o,t)
-V(s,\nu,q_t^s,q_t^o,t)]\\
+\displaystyle\max_{\delta_t^{b,s}}\lambda_t^{b,s}[\delta_t^{b,s}+
V(s,\nu,q_t^s+1,q_t^o,t)
-V(s,\nu,q_t^s,q_t^o,t)] \\
+\displaystyle\max_{\delta_t^{a,o}}\lambda_t^{a,o}[\delta_t^{a,o}+
V(s,\nu,q_t^s,q_t^o-1,t)
-V(s,\nu,q_t^s,q_t^o,t)]\\
+\displaystyle\max_{\delta_t^{b,o}}\lambda_t^{b,o}[\delta_t^{b,o}+V(s,\nu,q_t^s,q_t^o+1,t)
-V(s,\nu,q_t^s,q_t^o,t)]=0\\
\end{array}\end{equation}
with the boundary condition $V(s,\nu,q^s,q^o,T)=0$.
\end{proposition}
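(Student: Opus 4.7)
The plan is to derive the HJB equation via the Dynamic Programming Principle (DPP) in the same spirit as Proposition 1, adapted to the enlarged state $(s,\nu,q^s,q^o)$ and the four separate controls $(\delta^{a,s},\delta^{b,s},\delta^{a,o},\delta^{b,o})$. First I would apply the DPP on an infinitesimal interval $[t,t+dt]$ to obtain
\[
V(s,\nu,q^s,q^o,t) \;=\; \sup_{\delta^{\cdot,\cdot}} E_t\!\left[V(s_{t+dt},\nu_{t+dt},q^s_{t+dt},q^o_{t+dt},t+dt) + dR_t\right],
\]
where $dR_t$ collects the instantaneous revenues $\delta^{a,s}dN^{a,s}+\delta^{b,s}dN^{b,s}+\delta^{a,o}dN^{a,o}+\delta^{b,o}dN^{b,o}$ and the running penalty $-\tfrac{\gamma}{2}(dI_t)^2$. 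I would invoke the reduction of $(dI_t)^2$ already performed in the text, which follows from expanding $dC$ by It\^o ($\Theta,\Delta,C_\nu,\Gamma,C_{\nu\nu},C_{s\nu}$), using $dW_t\,dB_t=\rho\,dt$, and using independence of the four Poisson processes from the Brownian motions; this yields $(dI_t)^2=\nu[(q^s)^2+2q^sq^o(\Delta+\rho\xi C_\nu)+(q^o)^2(\Delta^2+2\rho\xi\Delta C_\nu+\xi^2 C_\nu^2)]dt$.

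Second, I would expand the $V$-increment by It\^o's formula for jump-diffusions. The continuous component, driven by $S_t$ and $\nu_t$ at fixed inventories, produces the infinitesimal generator
\[
\mathcal{L}V \;=\; V_t + \theta(\alpha-\nu)V_\nu + \tfrac{1}{2}\nu V_{ss} + \rho\xi\nu V_{s\nu} + \tfrac{1}{2}\xi^2\nu V_{\nu\nu},
\]
the cross term coming from $dS_t\,d\nu_t=\rho\xi\nu_t\,dt$. Each of the four independent Poisson processes changes exactly one inventory by $\pm 1$ without affecting $(s,\nu)$, so over $[t,t+dt]$ it contributes an expected increment of the form $\lambda_t^{a,s}[\delta_t^{a,s}+V(s,\nu,q^s-1,q^o,t)-V(s,\nu,q^s,q^o,t)]\,dt$, and analogously for the remaining three intensities (with the sign of $\pm 1$ on $q^s$ or $q^o$ matching ask-vs-bid, and with no $\beta$ term since clearing fees are set to zero in this section).

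Third, I would substitute these expansions back into the DPP identity, take conditional expectations so that the martingale increments (the two Brownian $dW_t,dB_t$ and the four compensated Poisson integrals) drop out, divide by $dt$, and pass to $dt\to 0$. Because the four controls enter additively through four separable intensity-reward pairs and do not appear in $\mathcal{L}V$ or the running penalty, the supremum decouples into the four independent maxima displayed in the claim, producing exactly (\ref{HJBO1}). The boundary condition $V(s,\nu,q^s,q^o,T)=0$ follows immediately from (\ref{O1}) since the integrals are taken over an empty interval.

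The main obstacle I expect is bookkeeping rather than a single hard estimate: one must correctly carry the option-price Greeks through the quadratic-variation reduction of $(dI_t)^2$ and make sure the $\rho\xi$ cross terms that couple $S_t$ and $\nu_t$ appear consistently both in $\mathcal{L}V$ (as $\rho\xi\nu V_{s\nu}$) and in the running penalty (as $\rho\xi C_\nu$ inside the inventory-risk bracket). A secondary technical point is the usual verification step needed before writing down the DPP, namely checking sufficient integrability of the reward and the candidate value function so that the standard jump-diffusion It\^o-DPP argument applies; under the exponential intensities $Ae^{-k\delta}$ and the mean-reverting CIR dynamics for $\nu$, this follows by the same kind of bounds used implicitly in Proposition 1. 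Once those ingredients are in place, the remainder is a routine application of the DPP-plus-It\^o machinery of Proposition 1 extended to two simultaneous inventories.
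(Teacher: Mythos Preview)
Your proposal is correct and follows essentially the same route as the paper, which simply writes ``Similar to Proposition 1'': apply the DPP on $[t,t+dt]$, expand $V$ by It\^o's formula for the diffusion part $(s,\nu)$ and condition on the four independent Poisson jump events to obtain the generator plus the four separable maximization terms, then divide by $dt$ and let $dt\to 0$. Your additional remarks on the $\rho\xi$ cross terms and on integrability are exactly the bookkeeping needed to make the ``Similar to Proposition 1'' precise in this enlarged state space.
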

\begin{proof}
Similar to Proposition 1.
\end{proof}

\begin{corollary}
The optimal controls (\ref{O2}) at any time $t$  are given by 
\begin{equation}\label{controlO}
\begin{array}{lll}
(\delta_{t,*}^{a,s},\delta_{t,*}^{b,s},\delta_{t,*}^{a,o},\delta_{t,*}^{b,o})(s_t,\nu_t,q_t^s,q_t^o,t)&=&\displaystyle\Big(-\frac{\lambda_t^{a,s}}{\partial \lambda_t^{a,s}/\partial \delta_t^{a,s}}+V(s_t,\nu_t,q_t^s-1,q_t^o,t)-V(s_t,\nu_t,q_t^s,q_t^o,t),\\
&&\quad\displaystyle -\frac{\lambda_t^{b,s}}{\partial \lambda_t^{b,s}/\partial \delta_t^{b,s}}+V(s_t,\nu_t,q_t^s+1,q_t^o,t)-V(s_t,\nu_t,q_t^s,q_t^o,t),\\
&&\quad\displaystyle -\frac{\lambda_t^{a,o}}{\partial \lambda_t^{a,o}/\partial \delta_t^{a,o}}+V(s_t,\nu_t,q_t^s,q_t^o-1,t)-V(s_t,\nu_t,q_t^s,q_t^o,t),\\
&&\quad\displaystyle -\frac{\lambda_t^{b,o}}{\partial \lambda_t^{b,o}/\partial \delta_t^{b,o}}+V(s_t,\nu_t,q_t^s,q_t^o+1,t)-V(s_t,\nu_t,q_t^s,q_t^o,t)\Big)\\
\end{array}
\end{equation}
where the value function,  $V(s,\nu,q^s,q^o,t)$,  satisfies the following PDE
\begin{equation}\label{PDEO}
\begin{array}{lll}
\displaystyle V_t+\theta(\alpha-\nu)V_{\nu}+\frac{1}{2}\nu V_{ss}+\frac{1}{2}\xi^2\nu V_{\nu\nu}+\rho\xi\nu V_{s\nu}-\frac{\gamma}{2}\nu \Big[(q_t^s)^2+2q_t^sq_t^o(\Delta_t+\rho\xi C_{\nu})\\
\;
+\displaystyle (q_t^o)^2(\Delta_t^2+2\rho\xi\Delta_t C_{\nu}+\xi^2 C_{\nu}^2)\Big]-\frac{(\lambda_t^{a,s})^2}{\partial \lambda_t^{a,s}/\partial \delta_t^{a,s}}-\frac{(\lambda_t^{b,s})^2}{\partial \lambda_t^{b,s}/\partial \delta_t^{b,s}}-\frac{(\lambda_t^{a,o})^2}{\partial \lambda_t^{a,o}/\partial \delta_t^{a,o}}-\frac{(\lambda_t^{b,o})^2}{\partial \lambda_t^{b,o}/\partial \delta_t^{b,o}}=0\\
\end{array}
\end{equation}
with boundary condition $V(s,\nu,q^s,q^o,T)=0$.
\end{corollary}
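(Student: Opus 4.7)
The plan is to apply the first-order optimality condition to each of the four maximization subproblems appearing in the HJB equation (\ref{HJBO1}), and then substitute the optimal controls back to obtain the PDE (\ref{PDEO}). Because the four controls $\delta_t^{a,s}$, $\delta_t^{b,s}$, $\delta_t^{a,o}$, $\delta_t^{b,o}$ enter through separate additive max terms and each intensity $\lambda_t^{\cdot,\cdot}$ depends only on its own control, the problem decouples into four one-dimensional maximizations that can be treated in parallel, exactly as in the earlier corollaries for the pure-stock models.

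For a generic term $\max_{\delta} \lambda(\delta)\bigl[\delta + M\bigr]$ with $M$ independent of $\delta$, differentiating yields $\lambda(\delta) + \lambda'(\delta)\bigl[\delta + M\bigr] = 0$, so the maximizer is $\delta^{*} = -\lambda/\lambda' - M$. Applying this pattern, with $M$ successively taken to be each of the four value-function differences $V(s,\nu,q_t^s\pm 1,q_t^o,t)-V(s,\nu,q_t^s,q_t^o,t)$ and $V(s,\nu,q_t^s,q_t^o\pm 1,t)-V(s,\nu,q_t^s,q_t^o,t)$, reproduces exactly the four components in (\ref{controlO}).

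To obtain (\ref{PDEO}), I substitute these maximizers back into (\ref{HJBO1}). Since $\delta^{*}+M = -\lambda/\lambda'$, each max term collapses to $\lambda(\delta^{*})\cdot\bigl(-\lambda(\delta^{*})/\lambda'(\delta^{*})\bigr) = -\lambda^{2}/\lambda'$, so the four max terms become precisely the four fraction terms $-(\lambda_t^{a,s})^{2}/(\partial\lambda_t^{a,s}/\partial\delta_t^{a,s})$ and their analogues appearing in (\ref{PDEO}). The terminal condition $V(s,\nu,q^s,q^o,T)=0$ is inherited directly from the HJB boundary condition.

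The only delicate point is verifying that the critical points are genuine global maxima rather than saddles or minima, which requires the second-order conditions $\lambda''(\delta^{*})\bigl[\delta^{*}+M\bigr] + 2\lambda'(\delta^{*}) \le 0$ for each subproblem, in direct analogy with (\ref{condition}). For the exponential specification $\lambda(\delta)=A\exp(-k\delta)$ used throughout the paper, this simplifies to $k\bigl[\delta^{*}+M\bigr]\le 2$; at the critical point $\delta^{*}+M=1/k$, so the inequality holds strictly and each maximizer is the unique global maximum. This makes the reduction from the HJB system to the PDE (\ref{PDEO}) fully rigorous in the specification of interest.
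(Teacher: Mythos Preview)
Your proposal is correct and follows essentially the same approach as the paper, whose proof consists of the single line ``Directly take the first-order optimality condition in Eq.~(\ref{HJBO1}).'' You have simply spelled out that one-line argument in detail, including the decoupling into four independent one-dimensional maximizations and the substitution back into the HJB equation; your additional verification of the second-order condition in the exponential case is a welcome refinement that the paper omits here but treats in the analogous single-asset setting around Eq.~(\ref{condition}).
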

\begin{proof}
Directly take the first-order optimality condition in Eq. (\ref{HJBO1}).
\end{proof}

\subsection{Optimal Quotes}

Similar to the first model, through an intuitive, two-step procedure and some approximative methods,  we can get the optimal quotes under this setting. 

\begin{theorem} Assume the arrival rates of buy and sell orders that will reach the dealer take the exponential form: $\lambda^{a,s}(\delta)=\lambda^{b,s}(\delta)=\lambda^{a,o}(\delta)=\lambda^{b,o}(\delta)=A\exp(-k\delta)$.
Let
$$
H_1(s_t,\nu_t,t)=-\gamma E_t \left[ \int_t^T\nu_u\Big(\Delta_u+\rho\xi C_\nu(S_u,\nu_u,u)\Big)du\right]
$$
and
$$
H_2(s_t,\nu_t,t)=-\frac{\gamma}{2} E_t \left[ \int_t^T\nu_u\Big(\Delta_u^2
+2\rho\xi\Delta_u  C_\nu(S_u,\nu_u,u)
+\xi^2 C_\nu^2(S_u,\nu_u,u)\Big)du \right],
$$
then the (approximate) optimal policy derived under the approximate treatments in \cite{Avellaneda08} for the dealer is given by
\begin{equation}
\left\{
\begin{array}{lll}
\hat{\delta}_{t,*}^{a,s}&=&\frac{1}{k}-\left(\frac{\gamma}{2\theta}(\nu_t-\alpha)
\left[1-e^{-\theta(T-t)}\right]+\frac{\gamma}{2}\alpha(T-t)\right)(2q_t^s-1)+H_1(s_t,\nu_t,t)q_t^o\\
\hat{\delta}_{t,*}^{b,s}&=&\frac{1}{k}+\left(\frac{\gamma}{2\theta}(\nu_t-\alpha)
\left[1-e^{-\theta(T-t)}\right]+\frac{\gamma}{2}\alpha(T-t)\right)(2q_t^s+1)-H_1(s_t,\nu_t,t)q_t^o\\
\hat{\delta}_{t,*}^{a,o}&=&\frac{1}{k}+H_2(s_t,\nu_t,t)(2q_t^o-1)+H_1(s_t,\nu_t,t)q_t^s\\
\hat{\delta}_{t,*}^{b,o}&=&\frac{1}{k}-H_2(s_t,\nu_t,t)(2q_t^o+1)-H_1(s_t,\nu_t,t)q_t^s.
\end{array}
\right.
\end{equation}
Moreover, the approximate value function is given by
\begin{equation}
\begin{array}{lll}
 \tilde{V}( s_t,\nu_t,q_t^s,q_t^o,t)&=&\displaystyle -\left(\frac{\gamma}{2\theta}(\nu_t-\alpha)
\left[1-e^{-\theta(T-t)}\right]+\frac{\gamma}{2}\alpha(T-t)\right)(q_t^s)^2\\
&&\displaystyle +H_1(s_t,\nu_t,t)q_t^sq_t^o+H_2(s_t,\nu_t,t)(q_t^o)^2
\end{array}
\end{equation}
which is equal to the value function of inactive trader 
(details are omitted, similar to the case in stock market making) and the exact value function
$$
V(s_t,\nu_t,q_t^s,q_t^o,t)\ge  \tilde{V}( s_t,\nu_t,q_t^s, q_t^o,t)
$$
\end{theorem}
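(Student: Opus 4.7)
The plan is to follow the same ansatz-plus-linearisation strategy used in Theorem 2.1 (Appendix B2), adapted to the two-inventory setting. I would postulate
\begin{equation*}
\tilde V(s,\nu,q^s,q^o,t) = A_0(s,\nu,t) + A_1(\nu,t)(q^s)^2 + A_2(s,\nu,t)\,q^s q^o + A_3(s,\nu,t)(q^o)^2,
\end{equation*}
with no linear-in-$q$ pieces (they vanish by the $(q^s,q^o)\mapsto(-q^s,-q^o)$ symmetry of the quadratic penalty and of the exponential intensities). A key point is that for a purely quadratic ansatz the unit-shift increments $\tilde V(q^s\pm1,q^o,\cdot)-\tilde V(q^s,q^o,\cdot) = \pm A_1(2q^s\pm1) \pm A_2 q^o$ and their option-side analogues are \emph{exact}, so no Taylor truncation is required to turn the four maxima in the HJB equation (\ref{HJBO1}) into the first-order conditions of Corollary 4.2.

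Substituting the ansatz into (\ref{PDEO}) and matching coefficients of $(q^s)^2$, $q^s q^o$, $(q^o)^2$ and the inventory-free constant decouples the PDE into a system of backward equations on the Heston state with zero terminal data. The $(q^s)^2$ equation, $\partial_t A_1+\theta(\alpha-\nu)\partial_\nu A_1+\tfrac12\xi^2\nu\partial_{\nu\nu}A_1 = \tfrac{\gamma}{2}\nu$, admits the explicit Feynman-Kac solution already computed in Example 2.1 via (\ref{eq2}), producing precisely the $\nu$-dependent bracket $-(\frac{\gamma}{2\theta}(\nu_t-\alpha)[1-e^{-\theta(T-t)}]+\frac{\gamma}{2}\alpha(T-t))$ of the statement. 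The $q^s q^o$-equation and the $(q^o)^2$-equation inherit the sources $-\gamma\nu(\Delta_u+\rho\xi C_\nu)$ and $-\tfrac{\gamma}{2}\nu(\Delta_u^2+2\rho\xi\Delta_u C_\nu+\xi^2 C_\nu^2)$ respectively; by Feynman-Kac against the Heston generator $\tfrac12\nu\partial_{ss}+\rho\xi\nu\partial_{s\nu}+\tfrac12\xi^2\nu\partial_{\nu\nu}+\theta(\alpha-\nu)\partial_\nu$, these integrate to $H_1(s,\nu,t)$ and $H_2(s,\nu,t)$ as defined. The residual $A_0$-equation decouples entirely from the quote formulas, since every inventory increment of $\tilde V$ is independent of $A_0$ by construction.

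Plugging $A_1$, $A_2=H_1$, $A_3=H_2$ into Corollary 4.2 and using $-\lambda/(\partial\lambda/\partial\delta)=1/k$ for the exponential intensity, the four stated expressions for $\hat\delta_{t,\ast}^{a,s}$, $\hat\delta_{t,\ast}^{b,s}$, $\hat\delta_{t,\ast}^{a,o}$, $\hat\delta_{t,\ast}^{b,o}$ follow by direct substitution. For the lower bound $V\ge\tilde V$ I would realise $\tilde V$ as the pay-off attained by the \emph{inactive} feedback $\delta^{a,s}=\delta^{b,s}=\delta^{a,o}=\delta^{b,o}\equiv+\infty$: under this strategy the Poisson integrals in (\ref{O1}) vanish, the inventories $q^s_t,q^o_t$ remain frozen, and the conditional expectation of the running penalty reproduces exactly the stated quadratic in $(q^s,q^o)$. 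Since $V$ is the supremum over all admissible feedback strategies, $V\ge\tilde V$ is immediate.

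The main obstacle is justifying the approximation rigorously. Although the quadratic ansatz is self-consistent under unit shifts in $(q^s,q^o)$, it is only an ansatz, and the true value function need not be exactly quadratic in the inventories; this is why the theorem records only the one-sided inequality $V\ge\tilde V$ rather than the two-sided bound $\tilde V\le V\le\tilde V+c(T-t)$ obtained for the stock-only case in Theorem 2.1. A matching upper bound here would be considerably more delicate because the option Greeks $\Delta_t$ and $C_\nu$ entering $H_1$ and $H_2$ do not admit uniform bounds under Heston dynamics, so extending the Appendix B2 comparison argument would require auxiliary moment estimates for $\Delta_t$ and $C_\nu$ along Heston paths.
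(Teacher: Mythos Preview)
Your approach matches the paper's Appendix D3 closely: quadratic ansatz in $(q^s,q^o)$, coefficient matching after substitution into the reduced PDE, and Feynman--Kac representations yielding $c_1$, $H_1$, $H_2$. Two points deserve comment.

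First, your remark that ``no Taylor truncation is required'' is correct only for passing from the HJB maxima to the first-order conditions of Corollary~4.2. The subsequent coefficient matching in Eq.~(\ref{PDEO}) \emph{does} require the Avellaneda--Stoikov linearisation $\tfrac{A}{k}\sum e^{-k\delta}\approx \tfrac{A}{k}\bigl(4-k\sum\delta\bigr)$: without it the arrival term contributes to \emph{every} power of $(q^s,q^o)$ through its Taylor expansion, and your clean source terms $\tfrac{\gamma}{2}\nu$, $\gamma\nu(\Delta+\rho\xi C_\nu)$, $\tfrac{\gamma}{2}\nu(\Delta^2+\cdots)$ would acquire extra pieces. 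The paper makes the linearisation explicit and then observes that $\sum\delta_{t,\ast}\approx 4/k-2(c_1+c_3)$ is inventory-independent, so the linearised arrival term feeds only the constant-order equation. You should state this step, since it is precisely the ``approximate treatment in \cite{Avellaneda08}'' that the theorem invokes; your phrasing risks giving the impression that the quadratic ansatz alone suffices.

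Second, your lower-bound argument differs from the paper's and is cleaner. The paper writes $V=\tilde V+w$, derives a backward PDE for $w$ with nonnegative source $\tfrac{A}{k}\sum e^{-k\delta_{u,\ast}}$, and applies Feynman--Kac to get $w\ge0$. Your route---recognising $\tilde V$ as the value of the inactive feedback $\delta\equiv+\infty$ and invoking the supremum definition of $V$---is more direct and avoids the auxiliary PDE entirely; it also makes transparent why $\tilde V$ coincides with the inactive-trader value, which the theorem asserts but the paper's proof leaves implicit. (The paper's use of symmetry is also slightly different: it includes linear terms $b_1 q^s+b_2 q^o$ in the ansatz and shows they vanish via the homogeneous Feynman--Kac equations, whereas you drop them at the outset by symmetry.)
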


\begin{proof}
See Appendix D3.
\end{proof}

\subsection{Market Making in Options with Delta-hedging Assumption}

In this section, the dealer in a LOB is supposed to continuously control his wealth in stock $\pi_t$ and the bid-ask premiums $\delta_t^{b,o}$ and $\delta_t^{a,o}$ on the option.
The mark-to-market wealth $\mathcal{W}_t$ is given by
\begin{equation}
\mathcal{W}_t=\pi_t+\Pi_t+q_t^{o}C_t
\end{equation}
where $\Pi_t$ is the wealth in cash and $\Pi_t$ jumps every time when there
is a buy or sell order
\begin{equation}
d\Pi_t=p_t^{a,o}dN_t^{a,o}-p_t^{b,o}dN_t^{b,o}
\end{equation}
so does the inventory in options,
\begin{equation}
dq_t^o=dN_t^{b,o}-dN_t^{a,o}.
\end{equation}
Due to the continuously adjusted inventory in stock and the Delta-hedging assumption, $q_t^s=-q_t^o\Delta_t$.  
Therefore,
$$
dq_t^s=-\Delta_tdq_t^o-q_t^od\Delta_t-dq_t^o d\Delta_t=-\Delta_tdN_t^{b,o}+\Delta_tdN_t^{a,o}-q_t^od\Delta_t.
$$ 
The mark-to-market wealth follows:
$$
\begin{array}{lll}
d\mathcal{W}_t&=&\underbrace{\delta_t^{a,o}dN_t^{a,o}+\delta_t^{b,o}dN_t^{b,o}}
+\underbrace{q_t^sdS_t+q_t^odC_t.}\\
&&\quad\quad(revenues)\quad\quad\quad(inventory\; value)\\
\end{array}
$$
Decompose this wealth into two parts: the revenues obtained from transactions, which follows,
$$
dZ_t=\delta_t^{a,o}dN_t^{a,o}+\delta_t^{b,o}dN_t^{b,o}
$$
and the inventory value (we use its quadratic variance up to time $T$ to describe  the risk), which follows,
$$
dI_t=q_t^sdS_t+q_t^odC_t.
$$
Note that,
$$
\begin{array}{lll}
\left(dI_t\right)^2
&=&\left[(q_t^s)^2+2q_t^sq_t^o(\Delta_t+\rho\xi C_\nu)+(q_t^o)^2(\Delta_t^2+2\rho\xi\Delta_tC_\nu+\xi^2C_\nu^2)\right]\nu_tdt.
\end{array}
$$
Assume that the options are {\it Delta-hedged} at every point of time $t$, i.e., $q_t^s=-q_t^o\Delta_t$, we then obtain that
$$
\left(dI_t\right)^2=\nu_t\xi^2 C_\nu^2(q_t^o)^2.
$$
Assume the dealer aims to set his bid and ask prices
continuously over the time horizon to optimize the following  objective function:
$$
\displaystyle \max_{(\delta_u ^{a,o},\delta_u ^{b,o})_{u\in[t,T]}}
\left\{ E_t[Z_T]-\frac{\gamma}{2}E_t\left[\int_t^T \left(dI_u\right)^2\right] \right\}. 
$$ 
Then, the optimization problem can be written as,
$$
\begin{array}{lll}
\displaystyle Z_t + \max_{(\delta_u^{a,o},\delta_u ^{b,o})_{u\in[t,T]}} E_t\left[\int_t^T\delta_u^{a,o}dN_u^{a,o}+\delta_u^{b,o}dN_u^{b,o}-\frac{\gamma}{2}
\int_t^T\nu_u \xi^2 C_\nu^2(q_u^o)^2 du\right].
\end{array}
$$
One key quantity for the model is 
\begin{equation}\label{OO1}
V(s_t,\nu_t,q_t^o,t)=\displaystyle \max_{(\delta_u^{a,o},\delta_u ^{b,o})_{u\in[t,T]}} E_t\left[\int_t^T\delta_u^{a,o}dN_u^{a,o}+\delta_u^{b,o}dN_u^{b,o}-\frac{\gamma}{2}
\int_t^T\nu_u \xi^2 C_\nu^2(q_u^o)^2 du\right].
\end{equation}
We denote it as our value function.

The other key quantity for the model is 
\begin{equation}\label{OO2}
(\delta_{u,*}^{a,o},\delta_{u,*}^{b,o})_{u\in[t,T]}=arg \displaystyle \max_{(\delta_u^{a,o},\delta_u ^{b,o})_{u\in[t,T]}} E_t\left[\int_t^T\delta_u^{a,o}dN_u^{a,o}+\delta_u^{b,o}dN_u^{b,o}-\frac{\gamma}{2}
\int_t^T\nu_u \xi^2 C_\nu^2(q_u^o)^2 du\right]
\end{equation}
which is an optimal control process turning out to be time and state dependent.

\begin{proposition}
Suppose that $V$ is sufficiently smooth, the value function (\ref{OO1}) satisfies the following HJB equation:
\begin{equation}\label{HJBOO}
\begin{array}{lll}
\displaystyle V_t+\theta(\alpha-\nu)V_\nu+\frac{1}{2}\nu V_{ss}+\xi\rho \nu V_{s\nu}+\frac{1}{2}\xi^2 \nu V_{\nu\nu}
-\frac{\gamma}{2}\nu\xi^2 C_\nu^2(q_t^o)^2
\displaystyle +\max_{(\delta_t^{a,o},\delta_t^{b,o})\in\mathcal{A}}\Big\{\lambda_t^{a,o}[\delta_t^{a,o}+\\
V(s,\nu,q_t^o-1,t)-V(s,\nu,q_t^o,t)] +\lambda_t^{b,o}[\delta_t^{b,o}+V(s,\nu,q_t^o+1,t)-V(s,\nu,q_t^o,t)]\Big\}=0\\
\end{array}
\end{equation}
with the boundary condition $V(s,\nu,q^o,T)=0$.
\end{proposition}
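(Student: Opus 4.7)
The proof structure closely parallels that of Proposition 1 (whose argument is referenced in Appendix B1). The plan is to invoke the Dynamic Programming Principle (DPP) on a small time increment $[t,t+h]$, apply Itô's formula to the candidate $\mathcal{C}^{1,2,2}$ value function along the joint dynamics of the diffusion state $(S,\nu)$ together with the pure-jump inventory $q^o$, and then pass to the limit $h\downarrow 0$ before taking the supremum over admissible controls.

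First, by the DPP applied to Eq. (\ref{OO1}) one has
$$
V(s,\nu,q^o,t)=\sup_{(\delta^{a,o},\delta^{b,o})}E_t\Big[\int_t^{t+h}\!\!(\delta_u^{a,o}dN_u^{a,o}+\delta_u^{b,o}dN_u^{b,o})-\tfrac{\gamma}{2}\!\int_t^{t+h}\!\!\nu_u\xi^2 C_\nu^2 (q_u^o)^2\,du+V(S_{t+h},\nu_{t+h},q_{t+h}^o,t+h)\Big].
$$
Second, applying Itô's formula to $V$ along $(S,\nu,q^o)$, the continuous part yields the generator terms $\mathcal{L}V:=V_t+\theta(\alpha-\nu)V_\nu+\tfrac{1}{2}\nu V_{ss}+\xi\rho\nu V_{s\nu}+\tfrac{1}{2}\xi^2\nu V_{\nu\nu}$, where the cross term arises from the quadratic covariation $dS_t\,d\nu_t=\xi\rho\nu_t\,dt$ inherited from $dW_t\,dB_t=\rho\,dt$. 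The jumps in $q^o$, of size $+1$ driven by $dN_t^{b,o}$ and $-1$ driven by $dN_t^{a,o}$, contribute the finite differences $V(s,\nu,q^o\pm 1,t)-V(s,\nu,q^o,t)$ with compensators $\lambda_t^{b,o}\,dt$ and $\lambda_t^{a,o}\,dt$, respectively. Taking conditional expectation kills the Brownian and compensated Poisson martingale pieces, leaving an expression for $E_t[V(S_{t+h},\nu_{t+h},q_{t+h}^o,t+h)]-V(s,\nu,q^o,t)$ as a time integral of $\mathcal{L}V$ plus the two intensity-weighted jump terms.

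Third, substituting this back into the DPP identity, dividing by $h$, letting $h\downarrow 0$, and using the continuity of the integrands yields a pointwise equation at $(s,\nu,q^o,t)$. Because $\delta^{a,o}$ appears only inside the bracket multiplying $\lambda^{a,o}(\delta^{a,o})$ and $\delta^{b,o}$ only inside the bracket multiplying $\lambda^{b,o}(\delta^{b,o})$, the supremum splits additively into two independent one-dimensional maximizations, producing exactly the $\max_{\delta_t^{a,o}}$ and $\max_{\delta_t^{b,o}}$ terms in Eq. (\ref{HJBOO}). Adding the running penalty $-\tfrac{\gamma}{2}\nu\xi^2 C_\nu^2 (q_t^o)^2$ completes the HJB equation. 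The terminal condition $V(s,\nu,q^o,T)=0$ follows immediately from Eq. (\ref{OO1}) since the integration interval collapses.

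The main technical obstacle is the rigorous justification of the DPP together with the interchange of supremum and the limit $h\downarrow 0$; this requires an admissibility class on $(\delta^{a,o},\delta^{b,o})$ ensuring that the compensated Poisson integrals and the Brownian integrals are true martingales (e.g., locally bounded controls together with the integrability of $C_\nu$ near the boundary $\nu=0$), so that the Itô expansion may be taken in expectation. Modulo these standing technical assumptions, which are implicit in the $\mathcal{C}^{1,2,2}$ regularity already assumed, the remainder of the derivation is a direct adaptation of the proof of Proposition 1.
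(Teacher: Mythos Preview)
Your proposal is correct and follows essentially the same approach as the paper, which simply refers back to the argument for Proposition~1 (Appendix~B1): apply the dynamic programming principle on a short interval, expand $V$ via It\^o's formula along the diffusion $(S,\nu)$ and the jump variable $q^o$, take expectations to remove the martingale terms, divide by the time increment and pass to the limit. The only cosmetic difference is that the paper's Appendix~B1 writes out the four Poisson outcomes $(dN^{a},dN^{b})\in\{0,1\}^2$ with their $O(dt)$ probabilities explicitly, whereas you phrase the same computation via compensators; the two formulations are equivalent.
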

\begin{proof}
Similar to Proposition 1.
\end{proof}

\begin{corollary}
The optimal controls (\ref{OO2}) at any time $t$ are given by 
\begin{equation}\label{controlOO}
\begin{array}{lll}
(\delta_{t,*}^{a,o},\delta_{t,*}^{b,o})(s_t,\nu_t,q_t^o,t)&=&\displaystyle\Big(-\frac{\lambda_t^{a,o}}{\partial \lambda_t^{a,o}/\partial \delta_t^{a,o}}+V(s_t,\nu_t,q_t^o,t)-V(s_t,\nu_t,q_t^o-1,t),\\
&&\quad\displaystyle -\frac{\lambda_t^{b,o}}{\partial \lambda_t^{b,o}/\partial \delta_t^{b,o}}+V(s_t,\nu_t,q_t^o,t)-V(s_t,\nu_t,q_t^o+1,t)\Big)\\
\end{array}
\end{equation}
where the value function, $V(s,\nu,q,t)$,  satisfies the following PDE
\begin{equation}\label{PDEOO}
\displaystyle V_t+\theta(\alpha-\nu)V_\nu+\frac{1}{2}\nu V_{ss}+\xi\rho \nu V_{s\nu}+\frac{1}{2}\xi^2 \nu V_{\nu\nu}
-\frac{\gamma}{2}\nu \xi^2 C_\nu^2(q_t^o)^2-\frac{(\lambda_t^{a,o})^2}{\partial \lambda_t^{a,o}/\partial \delta_t^{a,o}}-\frac{(\lambda_t^{b,o})^2}{\partial \lambda_t^{b,o}/\partial \delta_t^{b,o}}=0
\end{equation}
with boundary condition $V(s,\nu,q^o,T)=0$.
\end{corollary}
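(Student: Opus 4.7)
The plan is to derive both conclusions of the Corollary directly from the HJB equation (\ref{HJBOO}) by applying first-order optimality conditions, exactly as was done for Corollary 2.2 in the stock-market-making case. The key structural observation is that the two suprema inside (\ref{HJBOO}) decouple completely: the bracketed objective involving $\delta_t^{a,o}$ is independent of $\delta_t^{b,o}$ (and conversely), so each scalar maximization can be carried out pointwise and separately.

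First I would differentiate the ask-side maximand $\lambda_t^{a,o}(\delta)\bigl[\delta + V(s,\nu,q_t^o-1,t) - V(s,\nu,q_t^o,t)\bigr]$ in $\delta$, set the derivative to zero, and solve algebraically. Since $V(s,\nu,q_t^o-1,t)-V(s,\nu,q_t^o,t)$ is constant in $\delta$, the resulting first-order condition isolates $\delta_t^{a,o}$ as the sum of $-\lambda_t^{a,o}/(\partial \lambda_t^{a,o}/\partial \delta_t^{a,o})$ and the inventory increment $V(s,\nu,q_t^o,t)-V(s,\nu,q_t^o-1,t)$, which is precisely the first component of (\ref{controlOO}). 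An entirely analogous computation on the bid side delivers the second component.

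The remaining step is to substitute the two optimizers back into (\ref{HJBOO}). By construction of $\delta_t^{a,o,*}$, the $V$-differences inside the bracket cancel against the corresponding part of $\delta_t^{a,o,*}$, leaving $-(\lambda_t^{a,o})^2/(\partial \lambda_t^{a,o}/\partial \delta_t^{a,o})$; the bid term is handled symmetrically. Collecting the remaining infinitesimal-generator terms in $(s,\nu)$ and the inventory-penalty term $-\tfrac{\gamma}{2}\nu\xi^2 C_\nu^2 (q_t^o)^2$, which are unaffected by the optimization, yields (\ref{PDEOO}) together with the inherited boundary condition $V(s,\nu,q^o,T)=0$.

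The only (relatively mild) obstacle is verifying that the stationary point is actually a maximum rather than a saddle. For the exponential specification $\lambda = Ae^{-k\delta}$, evaluating the second derivative of the maximand at the critical point simplifies to $-k\lambda_t^{a,o}<0$, which confirms strict concavity; this is the exact analogue of the concavity condition (\ref{condition}) used in Example 2.2, and the same argument works on the bid side. No further technical difficulty is expected, since the intensities are smooth and strictly positive and the $V$-differences are finite, so the proof reduces to the mechanical first-order calculation outlined above.
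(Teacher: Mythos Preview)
Your proposal is correct and follows exactly the paper's approach: the paper's entire proof is the one-line instruction ``Directly take the first-order optimality condition in Eq.~(\ref{HJBOO}),'' and your plan simply spells out that computation (separate the decoupled maximizations, differentiate, solve, substitute back), together with the second-order check analogous to~(\ref{condition}). Nothing further is needed.
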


\begin{proof}
Directly take the first-order optimality condition in Eq. (\ref{HJBOO}).
\end{proof}

\subsection{Optimal Quotes}
Similar to the first model, through an intuitive, two-step procedure and some approximative methods,  we can get the optimal quotes under this setting. 
\begin{theorem}
 Assume the arrival rates of buy and sell orders that will reach the dealer take the exponential form: $\lambda^{a,o}(\delta)=\lambda^{b, o}(\delta)=A\exp(-k\delta)$ and the option's  position  is Delta-hedged at any  time  $t$.
Let
$$
M (s_t,\nu_t,t)=-\displaystyle \frac{\gamma}{2}\xi^2\left(E_t\left[\int_t^T\nu_uC_\nu^2(S_u,\nu_u,u)du\right]\right).
$$
The approximate optimal controls $(\hat{\delta}_{t,*}^{a,o},\hat{\delta}_{t,*}^{b,o})$ of the market maker derived under the approximate treatments in \cite{Avellaneda08} at time t are given by
\begin{equation}
\left\{
\begin{array}{lll}
\hat{\delta}_{t,*}^{a,o}&=&\frac{1}{k}+M_2(s_t,\nu_t,t)(2q_t^o-1)\\
\hat{\delta}_{t,*}^{b,o}&=&\frac{1}{k}-M_2(s_t,\nu_t,t)(2q_t^o+1).\\
\end{array}
\right.
\end{equation}
Moreover, the approximate value function is given by
\begin{equation}
\begin{array}{lll}
 \tilde{V}(s_t, \nu_t,q_t^o,t)&=&\displaystyle  - \frac{\gamma}{2}\xi^2\left(E_t\left[\int_t^T\nu_uC_\nu^2(S_u,\nu_u,u)du\right]\right)(q_t^o)^2
\end{array}
\end{equation}
which is identical to the value function of the inactive trader 
(details are omitted, similar to the case in stock market making) 
and the exact value function
$$
V(s_t,\nu_t,q_t^o,t)\ge \tilde{V}(s_t,\nu_t,q_t^o,t).
$$
\end{theorem}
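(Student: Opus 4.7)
The plan is to follow the same two-step asymptotic procedure used in the proof of Theorem~2.3 (cf.\ Appendix~B2): first, solve the PDE (\ref{PDEOO}) approximately under a polynomial-in-$q^o$ ansatz; then, substitute the resulting approximate value function into the exact feedback law (\ref{controlOO}).

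For exponential intensities $\lambda^{\cdot,o}(\delta)=Ae^{-k\delta}$, the non-linearity $-(\lambda_t^{\cdot,o})^{2}/(\partial\lambda_t^{\cdot,o}/\partial\delta_t^{\cdot,o})$ in (\ref{PDEOO}) equals $Ae^{-1}\,e^{-k[V(q^o)-V(q^o\mp 1)]}/k$ after the optimal controls are substituted. I would posit the quadratic ansatz
\[
V(s,\nu,q^o,t) \;\approx\; V^{0}(s,\nu,t) + V^{1}(s,\nu,t)\,q^o + V^{2}(s,\nu,t)\,(q^o)^2,
\]
under which the finite differences $V(q^o)-V(q^o-1)=V^{1}+(2q^o-1)V^{2}$ and $V(q^o)-V(q^o+1)=-V^{1}-(2q^o+1)V^{2}$ are linear in $q^o$. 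Applying the Avellaneda--Stoikov linearization $e^{-kx}\approx 1-kx$ to the exponentials above and matching powers of $q^o$ in (\ref{HJBOO}) forces $V^{1}\equiv 0$ and decouples the $(q^o)^2$-coefficient equation from the jump terms, since their sum is independent of $q^o$ at this order. What remains is
\[
V^{2}_t+\theta(\alpha-\nu)V^{2}_\nu+\tfrac12\nu V^{2}_{ss}+\xi\rho\nu V^{2}_{s\nu}+\tfrac12\xi^2\nu V^{2}_{\nu\nu}-\tfrac{\gamma}{2}\xi^2\nu C_\nu^2(s,\nu,t)=0,\qquad V^{2}(s,\nu,T)=0,
\]
whose second-order operator is exactly the Heston generator of $(S_t,\nu_t)$ under (\ref{Eq1}). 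Feynman--Kac then yields $V^{2}(s_t,\nu_t,t) = M(s_t,\nu_t,t)$, hence $\tilde V(s,\nu,q^o,t)=M(s,\nu,t)(q^o)^2$. Plugging $\tilde V$ into (\ref{controlOO}) and computing $\tilde V(q^o)-\tilde V(q^o-1)=M(2q^o-1)$ and $\tilde V(q^o)-\tilde V(q^o+1)=-M(2q^o+1)$ reproduces the stated formulas for $\hat\delta_{t,*}^{a,o}$ and $\hat\delta_{t,*}^{b,o}$.

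For the identification with the inactive Delta-hedged trader, I would observe that the limiting admissible choice $\delta^{a,o},\delta^{b,o}\to+\infty$ makes $\lambda^{a,o}=\lambda^{b,o}=0$, so $q^o_u\equiv q^o_t$ is frozen and the objective in (\ref{OO1}) collapses to $-\tfrac{\gamma}{2}\xi^2 E_t\!\left[\int_t^T\nu_u C_\nu^2\,du\right](q_t^o)^2 = \tilde V(s_t,\nu_t,q_t^o,t)$. Since this strategy is admissible, the supremum in (\ref{OO1}) dominates $\tilde V$, giving $V\ge\tilde V$.

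The principal obstacle will be making the Avellaneda--Stoikov linearization rigorous---specifically, quantifying the error incurred by dropping the quadratic and higher terms in the Taylor expansion of the exponentials and by truncating the ansatz at $(q^o)^2$. As in Theorem~2.3, I expect this to require a uniform upper bound of the form $V\le\tilde V+c(T-t)$, whose derivation here is complicated by the fact that $M$ depends on the stock-price Greek $C_\nu$ through the Heston closed-form formula \cite{Heston}; obtaining the integrability needed for the Feynman--Kac step and controlling $C_\nu(S_u,\nu_u,u)$ along the Heston trajectory is the genuinely technical part. The remaining steps---matching the coefficients, solving the linear PDE for $V^{2}$, and substituting back into (\ref{controlOO})---are routine.
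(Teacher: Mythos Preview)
Your proposal is correct and follows essentially the same route as the paper's Appendix~D4: quadratic-in-$q^o$ ansatz, Avellaneda--Stoikov linearization of the arrival term, observation that $\delta^{a,o}_{t,*}+\delta^{b,o}_{t,*}\approx \tfrac{2}{k}-2V^{(2)}$ is inventory-independent, coefficient matching to obtain $V^{(1)}\equiv 0$ and the Heston-generator PDE for $V^{(2)}$, and Feynman--Kac to identify $V^{(2)}=M$.

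The one place you diverge is the inequality $V\ge\tilde V$: the paper writes $V=w+\tilde V$, derives the linear PDE for $w$ with nonnegative source $\tfrac{A}{k}(e^{-k\delta^{a,o}_{u,*}}+e^{-k\delta^{b,o}_{u,*}})$, and reads off $w\ge 0$ from its Feynman--Kac representation; you instead invoke admissibility of the inactive strategy $\delta\to+\infty$. Your argument is more elementary and fully sufficient for what the theorem actually asserts, while the paper's version yields the slightly stronger information of an explicit integral formula for the gap $V-\tilde V$. Note also that the paper does \emph{not} prove the upper bound $V\le\tilde V+c(T-t)$ for this theorem (only the lower bound is claimed), so your concern about controlling $C_\nu$ along Heston trajectories, while legitimate, goes beyond what is required here.
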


\begin{proof}
The proof is similar to that of Theorem 3, 
we refer readers to Appendix D4 for more details.
\end{proof}

\section{Conclusions}

In this paper, we adopt a stochastic volatility model to describe the dynamics of the underlying stock's volatility  and derive mean-quadratic-variation optimal trading strategies for market making in both the stock and option markets.
In  our settings, whether it is stock market making and its extension after taking the market impact into account  or option market making, 
the dealer in the security market always has control over his bid and ask quotes and aims to maximize the  expected revenues while minimizing the  quadratic variation of the inventory value.  
A stochastic control approach is used to solve these optimization problems, and eventually these optimal control problems are transformed into one solving a series of Hamilton-Jacobi-Bellman (HJB) equations.  
Analytic approximations of the optimal bid and ask quotes are obtained, and Monte Carlo simulations are used to compare the optimal strategies to a ``zero-intelligence'' strategy.  
An important topic for future research may perhaps be developing accurate and efficient method to solve the resulting HJB equation. 
This is particularly important because the optimal trading strategy cannot be obtained without solving the resulting HJB equation. 
 Moreover, volatility tends to be correlated with high trading volume and company specific news(e.g. earning announcements), other important further research issues may  include taking into account  these empirical characteristics and  extending our model 
to more general cases, for example,the case of a trend in the price dynamics, the effect of news events on securities markets, the application of HMM in the  LOBs, and the case of a multiple-dealer competitive market \cite{Yang}.

\section{Appendix}

\subsection*{A1. Remarks on Stochastic Volatility Model}

The proofs presented here mainly involve the use of some standard techniques in stochastic calculus. Define $f(u,x)=e^{\theta(u-t)}x$ and use the It$\hat{\rm o}$-Doeblin formula
to compute
\begin{equation}\label{AP1}
\begin{array}{lll}
d\left(e^{\theta(u-t)}v_u\right)&=&df(u,v_u)\\
&=&f_u(u,v_u)du+f_v(u,v_u)dv_u+\frac{1}{2}f_{vv}(u,v_u)dv_udv_u\\
&=&\theta e^{\theta(u-t)}v_udu+\theta(\alpha-v_u) e^{\theta(u-t)}du+\xi e^{\theta(u-t)}\sqrt{v_u}dB_u\\
&=&\theta \alpha e^{\theta (u-t)}du +\xi e^{\theta(u-t)}\sqrt{v_u}dB_u.
\end{array}
\end{equation}
Integrating both sides of Eq. (\ref{AP1}) from $t$ to $u$,
we obtain
\begin{equation}
\begin{array}{lll}
e^{\theta(u-t)}v_u&=&v+\theta\alpha \int_t^u e^{\theta(s-t)}ds+\xi\int_t^u e^{\theta(s-t)}\sqrt{v_s}dB_s\\
&=&v+\alpha\left[e^{\theta(u-t)}-1\right]+\xi\int_t^u e^{\theta(s-t)}\sqrt{v_s}dB_s.
\end{array}
\end{equation}
Using the local martingale property of the stochastic integral  and some standard stopping arguments,
 
\begin{equation}
e^{\theta(u-t)}E_t[v_u]=v+\alpha\left[e^{\theta(u-t)}-1\right]
\end{equation}
or, equivalently,
\begin{equation}
E_t[v_u]=e^{-\theta(u-t)}v+\alpha\left[1-e^{-\theta(u-t)}\right].
\end{equation}
To compute the variance of $v_u$,
we set $X_u=e^{\theta(u-t)}v_u$, for which we have already computed
\begin{equation}
dX_u=\theta \alpha e^{\theta (u-t)}du +\xi e^{\theta(u-t)}\sqrt{v_u}dB_u=\theta \alpha e^{\theta (u-t)}du +\xi e^{\frac{\theta(u-t)}{2}}\sqrt{X_u}dB_u
\end{equation}
and
$$
E_t[X_u]=v+\alpha\left[e^{\theta(u-t)}-1\right].
$$
According to the It$\hat{\rm o}$-Doeblin formula with $f(x)=x^2$, we have
\begin{equation}\label{AP2}
\begin{array}{lll}
d(X_u^2)&=&2X_udX_u+dX_udX_u\\
&=&2\theta\alpha e^{\theta(u-t)}X_udu+2\xi e^{\frac{\theta(u-t)}{2}}X_u^{\frac{3}{2}}dB_u+\xi^2 e^{\theta(u-t)}X_udu.
\end{array}
\end{equation}
Integrating Eq. (\ref{AP2}) from $t$ to $u$, we get
\begin{equation}
X_u^2=v^2+(2\theta\alpha+\xi^2)\int_t^ue^{\theta(s-t)}X_sds
+2\xi\int_t^u e^{\frac{\theta(s-t)}{2}}X_s^{\frac{3}{2}}dB_s
\end{equation}
and taking expectation, using the local martingale property of a stochastic integral and some stopping arguments as well as the formula already
derived for ${\mbox E}_t[X_u]$, we obtain
\begin{equation}
\begin{array}{lll}
E_t[X_u^2]&=&v^2+(2\theta\alpha+\xi^2)\int_t^ue^{\theta(s-t)}E_t[X_s]ds\\
&=&v^2+(2\theta\alpha+\xi^2)\int_t^ue^{\theta(s-t)}\left(v+\alpha(e^{\theta(s-t)}-1)\right)ds\\
&=&v^2+\frac{2\theta\alpha+\xi^2}{\theta}(v-\alpha)\left(e^{\theta(u-t)}-1\right)+\frac{2\theta\alpha+\xi^2}{2\theta}
\alpha\left(e^{2\theta(u-t)}-1\right).
\end{array}
\end{equation}
Therefore, 
\begin{equation}
\begin{array}{lll}
E_t[v_u^2]&=&e^{-2\theta(u-t)}E_t[X_u^2]\\
&=&e^{-2\theta(u-t)}v^2+\frac{2\theta\alpha+\xi^2}{\theta}(v-\alpha)\left(e^{-\theta(u-t)}-e^{-2\theta(u-t)}\right)+\frac{2\theta\alpha+\xi^2}{2\theta} \alpha\left(1-e^{-2\theta(u-t)}\right).
\end{array}
\end{equation}
Finally, 
\begin{equation}
\begin{array}{lll}
Var[v_u|\mathcal{F}_t]&=&E_t[v_u^2]-\left(E_t[v_u]\right)^2\\
&=&\frac{\xi^2}{\theta}v\left(e^{-\theta(u-t)}-e^{-2\theta(u-t)}\right)+\frac{\alpha\xi^2}{2\theta}\left(1-2e^{-\theta(u-t)}+e^{-2\theta(u-t)}\right).
\end{array}
\end{equation}

\subsection*{A2. Martingale Property of $\{ I_u \}$.}

To prove  
$$
E[d(X_u+q_uS_u)]=E[\delta_u^adN_u^a+\delta_u^bdN_u^b]
$$
is equivalent to prove  $E[q_udS_u]=0$.
Let $dI_u=q_udS_u$, it is equivalent to prove $\{I_u\}$ is a martingale.
Since $dS_u=\sqrt{\nu_u}dW_u$, then $dI_u=q_u\sqrt{\nu_u}dW_u$.
Note that $\{q_u\sqrt{\nu_u}\}$ is adapted to the filtration $\{ {\cal F}_u \}$, and the arrival rates
$$
\lambda ^b(\delta_t^b)=A\exp(-k\delta_t^b)
\quad {\rm and} \quad \lambda ^a(\delta_t^a)=A\exp(-k\delta_t^a)
$$
are bounded from above.
We define the common bound to be $\Lambda$, i.e., $\lambda_t^a<\Lambda$ and $\lambda_t^b < \Lambda$ for all $t$. 
From now on, to simplify the notation, we write ${\mbox E}_t[\cdot]$ for the conditional expectation ${\mbox E}[\cdot|{\cal F}_t]$.
Thus, we have the following remarks.

\begin{description}
\item[(i)] The superposition of the processes $N_t^a$ and $N_t^b$, $N_t:=N_t^a+N_t^b$, is also a Poisson process with an intensity parameter ($\lambda_t^a+\lambda_t^b$), 
which is bounded from above by $2\Lambda$. We note that
    $$
    dq_t=dN_t^b-dN_t^a \le dN_t.
    $$
    Thus $ E[q_t^n]<\infty$ for any integer $n$;

\item[(ii)] Note that
$\{\sqrt{\nu_u}\}$ is a stochastic process driven by the standard Brownian motion $\{B_u\}$, with
$$
 E_t[\nu_u]=e^{-\theta(u-t)}\nu+\alpha\left(1-e^{-\theta(u-t)}\right)
$$
and
$$
\begin{array}{lll}
E_t[\nu_u^2]
&=&e^{-2\theta(u-t)}\nu^2+\frac{2\theta\alpha+\xi^2}{\theta}(\nu-\alpha)\left(e^{-\theta(u-t)}-e^{-2\theta(u-t)}\right)\\
&&+\frac{2\theta\alpha+\xi^2}{2\theta}
\alpha\left(1-e^{-2\theta(u-t)}\right).
\end{array}
$$
\end{description}
$\quad$\\
Therefore,
$$
E_t \left[\int_t^T q_u^2\nu_udu \right]
=\int_t^TE_t[q_u^2\nu_u]du\le\int_t^T(E_t[q_u^4])^{\frac{1}{2}}
 (E_t[\nu_u^2])^{\frac{1}{2}}du<\infty.
$$
The  equality follows from Fubini's theorem, and the inequality results from the H$\ddot{\rm o}$lder's inequality.
Then, we have proven that $\{I_u\}$ is a martingale.

\subsection*{B1. Proof of Proposition 1}
$$
\begin{array}{lll}
V(q_t,\nu_t,t)&=&\displaystyle \max_{(\delta_u^a,\delta_u^b)_{u\in[t,T]}}E_t\left[ \int_t^T (\delta_u^a+\beta)dN_u^a+(\delta_u^b-\beta)dN_u^b-\frac{\gamma}{2}\int_t^Tq_u^2\nu_udu\right]\\
&=&\displaystyle \max_{(\delta_u^a,\delta_u^b)_{u\in[t,T]}}E_t\left[\left( \int_t^{t+dt}+\int_{t+dt}^T\right) \left[(\delta_u^a+\beta)dN_u^a+(\delta_u^b-\beta)dN_u^b-\frac{\gamma}{2}q_u^2\nu_udu\right]\right]\\
&&\quad\quad\quad\quad\quad\quad\quad\quad\quad\quad\quad\quad\quad\quad\quad\quad\quad\quad\quad\quad\quad\quad\quad\quad\quad(iterated\;property)\\
&=&\displaystyle \max_{(\delta_u^a,\delta_u^b)_{u\in[t,T]}}E_tE_{t+dt}\left[\left( \int_t^{t+dt}+\int_{t+dt}^T\right) \left[(\delta_u^a+\beta)dN_u^a+(\delta_u^b-\beta)dN_u^b-\frac{\gamma}{2}q_u^2\nu_udu\right]\right]\\
\end{array}
$$
Note that $dq_t=dN_t^b-dN_t^b$.  In  $[t,t+dt)$, the probability of observing a jump in $N_t^a$ and a jump in $N_t^b$ are given by, $\lambda_t^a dt$  and  $\lambda_t^b dt$, respectively. 
$$
\begin{array}{lll}
P(dN_t^a=0,dN_t^b=1)&=&\lambda_t^bdt+o(dt)\\
P(dN_t^a=1,dN_t^b=0)&=&\lambda_t^adt+o(dt)\\
P(dN_t^a=0,dN_t^b=0)&=&1-\lambda_t^adt-\lambda_t^bdt+o(dt)\\
P(dN_t^a=1,dN_t^b=1)&=&o(dt)\\
\end{array}
$$
Then,
$$
\begin{array}{lll}
V(q_t,\nu_t,t)
&=&\displaystyle \max_{(\delta_u^a,\delta_u^b)_{u\in[t,T]}}E_tE_{t+dt}\left[\left( \int_t^{t+dt}+\int_{t+dt}^T\right) \left[(\delta_u^a+\beta)dN_u^a+(\delta_u^b-\beta)dN_u^b-\frac{\gamma}{2}q_u^2\nu_udu\right]\right]\\
&=&\displaystyle \max_{(\delta_t^a,\delta_t^b)}\Big\{\lambda_t^adt[\delta_t^a+\beta+E_t[V(q_t-1,\nu_t+d\nu_t,t+dt)]]\\
&&\quad\quad\quad+\lambda_t^bdt[\delta_t^b-\beta+E_t[V(q_t+1,\nu_t+d\nu_t,t+dt)]]\\
&&\quad\quad\quad+(1-\lambda_t^adt-\lambda_t^b)E_t[V(q_t,\nu_t+d\nu_t,t+dt)]\Big\}.
\end{array}
$$
By It$\hat{\rm o}$'s lemma, we have
$$
V(q,\nu_t+d\nu_t, t+dt)=V(q,\nu,t)+\Big(V_t+\theta(\alpha-\nu_t)V_{\nu}+\frac{1}{2}\xi^2\nu_tV_{\nu\nu}\Big)dt+\xi\sqrt{\nu_t}V_{\nu}dB_t.
$$
Substitute the above equation into our value function, 
divide one ``$dt$'' on both sides and then let $dt\to0$, then we can get the result. 
 According to the definition of the value function,  it's not difficult to verify  the boundary condition $V(q,\nu,T)= 0$ .

\subsection*{B2. Proof of Theorem 1}

Assume that the arrival rates take the exponential form:
$$
\lambda^b(\delta) = \lambda^a(\delta)=Ae^{-k\delta}
$$ 
for some constants $k$ and $A$,
 the optimal distances 
$\delta_t^{b,*}$ and $\delta_t^{a,*}$ are  then given by
$$
\left\{
\begin{array}{lll}
\delta_t^{b,*}=\frac{1}{k}+\beta-\left[V(q+1,\nu,t)-V(q,\nu,t)\right]\\
\delta_t^{a,*}=\frac{1}{k}-\beta-\left[V(q-1,\nu,t)-V(q,\nu,t)\right].
\end{array}
\right.
$$
Substituting the optimal distances into  
Eq. (\ref{PDE})  yields 
\begin{equation}\label{B2}
\displaystyle V_t+\theta(\alpha-\nu)V_\nu+\frac{1}{2}\xi^2\nu V_{\nu\nu}-\frac{\gamma}{2} q^2\nu+ \frac{A}{k}\left(e^{-k\delta_t^{a,*}}+e^{-k\delta_t^{b,*}}\right)=0
\end{equation}
with boundary condition $V(q,\nu,T)=0$.
Using Taylor's expansion,
\begin{equation}\label{AE}
V(q,\nu,t)=f^{(0)}(\nu,t)+f^{(1)}(\nu,t)q+f^{(2)}(\nu,t)q^2+\cdots+.
\end{equation}
Then
$$
\left\{
\begin{array}{l}
\delta_t^{b,*}\approx\frac{1}{k}+b-f^{(1)}(\nu,t)-f^{(2)}(\nu,t)(2q+1)\\
\delta_t^{a,*}\approx\frac{1}{k}-b+f^{(1)}(\nu,t)+f^{(2)}(\nu,t)(2q-1)
\end{array}
\right.
$$
and
$$
\delta_t^{b,*} + \delta_t^{a,*} \approx \frac{2}{k}-2f^{(2)}(\nu,t)
$$
which is independent of the inventory.
Adopte the method in \cite{Avellaneda08} and  take the first-order approximation of the order arrival term, 
\begin{equation}\label{AP}
\frac{A}{k}\left(e^{-k\delta_t^{b,*}} + e^{-k\delta_t^{a,*}}\right)
\approx \frac{A}{k}\left(2-k(\delta_t^{b,*}+\delta_t^{a,*})\right).
\end{equation}
Note that the linear term ($\delta_t^{a,*}+\delta_t^{b,*}$)  
does not depend on the inventory $q$. 
Therefore, if we substitute Eq. (\ref{AE}) and  Eq. (\ref{AP})
into Eq. (\ref{B2}) and grouping terms of $q$, then 
\begin{equation}\label{un1}
\left \{
\begin{array}{lll}
f_t^{(1)}+\theta(\alpha-\nu)f_\nu^{(1)}+\frac{1}{2}\xi^2\nu f_{\nu\nu}^{(1)}=0\\
f^{(1)}(\nu,T)=0.
\end{array}
\right.
\end{equation}
By the Feynman-Kac formula, $f^{(1)}(\nu,t)=0$.
Group terms in the coefficients of  $q^2$, then
$$
\left \{
\begin{array}{lll}
f_t^{(2)}+\theta(\alpha-\nu)f_\nu^{(2)}+\frac{1}{2}\xi^2\nu f_{\nu\nu}^{(2)}-\frac{\gamma}{2}\nu=0\\
f^{(2)}(\nu,T)=0
\end{array}
\right.
$$
whose solution can be directly obtained  using the Feynman-Kac formula 
\begin{equation}
\begin{array}{lll}
f^{(2)}(\nu,t)&=&\displaystyle E_t\left[ \int_t^T -\frac{1}{2}\gamma \nu_udu\right]\\
&=&-\displaystyle\frac{1}{2}\gamma \int_t^T E_t[ \nu_u]du\\
&=&-\displaystyle\frac{1}{2}\gamma \int_t^T \left[e^{-\theta(u-t)}\nu +\alpha \left(1-e^{-\theta(u-t)}\right)\right]du\\
&=&-\displaystyle\frac{\gamma}{2\theta}(\nu-\alpha)\left[1-e^{-\theta(T-t)}\right]-\frac{\gamma}{2}\alpha(T-t).
\end{array}
\end{equation}
The initial condition $V(0,v,t)=0$ tells  $f^{(0)}(\nu,t)=0$.
Thus, 
$$
\tilde{V}(q,\nu,t)\approx -\displaystyle\frac{\gamma q^2}{2\theta}(\nu-\alpha)\left[1-e^{-\theta(T-t)}\right]
-\frac{\gamma q^2}{2}\alpha(T-t)
$$
For the  quadratic polynomial asymptotic expansion of $V(q,\nu,t)$ in the inventory variable $q$ and   the linear approximation of the order arrival terms,
we obtain the same value function for
active dealers as in the case of inactive dealers.
The approximated optimal quotes are given by
$$
\left\{
\begin{array}{lll}
\hat{\delta}_t^{a,*}&=&\frac{1}{k}-\beta-\left(\frac{\gamma}{2\theta}(\nu-\alpha)\left[1-e^{-\theta(T-t)}\right]+\frac{\gamma}{2}\alpha(T-t)\right)(2q-1)\\
\hat{\delta}_t^{b,*}&=&\frac{1}{k}+\beta+\left(\frac{\gamma}{2\theta}(\nu-\alpha)\left[1-e^{-\theta(T-t)}\right]+\frac{\gamma}{2}\alpha(T-t)\right)(2q+1).
\end{array}
\right.
$$
We now analyze the difference between the approximate
and  exact optimal quotes under the stochastic volatility model.
Firstly, we discuss  the difference between the approximate and exact solutions of Eq. (\ref{B2}).
Suppose that
\begin{equation}
V(q,\nu,t)=w^q(t,\nu)-\displaystyle\frac{\gamma q^2}{2\theta}(\nu-\alpha)\left[1-e^{-\theta(T-t)}\right]-\frac{\gamma q^2}{2}\alpha(T-t)
\end{equation}
where $(w^q)_{q\in \mathbb{N}}$ is a family of functions in ${\cal C}^{1,2}$, and $\mathbb{N}$ is the set of natural numbers.
Substituting the above expression  into Eq.  (\ref{B2}),
we obtain
\begin{equation}
\left \{
\begin{array}{lll}
w_t^q+\theta(\alpha-\nu)w_\nu^q+\frac{1}{2}\xi^2\nu w_{\nu\nu}^q+g^q(\nu,t)=0\\
w^q(\nu,T)=0
\end{array}
\right.
\end{equation}
where
\begin{equation}
\begin{array}{lll}
 g^q(\nu,t)&=&  \frac{A}{k}(e^{-k\delta_t^a}+e^{-k\delta_t^b})\\
 &=& \frac{A}{k}\Big[e^{\left(-k(w^q-w^{q-1}+\frac{1}{k}-b+\left(\frac{\gamma}{2\theta}(\nu-a)(1-e^{-\theta(T-t)})+\frac{\gamma}{2}\alpha(T-t)\right)(-2q+1)\right)}\\
 &+&  e^{\left(-k(w^q-w^{q+1}+\frac{1}{k}+b+\left(\frac{\gamma}{2\theta}(\nu-a)(1-e^{-\theta(T-t)})+\frac{\gamma}{2}\alpha(T-t)\right)(2q+1)\right)}\Big]
 \end{array}
\end{equation}
is a family of non-negative functions indexed by $q$.
Then the Feynman-Kac formula implies that the solution can be written as
the conditional expectation:
$$
w^q(\nu,t)=E_t\left[\int_t^Tg^q(\nu_r,r)dr \right].
$$
Since $g^q(\nu_r,r)$ is nonnegative
and $(\delta_t^a,\delta_t^b)$ is supposed to be bounded from below,
  there exists some positive constant $c$  such that
$0\le g^q(\nu,t) \le c$. Then
$$
\tilde{J}(q,v,t) \le J(q,\nu,t) \le c(T-t)+\tilde{J}(q,v,t)
$$
and
$$
\left\{
\begin{array}{lll}
\underbrace{\delta_t^{a,*}}_{exact}-\underbrace{\hat{\delta}_t^{a,*}}_{approx}&=&w^q-w^{q-1} = E_t \left[ \int_t^T(g^q-g^{q-1})(\nu_r,r)dr \right]\\
%\end{array}
%$$
%$$
%\begin{array}{lll}
\underbrace{\delta_t^{b,*}}_{exact}-\underbrace{\hat{\delta}_t^{b,*}}_{approx}&=&w^q-w^{q+1} = E_t \left[ \int_t^T(g^q-g^{q+1})(\nu_r,r)dr \right].\\
\end{array}
\right.
$$
Note that
$$
g^q(\nu,t)=\frac{A}{k}(e^{-k\delta_t^{a,*}}+e^{-k\delta_t^{b,*}})
$$
where $\delta_t^{a,*}$ and $ \delta_t^{b,*}$ are relatively small
and $\delta_t^{a,*}+\delta_t^{b,*}$ is independent of $q$. 
Thus, we have
$$
g^q(\nu,t)\approx\frac{A}{k}\left(2-k(\delta_t^{a,*}+\delta_t^{b,*})+\frac{k^2}{2}\left((\delta_t^{a,*})^2+(\delta_t^{b,*})^2\right)\right).
$$
The differences between the exact and the approximate ask and bid quotes, $|\delta_t^{a,*}-\hat{\delta}_t^{a,*}|$ and $|\delta_t^{b,*}-\hat{\delta}_t^{b,*}|$, 
can also be  very small.

\subsection*{C.}
With the intensity function 
$$
\lambda^a(\delta)=\lambda^b(\delta)=A\exp(-k\delta)
$$
and similar method to the proof of Theorem 1,  we have 
\begin{equation}\label{e1}
V(q,\nu,t)=V^{(0)}(\nu,t)+V^{(1)}(\nu,t)q+V^{(2)}(\nu,t)q^2+\cdots+
\end{equation}
\begin{equation}\label{e2}
\frac{A}{k}\left(e^{-k\delta_t^{*,a}}+e^{-k\delta_t^{*,b}}\right)\approx
\frac{A}{k}\left(2-k(\delta_t^{*,a}+\delta_t^{*,b})\right).
\end{equation}
Thus,
\begin{equation}\label{e3}
\delta_t^{*,a}+\delta_t^{*,b}\approx\frac{2}{k}+ \gamma \eta^2(q^2+1)-2V^{(2)}(\nu,t).
\end{equation}
Different from the first model, the ask-bid spread is related to the inventory variable $q$. 
If we substitute Eq.s (\ref{e1}), (\ref{e2}) and (\ref{e3}) into 
Eq. (\ref{PDEM}) and group the coefficients of the term $q$, we obtain
$$
\left\{
\begin{array}{lll}
\displaystyle V_t^{(1)}+\theta(\alpha-\nu)V_\nu^{(1)}+\frac{1}{2}\xi^2\nu V_{\nu\nu}^{(1)}=0\\
V^{(1)}(q,\nu,T)=0\\
\end{array}
\right.
$$
whose solution is $V^{(1)}(\nu,t)=0$. Grouping the coefficients of the term $q^2$ yields
$$
\left\{
\begin{array}{lll}
\displaystyle V_t^{(2)}+\theta(\alpha-\nu)V_\nu^{(2)}+\frac{1}{2}\xi^2\nu V_{\nu\nu}^{(2)}
-\frac{\gamma}{2}\nu-A\gamma \eta^2=0\\
V^{(2)}(q,\nu,T)=0.\\
\end{array}
\right.
$$
According to the Feynman-Kac formula,  
$$
\begin{array}{lll}
\displaystyle V^{(2)}(\nu,t)&=&\displaystyle -\frac{\gamma}{2}E_t\int_t^T\left(\nu_u
+2A \eta^2\right)du\\
&=&\displaystyle -\frac{\gamma}{2}\int_t^T\left(E_t[\nu_u]
+2A \eta^2\right)du\\
&=&\displaystyle-\frac{\gamma}{2}\left(\frac{1}{\theta}(\nu-\alpha)\left[
1-e^{-\theta(T-t)}\right]+\alpha(T-t)+2A\eta^2(T-t)\right).\\
\end{array}
$$
Then, we can get the approximative optimal quotes
$$
\left\{
\begin{array}{lll}
\displaystyle\hat{ \delta}_t^{*,a}=\frac{1}{k} - \beta+\frac{\gamma \eta^2}{2}(q-1)^2-\frac{\gamma}{2}\left(\frac{1}{\theta}(\nu-\alpha)\left[
1-e^{-\theta(T-t)}\right]+\alpha(T-t)+2A\eta^2(T-t)\right)(2q-1)\\
\displaystyle \hat{\delta}_t^{*,b}=\frac{1}{k} + \beta +\frac{\gamma \eta^2}{2}(q+1)^2+\frac{\gamma}{2}\left(\frac{1}{\theta}(\nu-\alpha)\left[
1-e^{-\theta(T-t)}\right]+\alpha(T-t)+2A\eta^2(T-t)\right)(2q+1).\\
\end{array}
\right.
$$

\subsection*{D1. Proof of Proposition 3.}

Under the actual probability measure $\mathbf{P}$, we have 
$$
\left\{
\begin{array}{lll}
dS_t=\sqrt{\nu_t}dW_t\\
d\nu_t=\theta(\alpha-\nu_t)dt+\xi\sqrt{\nu_t}\Big(\rho dW_t+\sqrt{1-\rho^2}d\tilde{W_t}\Big)\\
dW_t\cdot d\tilde{W_t}=0.
\end{array}
\right.
$$
In our setting, the interest rate $r=0$. Thus, under the risk-neutral measure $\mathbf{Q}$, the price of the stock price risk equals zero, i.e. 
$$ 
dW_t^{\mathbf{Q}}=dW_t\quad {\it and}\quad d\tilde{W_t}^{\mathbf{Q}}=d\tilde{W_t}+\eta^{\nu}dt
$$
where, $\eta^{\nu}$ is the price of volatility risk not related to stock returns. Thus,
$$
\left\{
\begin{array}{lll}
dS_t=\sqrt{\nu_t}dW_t^{\mathbf{Q}}\\
d\nu_t=\Big[\theta(\alpha-\nu_t)-\xi\sqrt{\nu_t}\sqrt{1-\rho^2}\eta^{\nu}\Big]dt+\xi\sqrt{\nu_t}\Big(\rho dW_t^{\mathbf{Q}}+\sqrt{1-\rho^2}d\tilde{W_t}^{\mathbf{Q}}\Big)\\
dW_t^{\mathbf{Q}}\cdot d\tilde{W_t}^{\mathbf{Q}}=0.
\end{array}
\right.
$$
The price of a European call option can then be computed as follows:
$$
C(s,\nu,t)=E_t^{\mathbf{Q}}[(S_T-K)^+].
$$
Under the risk-neutral measure, all the discounted asset prices are martingale, 
the drift term of
$$
dC(s,\nu,t)=\Big[C_t+\frac{1}{2}\nu_tC_{ss}+\xi\rho\nu_tC_{s\nu}+\frac{1}{2}\xi^2\nu_tC_{\nu\nu}+\big[\theta(\alpha-\nu_t)-\xi\sqrt{\nu_t}\sqrt{1-\rho^2}\eta^{\nu}\big]C_{\nu}\Big] dt + \ldots +.
$$
is identical to zero. Consequently,
$$
C_t+\frac{1}{2}\nu_tC_{ss}+\xi\rho\nu_tC_{s\nu}+\frac{1}{2}\xi^2\nu_tC_{\nu\nu}+\big[\theta(\alpha-\nu_t)-\xi\sqrt{\nu_t}\sqrt{1-\rho^2}\eta^{\nu}\big]C_{\nu}=0
$$
with terminal condition $C(s,\nu,T)=(s-K)^+$.

\subsection*{D2. Proof of Proposition 4.}

Suppose we have $N(N>2)$ assets with two-factor structure in their returns:
$$
R_t^i=a_i+b_i^1Y_1(t)+b_i^2Y_2(t)
$$
and the risk-free interest rate is $r$. At time $t$, consider any portfolio with fraction $\theta_i$ in each asset $i$ that has zero exposure to both factors:
$$
\left\{
\begin{array}{lll}
b_1^1\theta_1+b_2^1\theta_2+\cdots +b_N^1\theta_N&=&0\\
b_1^2\theta_1+b_2^2\theta_2+\cdots +b_N^2\theta_N&=&0.
\end{array}
\right.
$$
According to the standard APT (Arbitrage Pricing Theorem) \cite{APT}, 
this portfolio must have zero expected excess return to avoid arbitrage
$$
\theta_1(E_t[R_t^1]-r)+\theta_2(E_t[R_t^2]-r)+\ldots+\theta_N(E_t[R_t^N]-r)=0.
$$
Restating this in vector form, any vector orthogonal to $(b_1^1,b_2^1,\ldots,b_N^1)$ and $(b_1^2,b_2^2,\ldots,b_N^2)$ 
must be orthogonal to $(E_t[R_t^1]-r, E_t[R_t^2]-r,\ldots, E_t[R_t^N]-r)$, 
which means that the third vector is spanned by the first two: there exist constants (prices of risk) $(\lambda_t^1,\lambda_t^2)$ such that
$$
E_t[R_t^i]-r=\lambda_t^1 b_i^1+\lambda_t^2 b_i^2, \quad i=1,2,\ldots,N.
$$
As for the option pricing in the Heston's Model, using Ito's lemma, option price, $C(s,\nu,t)$, satisfies
$$
dC(s,\nu,t)=A(t) dt+C_sdS_t+C_{\nu}d\nu_t
$$
where, $A(t)$ consists of  all the terms of  $dt$.
Compare the above to the APT argument, there exist  $\lambda_t^s$ and $\lambda_t^{\nu}$ such that(we work with price changes instead of returns)
\begin{equation}\label{APT}
E[dC(s,\nu,t)-rC(s,\nu,t)dt]=C_s\lambda_t^sdt+C_{\nu}\lambda_t^{\nu}dt.
\end{equation}
The APT pricing equation, applied to the stock, implies that 
$$
E[dS_t-rS_tdt]=-rS_tdt=\lambda_t^sdt.
$$
Thus, $\lambda_t^s=-rS_t$,  denoting the market price of risk of the stock.    Correspondingly,  $\lambda_t^{\nu}$ is the price of volatility risk, which determines the risk premium on any investment with exposure to $d\nu_t$.

Writing down the pricing equation (\ref{APT}) explicitly with the Ito's lemma yields,
$$
C_t+\frac{1}{2}\nu_tC_{ss}+\xi\rho\nu_tC_{sv}+\frac{1}{2}\xi^2\nu_tC_{\nu\nu}+\theta(\alpha-\nu_t)C_{\nu}-rC=-C_srS_t+C_{\nu}\lambda_t^{\nu}.
$$
As long as we assume that the price of volatility risk is of the form
$$
\lambda_t^{\nu}=\lambda^{\nu}(t,S_t,\nu_t)
$$
the assumed functional form for option prices is justified and we obtain an arbitrage-free option pricing model. 
Since market model under Heston's stochastic volatility assumption is incomplete, there exists more than one risk-neutral probability measure, 
so does the option price.  Letting $r=0$, we can get Eq. (\ref{C2}).

\subsection*{D3. Proof of Theorem 3.}

The order arrival terms are highly nonlinear and may depend on the inventory.
Directly applying the method in Avellaneda and Stoikov (2008) \cite{Avellaneda08} and using an asymptotic expansion to approximate $V(s,\nu,q_t^s,q_t^o,t)$ as a quadratic polynomial in the inventory variables:
$q_t^s$ and $q_t^o$. We have
\begin{equation}\label{J1}
\begin{array}{lll}
V(s,\nu,q_t^s,q_t^o,t)&=&a(s,\nu,t)+b_1(s,\nu,t)q_t^s+b_2(s,\nu,t)q_t^o\\
&&+c_1(s,\nu,t)(q_t^s)^2+c_2(s,\nu,t)q_t^sq_t^o+c_3(s,\nu,t)(q_t^o)^2+\ldots+.
\end{array}
\end{equation} 
One can derive an approximate solution,
$\tilde{V}(s,\nu,q_t^s,q_t^o,t)$,
for the unknown function $V(s,\nu,q_t^s,q_t^o,t)$.
For the P.D.E. in Eq. (\ref{PDEO}),
taking the first order approximation of the arrival term
$$
\begin{array}{lll}
\frac{A}{k}\left(e^{-k\delta_{t,*}^{a,s}}+e^{-k\delta_{t,*}^{b,s}}
+e^{-k\delta_{t,*}^{a,o}}+e^{-k\delta_{t,*}^{b,o}}\right)&=&\frac{A}{k}\left(4-k(\delta_{t,*}^{a,s}+\delta_{t,*}^{b,s}
+\delta_{t,*}^{a,o}+\delta_{t,*}^{b,o})+\cdots+ \right)
\end{array}
$$
and notice that
$$
\begin{array}{lll}
\delta_{t,*}^{a,s}+\delta_{t,*}^{b,s}+\delta_{t,*}^{a,o}+\delta_{t,*}^{b,o}&=&\frac{4}{k}+4V(s,\nu,q_t^s,q_t^o,t)-
V(s,\nu,q_t^s-1,q_t^o,t)-V(s,\nu,q_t^s+1,q_t^o,t)\\
&&-V(s,\nu,q_t^s,q_t^o-1,t)-V(s,\nu,q_t^s,q_t^o+1,t)\\
&\approx&\frac{4}{k}-2\Big(c_1(s,\nu,t)+c_3(s,\nu,t)\Big)\\
\end{array}
$$
which is independent of inventories: $q_t^s$ and $q_t^o$, so is the linear term in the arrival term. 
Grouping the terms of order $q_t^s$, we obtain
\begin{equation}\label{ORq1}
\left\{
\begin{array}{lll}
\displaystyle (b_1)_t+\theta(\alpha-\nu_t)(b_1)_\nu+\frac{1}{2}\nu_t(b_1)_{ss}+\rho\xi \nu_t(b_1)_{s\nu}+\frac{1}{2}\xi^2 \nu_t(b_1)_{\nu\nu}=0\\
b_1(s,\nu,T)=0.\\
\end{array}
\right.
\end{equation}
 By the Feynman-Kac formula  $ b_1(s,\nu,t)=0 $.
Grouping terms of order $q_t^o$, 
\begin{equation}\label{ORq2}
\left\{
\begin{array}{lll}
\displaystyle (b_2)_t+\theta(\alpha-\nu_t)(b_2)_\nu+\frac{1}{2}\nu_t(b_2)_{ss}+\rho\xi \nu_t(b_2)_{s\nu}+\frac{1}{2}\xi^2 \nu_t (b_2)_{\nu\nu}=0\\
b_2(s,\nu,T)=0.
\end{array}
\right.
\end{equation}
Therefore, $b_2(s,\nu,t)=0$.
Grouping terms of order $(q_t^s)^2$,  
\begin{equation}\label{ORq3}
\left\{
\begin{array}{lll}
\displaystyle (c_1)_t+\theta(\alpha-\nu_t)(c_1)_\nu+\frac{1}{2}\nu_t(c_1)_{ss}+\rho\xi \nu_t(c_1)_{s\nu}+\frac{1}{2}\xi^2 \nu_t (c_1)_{\nu\nu}
-\frac{\gamma}{2}\nu_t=0\\
c_1(s,\nu,T)=0\\
\end{array}
\right.
\end{equation}
whose solution is again obtained from the Feynman-Kac formula as follows:
$$
\begin{array}{lll}
c_1(s,\nu,t)&=&-\frac{\gamma}{2} E_t \left[\int_t^T\nu_udu \right]\\
&=&-\frac{\gamma}{2}\int_t^TE_t[\nu_u]du\\
&=&-\displaystyle\frac{\gamma}{2\theta}(\nu-\alpha)
\left[1-e^{-\theta(T-t)}\right]-\frac{\gamma}{2}\alpha(T-t).
\end{array}
$$
Grouping terms of $q_t^sq_t^o$, we obtain
\begin{equation}\label{ORq4}
\left\{
\begin{array}{lll}
\displaystyle (c_2)_t+\theta(\alpha-\nu_t)(c_2)_\nu+\frac{1}{2}\nu_t(c_2)_{ss}+\rho\xi \nu_t(c_2)_{s\nu}+\frac{1}{2}\xi^2 \nu_t (c_2)_{\nu\nu}
-\gamma\nu_t\Big(\Delta_t +\rho\xi  C_\nu\Big)=0\\
c_2(s,\nu,T)=0.
\end{array}
\right.
\end{equation}
Then applying the Feynman-Kac formula yields
$$
c_2(s,\nu,t)=-\gamma E_t \left[ \int_t^T\nu_u\Big(\Delta_u +\rho\xi  C_\nu(S_u,\nu_u,u)\Big)du\right] = H_1(s,\nu,t).
$$
Grouping terms of order $(q_t^o)^2$, we obtain
\begin{equation}\label{ORq5}
\left\{
\begin{array}{lll}
\displaystyle (c_3)_t+\theta(\alpha-\nu_t)(c_3)_\nu+\frac{1}{2}\nu_t(c_3)_{ss}+\rho\xi \nu_t(c_3)_{s\nu}+\frac{1}{2}\xi^2 \nu_t (c_3)_{\nu\nu}
-\frac{\gamma}{2}\nu_t\Big(\Delta_t^2
+2\rho\xi\Delta_t C_\nu
+\xi^2C_\nu^2\Big)=0\\
c_3(s,\nu,T)=0.
\end{array}
\right.
\end{equation}
Similarly using the Feynman-Kac formula
$$
c_3(s,\nu,t)=-\frac{\gamma}{2} E_t \left[\int_t^T\nu_u\Big(\Delta_u^2
+2\rho\xi\Delta_u C_\nu(S_u,\nu_u,u)
+\xi^2 C_\nu^2(S_u,\nu_u,u)\Big)du \right] = H_2(s,\nu,t).
$$
Furthermore, the initial condition ensures that $a(s,\nu,t)=0$.  
Suppose that
$$
V(s,\nu,q_t^s,q_t^o,t)= w(s,\nu,q_t^s,q_t^o,t)+ \tilde{V}(s,\nu,q_t^s,q_t^o,t)
$$
then
the unknown function $w(s,\nu,q_t^s,q_t^o,t)$ satisfies
$$
\left\{
\begin{array}{lll}
\displaystyle w_t+\theta(\alpha-\nu_t)w_\nu+\frac{1}{2}\nu_tw_{ss}+\xi\rho \nu_tw_{s\nu}+\frac{1}{2}\xi^2 \nu_t w_{\nu\nu} +\frac{A}{k}\left(e^{-k\delta_{t,*}^{a,s}}+e^{-k\delta_{t,*}^{b,s}}+e^{-k\delta_{t,*}^{a,o}}+e^{-k\delta_{t,*}^{b,o}}\right)=0\\
w(s,\nu,q_t^s,q_T^o,T)=0.
\end{array}
\right.
$$
The Fernman-Kac formula then gives
$$
w(s,\nu,q_t^s,q_t^o,t)
=\frac{A}{k}E_t \left[\int_t^T\left(e^{-k\delta_{u,*}^{a,s}}+e^{-k\delta_{u,*}^{b,s}}+e^{-k\delta_{u,*}^{a,o}}+e^{-k\delta_{u,*}^{b,o}}\right)du \right] \ge 0.
$$

\subsection*{D4. Proof of Theorem 4.}

The optimal quotes are obtained through a two-step procedure.
First, solve the P.D.E. in Eq. (\ref{PDEOO}) in order to obtain the unknown function $V(s,\nu,q_t^o,t)$. Second, substitute $V(s,\nu,q_t^o,t)$ into Eq. (\ref{controlOO}) and obtain the optimal premiums $\delta_{t,*}^{a,o}$ and $\delta_{t,*}^{b,o}$. 
Our first task is to solve the unknown function $V(s,\nu,q_t^o,t)$ using the exponential arrival rates.

We directly apply Avellaneda and Stoikov's method \cite{Avellaneda08} and using the fact that $V(s,\nu,q_t^o,t)$
is an approximate quadratic polynomial in the inventory variable,
$$
V(s,\nu,q_t^o,t)=V^{(0)}(s,\nu,t)+V^{(1)}(s,\nu,t)q_t^o+V^{(2)}(s,\nu,t)(q_t^o)^2+\ldots+.
$$
We can first derive an approximate solution
$\tilde{V}(s,\nu,q_t^o,t)$.
For the P.D.E. (\ref{PDEOO}),
taking the first order approximation of the arrival term
$$
\frac{A}{k}\left(e^{-k\delta_{t,*}^{a,o}}+e^{-k\delta_{t,*}^{b,o}}\right)=\frac{A}{k}\left(2-k(\delta_{t,*}^{a,o}+\delta_{t,*}^{b,o})+\cdots + \right)
$$
and notice that
$$
\begin{array}{lll}
\delta_{t,*}^{a,o}+\delta_{t,*}^{b,o}&=&\frac{2}{k}+2V(s,\nu,q_t^o,t)-
V(s,\nu,q_t^o-1,t)-V(s,\nu,q_t^o+1,t)\\
&\approx&\frac{2}{k}-2V^{(2)}(s,\nu,t)\\
\end{array}
$$
which is independent of inventory, so is the linear term in the arrival term.  Grouping terms of order $q_t^o$, we obtain
\begin{equation}\label{ORq}
\left\{
\begin{array}{lll}
\displaystyle V_t^{(1)}+\theta(\alpha-\nu_t)V_\nu^{(1)}+\frac{1}{2}\nu_tV_{ss}^{(1)}+\xi\rho \nu_tV_{s\nu}^{(1)}+\frac{1}{2}\xi^2 \nu_t V_{\nu\nu}^{(1)}=0\\
V^{(1)}(s,\nu,T)=0.
\end{array}
\right.
\end{equation}
By the Feynman-Kac formula, $V^{(1)}(s,\nu,t)=0$. 
Grouping terms of order $(q_t^o)^2$, we obtain
\begin{equation}\label{ORqq}
\left\{
\begin{array}{lll}
\displaystyle V_t^{(2)}+\theta(\alpha-\nu_t)V_\nu^{(2)}+\frac{1}{2}\nu_tV_{ss}^{(2)}+\xi\rho \nu_tV_{s\nu}^{(2)}+\frac{1}{2}\xi^2 \nu_t V_{\nu\nu}^{(2)}
-\frac{\gamma}{2}
\nu_t\xi^2C_v^2=0\\
V^{(2)}(s,\nu,T)=0.\\
\end{array}
\right.
\end{equation}
The Feynman-Kac formula implies
$$
V^{(2)}(s,\nu,t)
=-\frac{\gamma}{2}\xi^2E_t\left[\int_t^T\nu_uC_\nu^2(S_u,\nu_u,u)du=M(s,\nu,t)\right].
$$
Moreover, the initial condition ensures that $V^{(0)}(s,\nu,t)=0$.
Suppose that
$$
V( s,\nu,q_t^o,t)= w(s,\nu,q_t^o,t)+ \tilde{V}(s,\nu,q_t^o,t)
$$
then
the unknown function $w(s,\nu,q_t^o,t)$ satisfies
$$
\left\{
\begin{array}{lll}
\displaystyle w_t+\theta(\alpha-\nu_t)w_\nu+\frac{1}{2}\nu_tw_{ss}+\xi\rho \nu_tw_{s \nu}+\frac{1}{2}\xi^2 \nu_t w_{\nu\nu} +\frac{A}{k}\left(e^{-k\delta_{t,*}^{a,o}}+e^{-k\delta_{t,*}^{b,o}}\right)=0\\
w(s,\nu,q_T^o,T)=0.\\
\end{array}
\right.
$$
The Fernman-Kac formula then suggest that
$$
w(s,\nu,q_t^o,t)=\frac{A}{k}E_t
\left[\int_t^T\left(e^{-k\delta_{u,*}^{a,o}}+e^{-k\delta_{u,*}^{b,o}}\right)du\right] \ge 0.
$$

\subsection*{Tables and Figures}
\begin{table}[!hbtp]\label{table 1}
\caption{1000 simulations without market impact with initial inventory $q_0=0$ }
\centering
\begin{tabular}{lccccc}
\hline
Strategy& Average Spread &Profit & Std (Profit) & $q_T$ & Std ($q_T$)\\
\hline
Inventory &1.53&64.68&6.70&-0.80&3.00\\
Symmetric &1.53&68.39&14.36&0.06&8.51\\
\hline
\end{tabular}
\end{table}

\begin{table}[!hbtp]\label{t2}
\caption{1000 simulations with market impact with initial inventory $q_0=0$ }
\centering
\begin{tabular}{lccccc}
\hline
Strategy& Average Spread &Profit & Std (Profit) & $q_T$ & Std ($q_T$)\\
\hline
Inventory &1.65& 62.56& 6.27&-0.55&2.65\\
Symmetric &1.65&67.63&12.75&-0.03&7.89\\
\hline
\end{tabular}
\end{table}

\begin{figure}[H]\label{fig1}
\centering
\includegraphics[width=6.0in,height=3.0in]{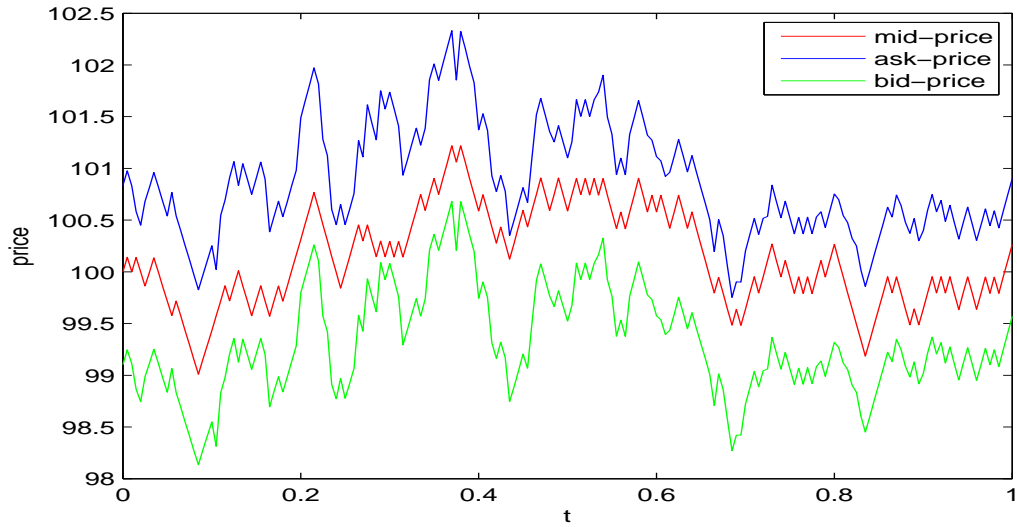}
\caption{The mid-price and the optimal bid-ask quotes without market impact}
\end{figure}

\begin{figure}[H]\label{fig2}
\centering
\subfigure[The cumulated revenues ]{
\includegraphics[width=3.0in,height=3.0in]{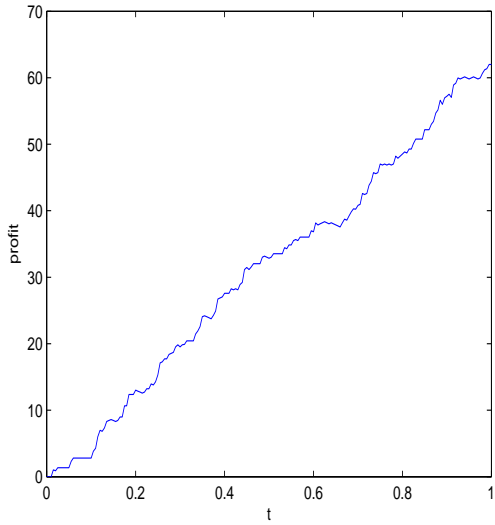} 
}
\subfigure[The ask-bid spread]{ 
\includegraphics[width=3.0in,height=3.0in]{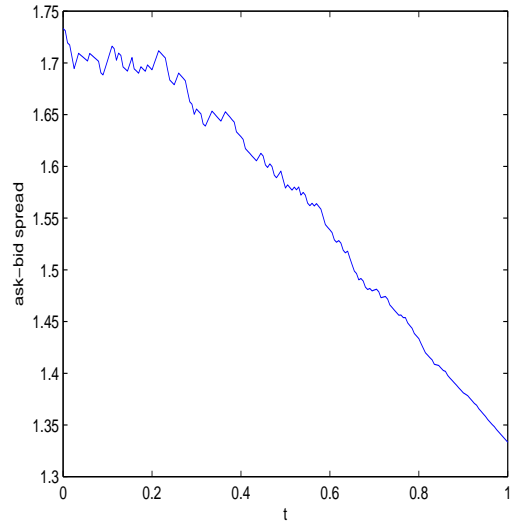} 
}
\caption{The trend charts of the cumulated revenues and the ask-bid spread  without market impact}
\end{figure}

\begin{figure}[H]\label{fig3}
\centering
\subfigure[Trading curve with  $q_0=6, \beta=0.03$.]{
\includegraphics[width=3.0in,height=3 in]{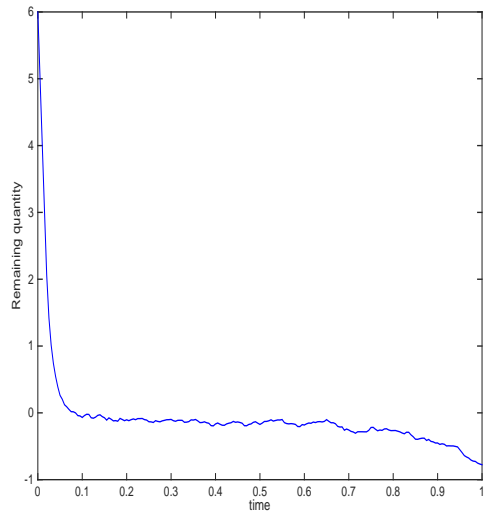} 
}
\subfigure[Trading curve with   $q_0=-6, \beta=0.03$.]{ 
\includegraphics[width=3.0in,height=3 in]{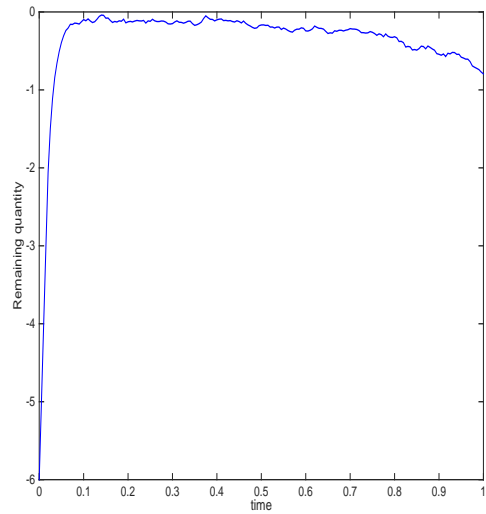} 
}
\caption{Trading curves under the setting without market impact  }
\end{figure}

\begin{figure}[H]\label{fig4}
\centering
\includegraphics[width=6.0 in,height=3.0in]{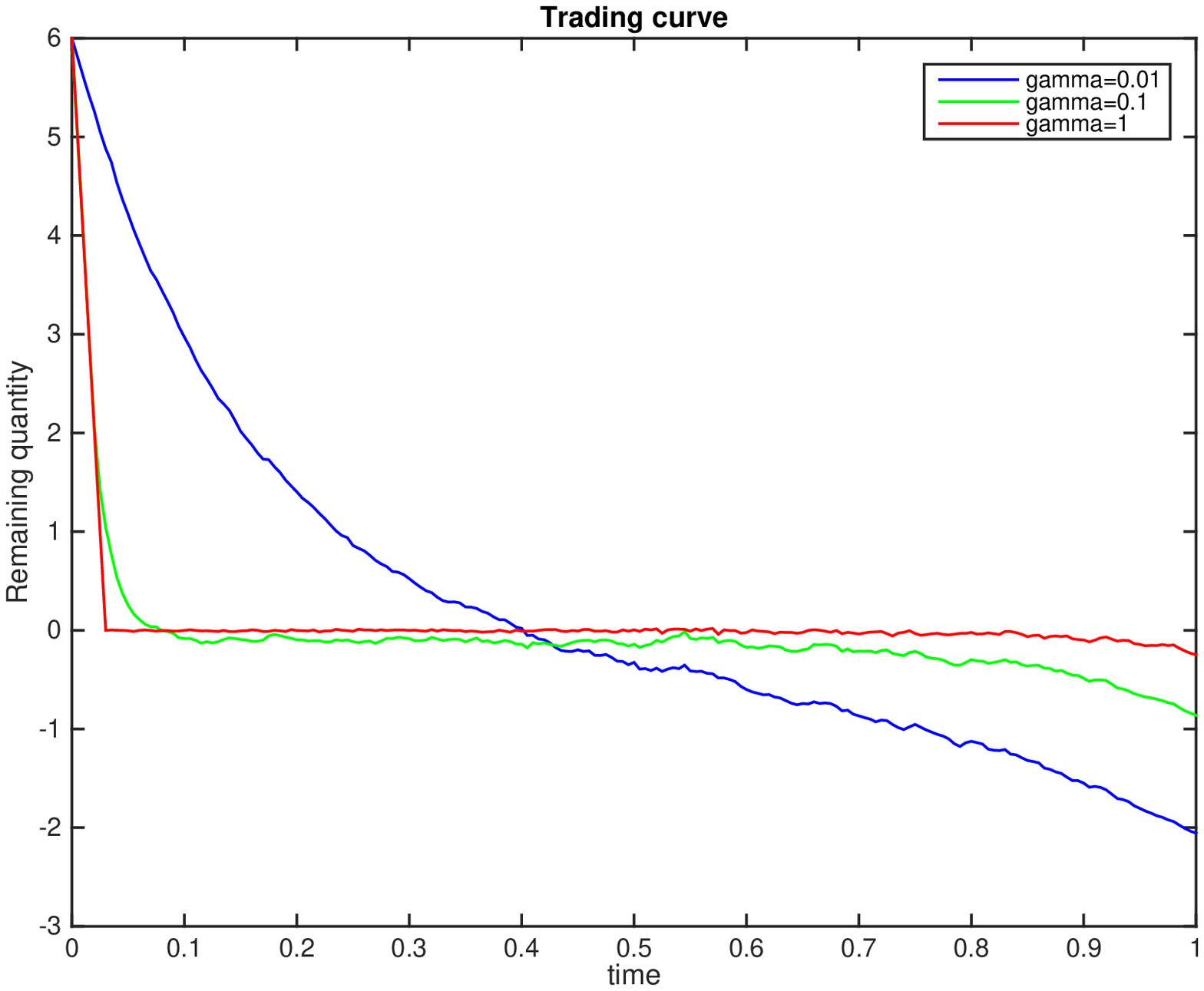}
\caption{Trading curves without market impact with initial inventory $q_0=6, \beta=0.03$, 
corresponding to 
Case (1) $\gamma=0.01$,
Case (2) $\gamma=0.10$, 
Case (3) $\gamma=1.00$.}
\end{figure}

 \begin{figure}[H]\label{fig5}
\centering
\includegraphics[width=6.0in,height=3.0in]{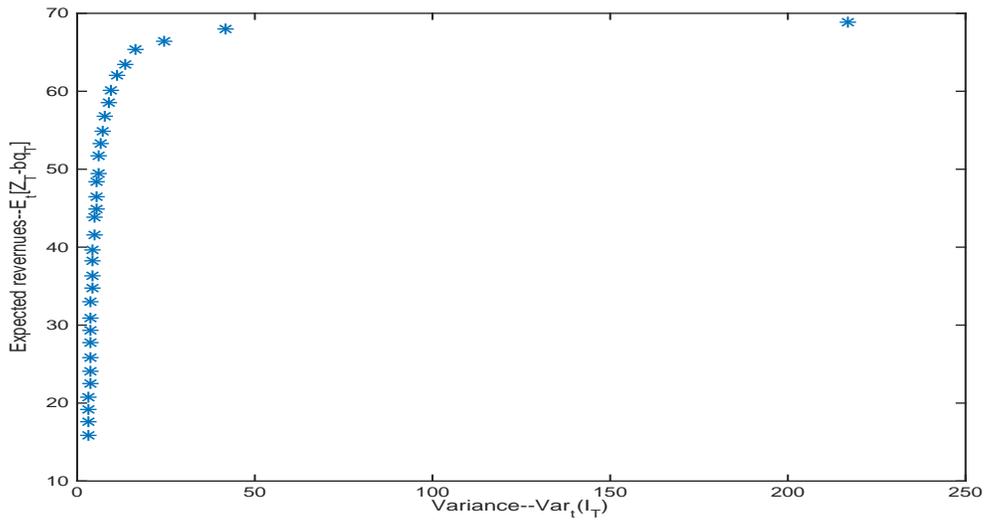}
\caption{Efficient frontier under the setting of without market impact : tradeoff between the cumulated revenues and inventory variance .}
\end{figure}

\begin{figure}[H]\label{fig6}
\centering
\includegraphics[width=6.0in,height=3.0in]{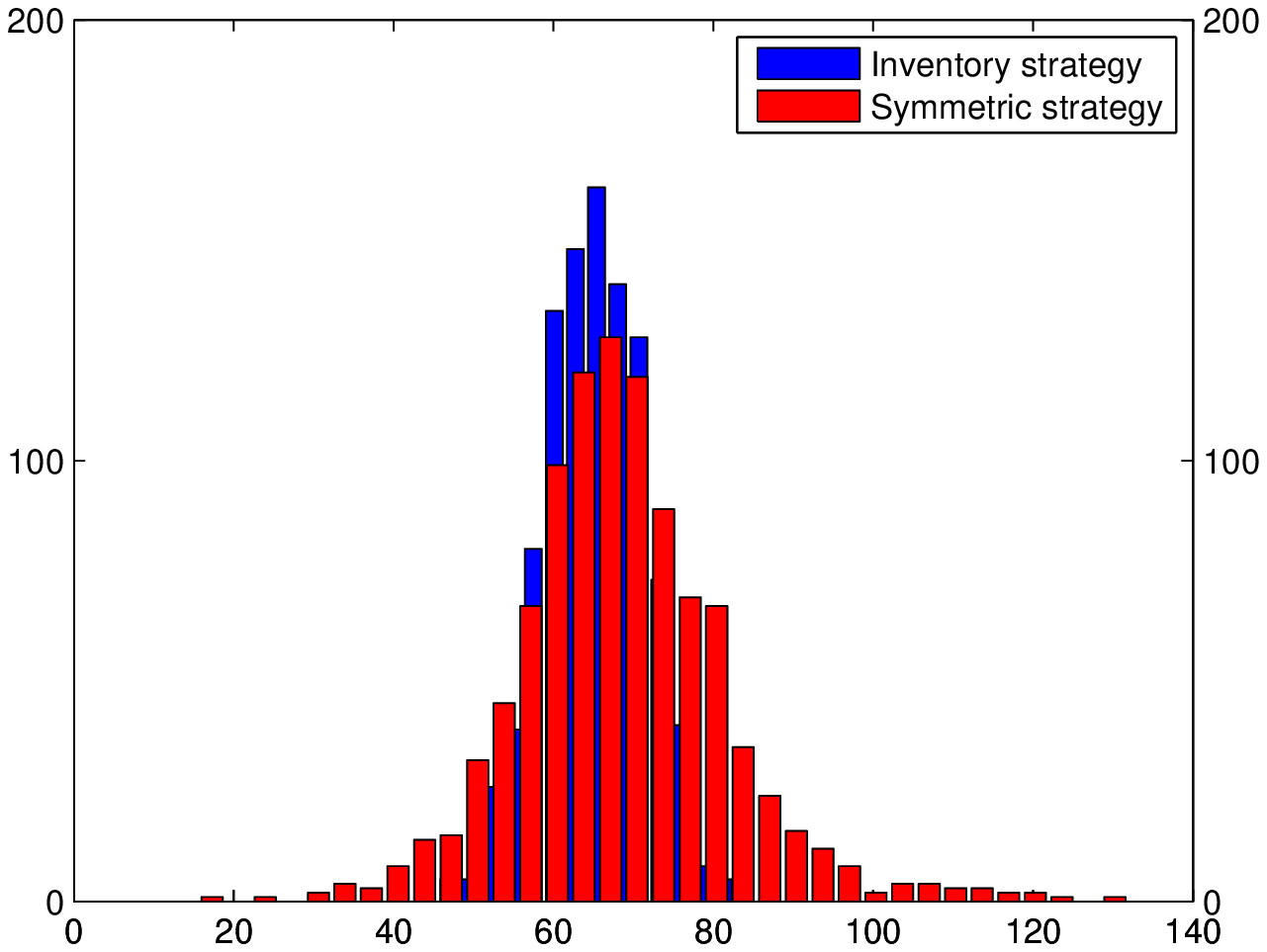}
\caption{A Comparison of two different strategies under the setting without market impact.}
\end{figure}

\begin{figure}[H]\label{figb7}
\centering
\includegraphics[width=6.0in,height=3.0in]{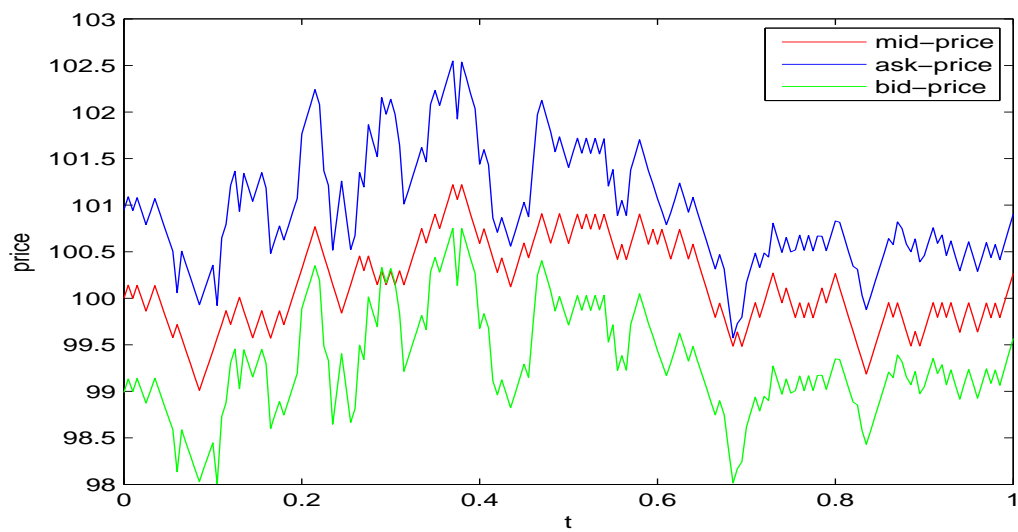}
\caption{The performance of mid-price and the optimal bid-ask quotes in the model taking the effect of market impact into consideration}
\end{figure}

\begin{figure}[H]\label{figb8}
\centering
\subfigure[The cumulated revenues]{
\includegraphics[width=3.0in,height=3. in]{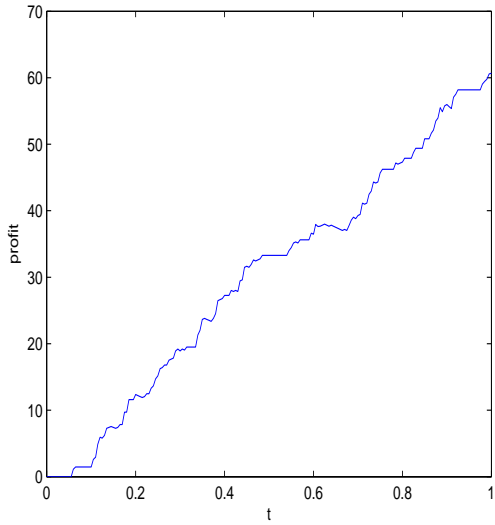} 
}
\subfigure[The ask-bid spread]{ 
\includegraphics[width=3.0in,height=3. in]{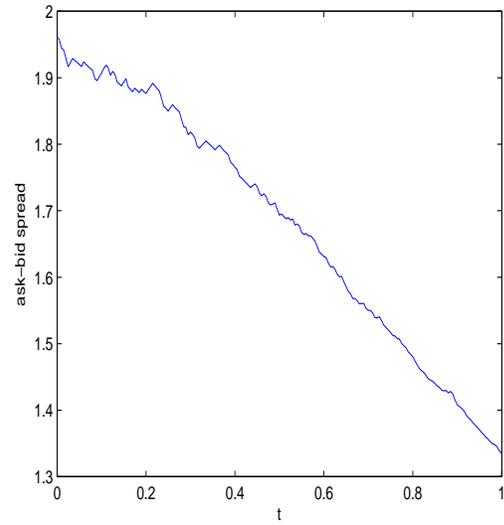} 
}
\caption{The Trend charts of the cumulated revenues and the ask-bid spread in a model with market impact. }
\end{figure}

\begin{figure}[H]\label{fig9}
\centering
\subfigure[Trading curve with  $q_0=6, \beta=0.03$]
{\includegraphics[width=3.0in,height=3. in]{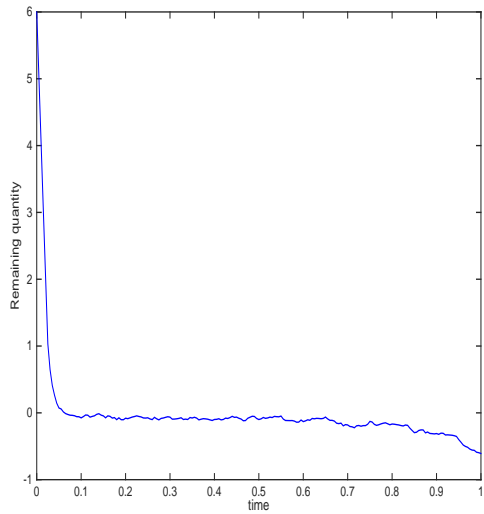} }
\subfigure[Trading curve with  $q_0=-6, \beta=0.03$]{ 
\includegraphics[width=3.0in,height=3. in]{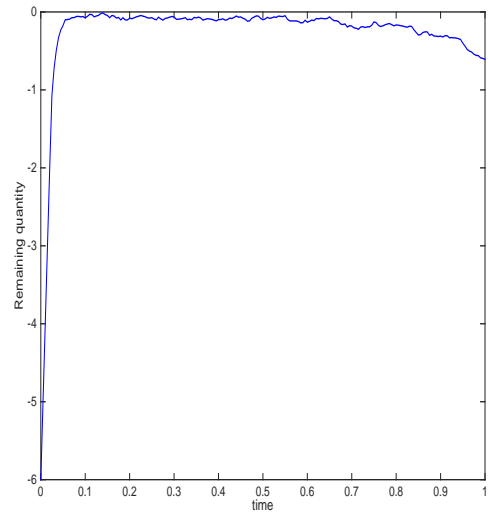} 
}
\caption{Trading curves under the setting taking the effect of  market impact into consideration. }
\end{figure}

\begin{figure}[H]\label{fig10}
\centering
\includegraphics[width=6.0in,height=3.0in]{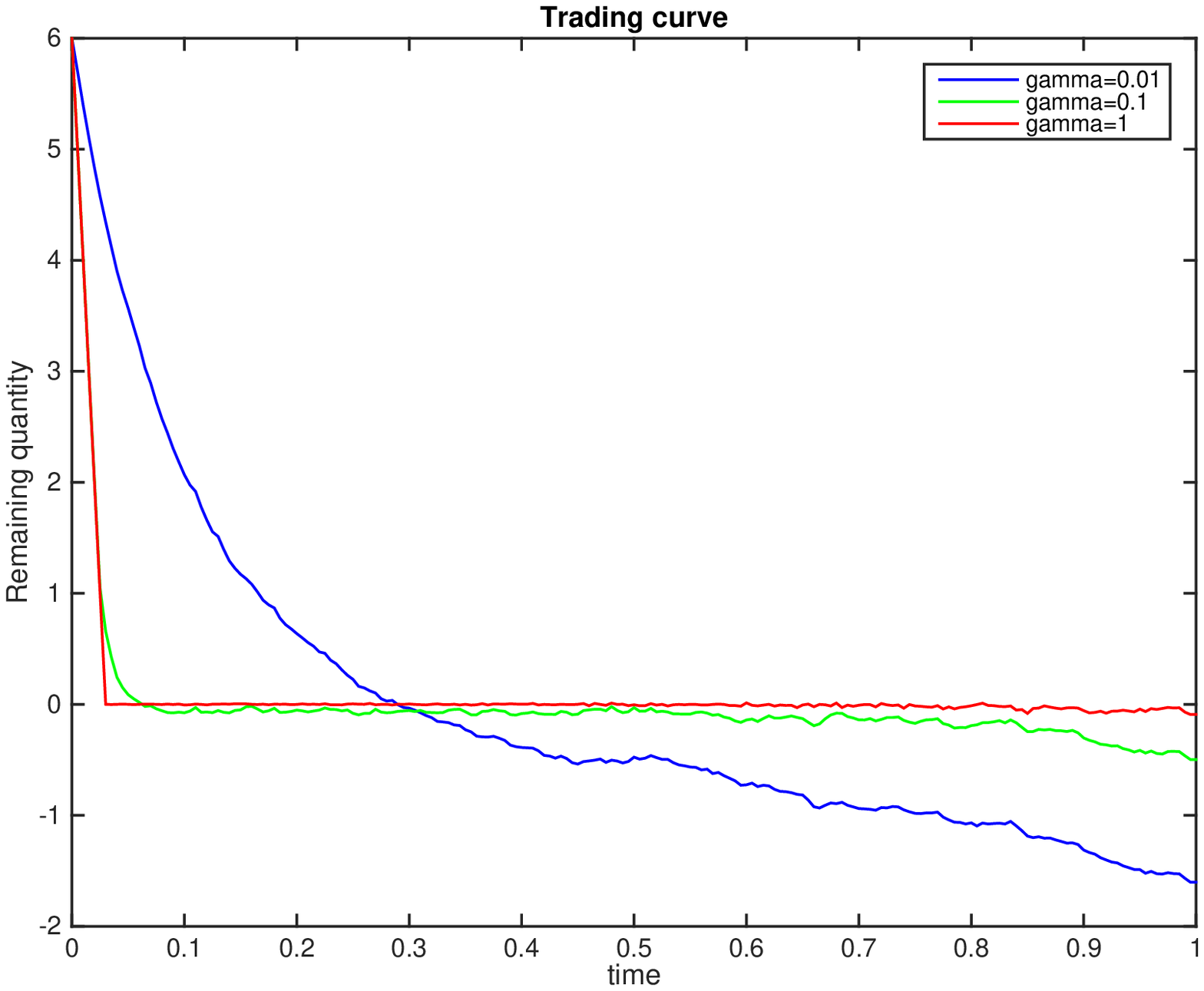}
\caption{Trading curves in a market with market impact, with initial inventory $q_0=6, \beta=0.03$, corresponding to Case (1) $\gamma=0.01$, Case (2) $\gamma=0.10$, Case (3) $\gamma=1.00$.}
\end{figure}

\begin{figure}[H]\label{fig11}
\centering
\includegraphics[width=6.0in,height=3.0in]{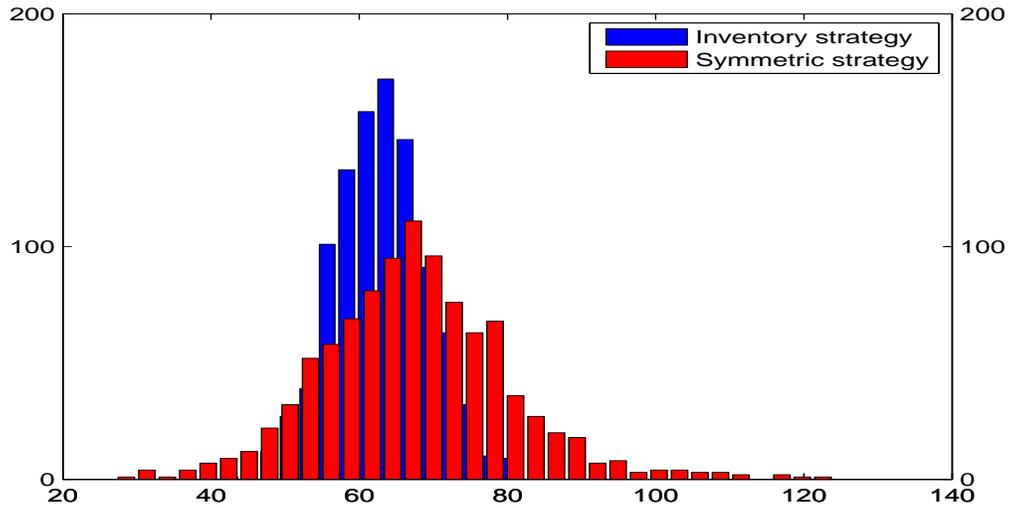}
\caption{A Comparison of two different strategies under the setting with market impact.}
\end{figure}

\section*{Acknowledgements}

This research work was supported by Research Grants Council of Hong Kong under Grant Number 17301214 and HKU CERG Grants and Hung Hing Ying Physical Research Grant.

\newpage

\end{spacing}
\end{document}